\documentclass{article}

\usepackage[paperwidth=7in, paperheight=10in, textwidth=5.25in, textheight=8.2in]{geometry}

\usepackage{amsmath}
\usepackage{amsfonts,amsthm,amssymb}
\usepackage{paralist}
\usepackage{nicefrac}
\usepackage{mdframed}
\usepackage{xspace}
\usepackage{algorithm}
\usepackage[noend]{algorithmic}
\usepackage{forloop}
\usepackage{natbib}


\usepackage{graphicx}
\usepackage{subfig} 
\usepackage{float}
\usepackage{microtype}
\usepackage{graphicx}
\usepackage{booktabs} 
\usepackage[hidelinks]{hyperref}
\usepackage{authblk}



\newcommand{\reals}{\ensuremath{\mathbb{R}}}


\newcommand{\prob}[1]{\mathrm{Pr} \left[ #1 \right] } 
\newcommand{\Exp}[2][]{\mathbb{E}_{#1} \left[ #2 \right] } 
\newcommand{\CondExp}[3][]{\mathbb{E}_{#1} \left[ #2  \mid #3 \right] }

\newtheorem{theorem}{Theorem}
\newtheorem{lemma}[theorem]{Lemma}

\newtheorem{corollary}[theorem]{Corollary}
\newtheorem{definition}{Definition}[section]
\newtheorem{proposition}[theorem]{Proposition}
\newtheorem{observation}[theorem]{Observation}
\newtheorem{claim}[theorem]{Claim}

\makeatletter
\newtheorem*{rep@theorem}{\rep@title}
\newcommand{\newreptheorem}[2]{%
	\newenvironment{rep#1}[1]{%
		\def\rep@title{#2 \ref{##1}}%
		\begin{rep@theorem}}%
		{\end{rep@theorem}}}
\makeatother

\newreptheorem{theorem}{Theorem}
\newreptheorem{reduction}{Reduction}

\newcommand{\defcal}[1]{\expandafter\newcommand\csname c#1\endcsname{{\mathcal{#1}}}}
\newcommand{\defbb}[1]{\expandafter\newcommand\csname b#1\endcsname{{\mathbb{#1}}}}
\newcounter{calBbCounter}
\forLoop{1}{26}{calBbCounter}{
	\edef\letter{\Alph{calBbCounter}}
	\expandafter\defcal\letter
	\expandafter\defbb\letter
}

\newcommand{\ground}{\Omega} 
\newcommand{\opt}{\mathit{OPT}}
\newcommand{\gammaapx}{\hat{\gamma}}
\DeclareMathOperator*{\argmax}{arg\,max}

\newcommand{\tr}[1]{Tr \left( #1 \right) }
\newcommand{\trinv}[1]{Tr \left( #1 \right)^{-1} }

\newcommand{\eps}{\varepsilon}
\newcommand{\ie}{{\it i.e.}}

\newcommand{\nnR}{{\bR_{\geq 0}}}
\newcommand{\property}[1]{{(P\ref{#1})}}

\newcommand{\dgreedy}{\textsc{Distorted Greedy}\xspace} 
\newcommand{\sdgreedy}{\textsc{Stochastic Distorted Greedy}\xspace}
\newcommand{\udgreedy}{\textsc{Unconstrained Distorted Greedy}\xspace}
\newcommand{\gsweep}{\textsc{$\gamma$-Sweep}\xspace}



\title{Submodular Maximization Beyond Non-negativity: \\ Guarantees, Fast Algorithms, and 
Applications}

\author[1]{Christopher Harshaw}
\author[2]{Moran Feldman}
\author[3]{\\Justin Ward}
\author[1]{Amin Karbasi}
\affil[1]{Yale University}
\affil[2]{Open University of Israel}
\affil[3]{ Queen Mary University of London}

\date{}

\begin{document}
\maketitle

\begin{abstract}	
It is generally believed that submodular functions{\textemdash}and the more general class of 
$\gamma$-weakly submodular functions{\textemdash}may only be optimized under the 
non-negativity 
assumption $f(S) \geq 0$.
In this paper, we show that once the function is expressed as the difference $f = g - c$, where 
 $g$ is monotone, non-negative, and $\gamma$-weakly submodular and $c$ is non-negative 
modular, then strong approximation guarantees may be obtained.
We present an algorithm for maximizing $g - c$ under a $k$-cardinality constraint which 
produces a random feasible set $S$ such that 
$\Exp{g(S) \! - \! c(S)} \! \geq \! (1  -  e^{-\gamma}   \! - \!  \epsilon) g(\opt) \! - \! c(\opt)$, 
whose running time is $O (\frac{n}{\epsilon} \log^2 \frac{1}{\epsilon})$, i.e., independent of $k$.
We extend these results to the unconstrained setting by describing an algorithm with the same 
approximation guarantees and faster $O(
\frac{n}{\epsilon} \log\frac{1}{\epsilon})$ runtime.
The main techniques underlying our algorithms are two-fold: the use of a surrogate objective which 
varies the relative importance between $g$ and $c$ throughout the algorithm, and a geometric 
sweep 
over possible $\gamma$ values.
Our algorithmic guarantees are complemented by a hardness result showing that no polynomial-time 
algorithm which accesses $g$ through a value oracle can do better.
We empirically demonstrate the success of our algorithms by applying them to experimental 
design on the Boston Housing dataset and directed vertex cover on the Email EU dataset.
\end{abstract}

\section{Introduction}

From summarization and recommendation to clustering and inference, many machine 
learning tasks are inherently discrete. 
Submodularity is an attractive property when designing discrete objective functions, as it encodes a 
natural diminishing returns condition and also comes with an extensive literature on optimization 
techniques. 
For example, submodular optimization techniques have been successfully applied in a wide variety of 
machine learning tasks, including 
sensor placement \citep{krause05near}, 
document summarization \citep{lin2011class},
speech subset selection \citep{wei2013speech}
influence maximization in social networks \citep{kempe03},
information gathering \citep{golovin11},
and
graph-cut based image segmentation \citep{Boykov2001, jegelka2011submodularity},
to name a few.
However, in instances where the objective function is not submodular, existing techniques for 
submodular optimization many perform arbitrarily poorly, motivating the need to study broader 
function classes.
While several notions of approximate submodularity have been studied, the class 
of $\gamma$-weakly submodular functions have (arguably) enjoyed the most practical success.
For example, $\gamma$-weakly submodular optimization techniques have been used in
feature selection \citep{das2011submodular, khanna2017scalable}, anytime linear prediction 
\citep{Hu2016EfficientFG}, interpretation of deep neural networks \citep{elenberg2017}, and high 
dimensional sparse 
regression problems \citep{elenberg2018}.

Here, we study the constrained maximization problem
\begin{equation} \label{eq:main_problem}
\max_{|S| \leq k} g(S) - c(S)
\enspace,
\end{equation}
where $g$ is a non-negative monotone $\gamma$-weakly submodular function and $c$ is a 
non-negative 
modular function. 
Problem~\eqref{eq:main_problem} has various interpretations which may extend the current 
submodular framework to apply to more tasks in machine learning. 
For instance, the modular cost $c$ may be 
added as a penalty to existing submodular maximization problems to encode a cost for each element.
Such a penalty term may play the role of a regularizer or soft constraint in a model.
When $g$ models the revenue of some collection of products $S$ and $c$ models the cost of 
each item, then \eqref{eq:main_problem} corresponds to maximizing profits.

While Problem~\eqref{eq:main_problem} has promising modeling potential, existing optimization 
techniques fail to provide nontrivial approximation guarantees for it. The main reason for that is that most existing 
techniques require the objective function to take only non-negative values, while $g(S) - c(S)$ may 
take both positive and negative values. Moreover, $g(S) - c(S)$ might be non-monotone, and thus, the definition of $\gamma$-weak submodularity does not even apply to it when $\gamma < 1$.

\paragraph{Our Contributions.} We provide several fast algorithms for solving 
Problem~\eqref{eq:main_problem} as well as a matching hardness result and experimental validation of 
our methods. In particular,
\begin{enumerate}
	\item \textbf{Algorithms.} In the case where $\gamma$ is known, we provide a deterministic algorithm 
	which uses $O(nk)$ function evaluations and  
	returns a set $S$ such that $g(S) - c(S) \geq (1 - e^{-\gamma}) g(\opt) - c(\opt)$. If $g$ is regarded 
	as revenue and $c$ as a cost, then this guarantee intuitively states that the algorithm will return a 
	solution whose total profit is at least as much as would be obtained by paying the same cost as the 
	optimal solution while gaining at least a fraction of $(1 - e^{-\gamma})$ out of the revenue of the last solution. 
	We also describe a randomized variant of our algorithm which uses $O(n \log \frac{1}{\epsilon})$ function evaluations
	and has a similar approximation guarantee in expectation, but with an $\epsilon$ additive loss in the 
	approximation factor.
	For the unconstrained setting (when $k=n$) we provide another randomized algorithm which achieves the 
	same approximation guarantee in expectation using only $O(n)$ function evaluations.
	When $\gamma$ is unknown, we give a meta-algorithm for guessing $\gamma$ that loses a 
	$\delta$ additive factor in the approximation ratio and increases the run time by a multiplicative 
	$O(\frac{1}{\delta} \log \frac{1}{\delta})$ factor.
	\item \textbf{Hardness of Approximation.} To complement our algorithms, we provide a matching 
	hardness result which shows that no algorithm which makes polynomially many queries in the value 
	oracle model may do better. To the best of our knowledge, this is the first hardness result of this kind for $\gamma$-weakly submodular 
	functions.
	\item \textbf{Experimental Evaluation.} We demonstrate the effectiveness of our algorithm on 
	experimental design on the Boston Housing dataset and directed vertex cover on the Email EU 
	dataset, both with costs. 
\end{enumerate}

\paragraph{Prior Work} 
The celebrated result of \citet{Nemhauser1978a} showed that the greedy 
algorithm achieves a $(1-1/e)$ approximation for maximizing a nonnegative monotone submodular 
function subject to a cardinality constraint.
\citet{das2011submodular} showed the more general result that the greedy algorithm achieves a 
$(1-e^{-\gamma})$ approximation when $g$ is $\gamma$-weakly submodular.
At the same time, an extensive line of research has lead to the development of algorithms to handle 
non-monotone submodular objectives and/or more complicated constraints (see, e.g.,~\cite{Buchbinder2016,Chekuri2014,Ene2016,Feldman2017,Lee2010,Sviridenko04}).
The $(1-1/e)$ approximation 
was shown to be optimal in the value oracle model \cite{Nemhauser1978}, but until this work, no 
stronger hardness result was known for constrained $\gamma$-weakly submodular maximization.
 The problem of maximizing $g + \ell$ for non-negative monotone submodular $g$ and an (arbitrary) 
 modular function $\ell$ under cardinality constraints 
was first considered in \cite{Sviridenko2017}, who gave a randomized polynomial time algorithm which 
outputs a set $S$ such that $g(S) + \ell(S) \geq (1 - 1/e) g(\opt) + \ell(\opt)$, where $\opt$ is the 
optimal set. 
This approximation was shown to be optimal in the value oracle model via a reduction from submodular 
maximization with bounded curvature. However, the algorithm of \citet{Sviridenko2017} is of mainly 
theoretical interest, as it requires continuous optimization of the multilinear extension and an 
expensive routine to guess the contribution of $\opt$ to the modular term, yielding it practically 
intractable. 
\citet{feldman2019} suggested the idea of using a surrogate objective that varies with time, and 
showed that this idea removes the need for the guessing step. However, the algorithm 
of~\cite{feldman2019} still requires expensive sampling as it is based on the multilinear extension.
Moreover, neither of these approaches can currently handle $\gamma$-weakly submodular functions, as 
optimization routines that go through their multilinear extensions have not yet been developed. 

\paragraph{Organization} The remainder of the paper is organized as follows. Preliminary definitions 
are given in Section~\ref{sec:preliminaries}. The algorithms we present for solving 
Problem~\eqref{eq:main_problem} 
are presented in Section~\ref{sec:algorithms}. The hardness result is stated in 
Section~\ref{sec:hardness}. Applications, experimental set-up, and experimental results are discussed in 
Section~\ref{sec:experiments}. Finally, we conclude with a discussion in Section~\ref{sec:conclusion}. 

\section{Preliminaries} \label{sec:preliminaries}
Let $\ground$ be a ground set of size $n$. 
For a real-valued set function $g\colon 2^\ground \rightarrow \reals$, we write the marginal gain of adding 
an element $e$ to a set $A$ as $g( e \mid S) \triangleq g(S \cup \{e\}) - g(S)$. We say that $g$ is 
\emph{monotone} if $g(A) \leq g(B)$ for all $A \subseteq B$, and say that $g$ is 
\emph{submodular} if for all sets $A \subseteq B \subseteq \ground$ and element $e \notin B$,
\begin{equation} \label{eq:submodular_def}
g(e \mid A) \geq g( e \mid B) \enspace.
\end{equation}
When $g$ is interpreted as a utility function, \eqref{eq:submodular_def} encodes a natural diminishing 
returns condition in the sense that the marginal gain of adding an element decreases as the current set 
grows larger. An equivalent definition is that $\sum_{e \in B} g( e \mid A) \geq g(A \cup B) - g(A)$, 
which allows for the following natural extension. A monotone set function $g$ 
is \emph{$\gamma$-weakly submodular} for $\gamma \in (0,1]$ if
\begin{equation} \label{eq:weakly_submodular_def}
\sum_{e \in B \setminus A} g( e \mid A) \geq \gamma \left( g(A \cup B) - g(A) \right)
\end{equation}
holds for all $A \subseteq B$. In this case,  $\gamma$ is referred to as the \emph{submodularity ratio}. 
Intuitively, such a function $g$ may not have strictly diminishing returns, but the increase in the returns 
is bounded by the marginals.
Note that $g$ is submodular if and only if it is $\gamma$-weakly submodular with $\gamma=1$. 
A real-valued set function $c\colon 2^\ground \rightarrow \reals$ is \emph{modular} if 
\eqref{eq:submodular_def} holds with equality. 
A modular function may always be written in terms of 
coefficients as $c(S) = \sum_{e \in S} c_e$ and is non-negative if and only if all of its coefficients are 
non-negative.

Our algorithms are specified in the \emph{value oracle model}, namely under the assumption that there 
is an 
oracle that, 
given a set $S \subseteq \ground$, returns the value $g(S)$. As is standard, we analyze the run time 
complexity of these algorithms in terms of the number of function evaluations they require.

\section{Algorithms} \label{sec:algorithms}
In this section, we present a suite of fast algorithms for solving Problem~\ref{eq:main_problem}. 
The main idea behind each of these algorithms is to optimize a surrogate objective, which changes 
throughout the algorithm, preventing us from getting stuck in poor local optima. 
Further computational speed ups are obtained by randomized sub-sampling of the ground set.\footnote{We note that these two techniques can be traced back to the works of~\cite{feldman2019} and~\cite{Mirzasoleiman2015}, respectively.}
The first algorithms we present assume knowledge of the weak submodularity parameter 
$\gamma$. However, $\gamma$ is rarely known in practice (unless it is equal to $1$), and thus, we show in Section~\ref{sec:gamma_sweep} how to adapt these algorithms for the case of unknown $\gamma$.

To motivate the distorted objective we use, let us describe a way in which the greedy algorithm may fail.
Suppose there is a ``bad element'' $b \in \ground$ which has the highest overall gain, $g(b) - c_b$ 
and 
so 
is added to the solution set; however, once added, the marginal gain of all remaining elements 
drops below the corresponding costs, and so the greedy algorithm terminates.
This outcome is suboptimal when there are other elements $e$ that, although their overall marginal 
gain $g(e \mid S) - c_e$ is lower, have much higher ratio between the marginal utility $g(e \mid S)$ and 
the
cost $c_e$ (see Appendix~\ref{sec:greedy_performs_poorly} for an explicit construction).

To avoid this type of situation, we design a distorted objective which initially places higher relative 
importance on the modular cost term $c$, and gradually increases the relative importance of the utility 
$g$ as the algorithm progresses. 
Our analysis relies on two functions: $\Phi$, the distorted objective, and $\Psi$, an 
important quantity in analyzing the trajectory of $\Phi$. 
Let $k$ denote the cardinality 
constraint, then for any $i=0,1,\dotsc, k$ and any set $T$, we define
\begin{equation*}
\Phi_i(T) \triangleq \left( 1 - \frac{\gamma}{k} \right)^{k-i} g(T) - c(T)
\enspace.
\end{equation*}
Additionally, for any iteration $i=0,1, \dots, k-1$ of our algorithm, a set $T \subseteq \ground$, and an element $e \in \ground$, let
\begin{equation*}
\Psi_i(T, e) \triangleq \max \left\{ 0,  \left(1 - \frac{\gamma}{k} \right)^{k - (i+1)} g(e \mid T ) - c_e \right\}
\enspace.
\end{equation*}


\subsection{Distorted Greedy}
Our first algorithm, \dgreedy, is presented as Algorithm~\ref{alg:distorted_greedy}.
At each iteration, this algorithm chooses an element $e_i$ maximizing the increase in the distorted objective.
The algorithm then only accepts $e_i$ if it positively contributes to the distorted objective. 

\begin{algorithm}[ht]
	\caption{\dgreedy}
	\label{alg:distorted_greedy}
	\begin{algorithmic}
		\STATE {\bfseries Input:} utility $g$, weak $\gamma$, cost $c$, cardinality $k$
		\STATE Initialize $S_0 \gets \varnothing$
		\FOR{$i=0$ {\bfseries to} $k-1$}
		\STATE $e_i \gets \argmax_{e \in \ground} \big\{\big( 1 - \frac{\gamma}{k} \big)^{k-(i+1)} g(e \mid 
		S_i) - c_e\big\}$
		\IF{$\left( 1 - \frac{\gamma}{k} \right)^{k-(i+1)} g(e_i \mid S_i) - c_{e_i} > 0$}
		\STATE $S_{i+1} \gets S_i \cup \{e_i\}$
		\ELSE 
		\STATE $S_{i+1} \gets S_i $
		\ENDIF
		\ENDFOR
		\STATE {\bfseries Return} $S_k$
	\end{algorithmic}
\end{algorithm}

The analysis consists 
mainly of two lemmas. First, Lemma~\ref{lem:dist_gain_lb} shows that the marginal gain in the distorted 
objective is lower bounded by a term involving $\Psi$. This fact relies on 
the non-negativity of $c$ and the rejection step in the algorithm.
\begin{lemma} \label{lem:dist_gain_lb}
	In each iteration of \dgreedy,
	\[
	\Phi_{i+1}(S_{i+1}) - \Phi_i(S_i) 
	=  \Psi_i(S_i, e_i) + \frac{\gamma}{k}\left( 1 - \frac{\gamma}{k} \right)^{k-(i+1)}  g(S_i) 
	\enspace.
	\]
\end{lemma}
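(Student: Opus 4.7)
The plan is to prove Lemma~\ref{lem:dist_gain_lb} by a direct calculation that splits into two cases, corresponding to whether the candidate element $e_i$ is accepted or rejected in iteration $i$. The single algebraic identity driving the whole argument is that the distortion coefficients satisfy
\[
\left(1 - \tfrac{\gamma}{k}\right)^{k-i} \;=\; \left(1 - \tfrac{\gamma}{k}\right) \cdot \left(1 - \tfrac{\gamma}{k}\right)^{k-(i+1)},
\]
so the ``coefficient update'' from iteration $i$ to $i+1$ contributes an additive term of $\tfrac{\gamma}{k}\left(1 - \tfrac{\gamma}{k}\right)^{k-(i+1)} g(S_i)$ regardless of what the algorithm does to the set. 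I would introduce the shorthand $\alpha_i := \left(1 - \tfrac{\gamma}{k}\right)^{k-i}$ so that $\Phi_i(T) = \alpha_i g(T) - c(T)$ and $\alpha_i = (1 - \gamma/k)\alpha_{i+1}$; this keeps the algebra compact.

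In the \emph{accepted} case, $S_{i+1} = S_i \cup \{e_i\}$, so
\[
\Phi_{i+1}(S_{i+1}) - \Phi_i(S_i) \;=\; (\alpha_{i+1}-\alpha_i)\,g(S_i) \;+\; \alpha_{i+1} g(e_i \mid S_i) - c_{e_i}.
\]
The first term simplifies to $\tfrac{\gamma}{k}\alpha_{i+1} g(S_i)$ by the coefficient identity, and since acceptance means $\alpha_{i+1} g(e_i \mid S_i) - c_{e_i} > 0$, the second term equals $\Psi_i(S_i, e_i)$ by the definition of $\Psi$ (the max with $0$ is inactive).

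In the \emph{rejected} case, $S_{i+1} = S_i$, so the change collapses to $(\alpha_{i+1}-\alpha_i)\,g(S_i) = \tfrac{\gamma}{k}\alpha_{i+1} g(S_i)$. Here the rejection criterion says $\alpha_{i+1} g(e_i \mid S_i) - c_{e_i} \leq 0$, and hence $\Psi_i(S_i, e_i) = 0$, so appending ``$+\,\Psi_i(S_i,e_i)$'' to the right-hand side is free. Both cases therefore yield exactly the claimed identity.

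There is no real obstacle: the lemma is essentially a bookkeeping identity whose design dictated the exact form of the rejection rule and of $\Psi$. The only thing to be careful about is to note that the non-negativity of $c$ is not actually needed for this equality statement (despite the prose hinting at it); what is used is the rejection threshold matching the threshold in $\Psi$'s $\max$, together with the clean telescoping of the coefficients $\alpha_i$. Writing the two cases side by side and invoking the coefficient identity in each is the whole proof.
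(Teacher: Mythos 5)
Your overall approach is the same as the paper's: expand $\Phi_{i+1}(S_{i+1}) - \Phi_i(S_i)$, peel off the coefficient-update term $\tfrac{\gamma}{k}\alpha_{i+1}g(S_i)$, and split on whether the if-condition passes. The shorthand $\alpha_i$ is a cosmetic improvement and the case analysis matches the paper's. However, your parenthetical remark that ``the non-negativity of $c$ is not actually needed for this equality statement'' is wrong, and the gap sits in exactly the place you waved it away.

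In the accepted case you write
\[
\Phi_{i+1}(S_{i+1}) - \Phi_i(S_i) = (\alpha_{i+1}-\alpha_i)\,g(S_i) + \alpha_{i+1}\, g(e_i \mid S_i) - c_{e_i}\enspace,
\]
but the last term requires $c(S_{i+1}) - c(S_i) = c_{e_i}$, which holds only when $e_i \notin S_i$ (if $e_i \in S_i$, the modular difference is $0$, not $c_{e_i}$). The algorithm's $\argmax$ is over all of $\ground$, so there is nothing syntactic ruling out $e_i \in S_i$. The reason $e_i \notin S_i$ in the accepted case is precisely non-negativity of $c$: if $e_i \in S_i$, then $g(e_i \mid S_i) = 0$, so the if-condition would read $-c_{e_i} > 0$, which fails since $c_{e_i} \geq 0$. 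If $c$ were allowed to be negative, the lemma itself would be false: take $e_i \in S_i$ with $c_{e_i} < 0$; the set does not change, so the left-hand side is $\tfrac{\gamma}{k}\alpha_{i+1}g(S_i)$, but $\Psi_i(S_i,e_i) = -c_{e_i} > 0$, so the two sides differ. So the paper's invocation of non-negativity of $c$ is not decorative; your proof needs that one sentence (non-negativity plus the passing if-condition force $e_i \notin S_i$) to make the accepted-case algebra legitimate.
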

\begin{proof}
	By expanding the definition of $\Phi$ and rearranging, we get
	\begin{align*}
	\Phi_{i+1}\mspace{-3mu}&\mspace{3mu}(S_{i+1}) - \Phi_i(S_i) \\
	&= \left( 1 - \frac{\gamma}{k} \right)^{k-(i+1)} g(S_{i+1}) - c(S_{i+1}) - \left( 1 - \frac{\gamma}{k} 
	\right)^{k-i} g(S_i) + c(S_i) \\
	&= \left( 1 - \frac{\gamma}{k} \right)^{k-(i+1)} g(S_{i+1}) - c(S_{i+1}) - \left( 1 - \frac{\gamma}{k} 
	\right)^{k-(i+1)} \left(1 - \frac{\gamma}{k} \right) g(S_i) + c(S_i) \\
	&= \left( 1 - \frac{\gamma}{k} \right)^{k-(i+1)} \left[  g(S_{i+1}) - g(S_i) \right]
	- \left[ c(S_{i+1}) - c(S_i) \right] + \frac{\gamma}{k}\left( 1 - \frac{\gamma}{k} 
	\right)^{k-(i+1)} g(S_i) \enspace.
	\end{align*}
	Now let us consider two cases. First, suppose that the if statement in {\dgreedy} passes, which 
	means that
	$\Psi_i(S_i, e_i) = \left(1 - \frac{\gamma}{k} \right)^{k-(i+1)} g(e_i \mid S_i ) - c_{e_i}  > 0$ and that 
	$e_i$ is 
	added to the solution set. By the non-negativity of $c$, we can deduce in this case that $e_i 
	\notin S_i$, and thus, $g(S_{i+1}) - g(S_i) = g(e_i \mid S_i )$ and $c(S_{i+1}) - c(S_i) = c_{e_i}$. Hence,
	\begin{align*} \Phi_{i+1}(S_{i+1}) - \Phi_i(S_i)  
	={} &  \left(1 - \frac{\gamma}{k} \right)^{k-(i+1)} g(e_i \mid S_i ) - c_{e_i} + 
	\frac{\gamma}{k}\left( 1 - \frac{\gamma}{k} 
	\right)^{k-(i+1)} g(S_{i})\\
	={} & \Psi_i(S_i, e_i) + \frac{\gamma}{k}\left( 1 - \frac{\gamma}{k} 
	\right)^{k-(i+1)} g(S_i) 
	\enspace.
	\end{align*}
	Next, suppose that the if statement in $\dgreedy$ does not pass, which means that $\Psi_i(S_i, e_i) 
	= 0 \geq 
	\left(1 - \frac{\gamma}{k} \right)^{k-(i+1)} g(e_i \mid S_i ) - c_{e_i}$ and the algorithm does not add 
	$e_i$ to its solution. In particular, $S_{i+1} = S_i$, and thus, $g(S_{i+1}) - g(S_i) = 0$ and $c(S_{i+1}) 
	- c(S_i)  = 0$. In this case,
	\begin{align*} \Phi_{i+1}(S_{i+1}) - \Phi_i(S_i)  ={}& 0 +  \frac{\gamma}{k}\left( 1 - \frac{\gamma}{k} 
	\right)^{k-(i+1)} g(S_i)\\ ={}& \Psi_i(S_i, e_i) +  
	\frac{\gamma}{k}\left( 1 - \frac{\gamma}{k} 
	\right)^{k-(i+1)} g(S_i)\enspace.\qedhere\end{align*}
\end{proof}

The second lemma shows that the marginal gain in the distorted objective is sufficiently large to 
ensure the desired approximation guarantees. This fact relies on the monotonicity and
$\gamma$-weak submodularity of $g$.
\begin{lemma} \label{lem:psi_lb_dgreedy}
	In each iteration of \dgreedy,
	\[
	\Psi_i(S_i, e_i) \geq \frac{\gamma}{k} \left(1 - \frac{\gamma}{k} \right)^{k - (i+1)}  
	\left[ g(\opt) - g(S_i) \right] - \frac{1}{k} c(\opt) \enspace.
	\]
\end{lemma}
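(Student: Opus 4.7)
The plan is to lower bound $\Psi_i(S_i, e_i)$ by the optimum value averaged over elements of $\opt$, then invoke $\gamma$-weak submodularity.

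First I would observe that since $e_i$ is chosen to maximize $(1-\gamma/k)^{k-(i+1)} g(e \mid S_i) - c_e$ and the $\max\{0,\cdot\}$ wrapper preserves the argmax, we have $\Psi_i(S_i, e_i) = \max_{e \in \ground}\Psi_i(S_i, e) \geq \Psi_i(S_i, o)$ for every $o \in \opt$. Because $\Psi_i$ is non-negative and $|\opt| \leq k$, padding the sum with zeros for indices beyond $|\opt|$ yields
\[
\Psi_i(S_i, e_i) \;\geq\; \frac{1}{k}\sum_{o \in \opt}\Psi_i(S_i, o) \;\geq\; \frac{1}{k}\sum_{o \in \opt}\left[\left(1 - \tfrac{\gamma}{k}\right)^{k-(i+1)} g(o \mid S_i) - c_o\right],
\]
where the last step drops the outer $\max\{0,\cdot\}$ in each $\Psi_i(S_i,o)$, which can only decrease the value.

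The cost contribution is immediate: $\sum_{o\in\opt} c_o = c(\opt)$. For the utility sum, I would apply $\gamma$-weak submodularity to $A = S_i$ and $B = S_i \cup \opt$, giving $\sum_{o \in \opt \setminus S_i} g(o \mid S_i) \geq \gamma\bigl[g(S_i \cup \opt) - g(S_i)\bigr]$. Elements $o \in \opt \cap S_i$ contribute $g(o \mid S_i) = 0$, so the sum over $o \in \opt$ equals the sum over $\opt \setminus S_i$. Monotonicity then gives $g(S_i \cup \opt) \geq g(\opt)$, yielding $\sum_{o \in \opt} g(o \mid S_i) \geq \gamma[g(\opt) - g(S_i)]$. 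Plugging this into the displayed inequality above produces exactly the claimed bound.

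I do not anticipate a serious obstacle here; the only subtlety is the first step, where I want the ``averaging'' inequality $\Psi_i(S_i, e_i) \geq \tfrac{1}{k}\sum_{o \in \opt} \Psi_i(S_i, o)$ to hold even when $|\opt| < k$ or when some summands correspond to elements already in $S_i$. The non-negativity built into the definition of $\Psi_i$ (via the $\max\{0,\cdot\}$) is what makes this padding argument go through, and this is precisely the reason the definition includes that clipping.
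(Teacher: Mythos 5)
Your proof is correct and follows essentially the same route as the paper: both arguments use the non-negativity of $\Psi_i$ together with $|\opt|\leq k$ to turn the maximizing element into an average over $\opt$, then drop the clipping, and finish with $\gamma$-weak submodularity and monotonicity of $g$. Your write-up is slightly more explicit about the padding step, the elements of $\opt\cap S_i$, and the separate use of monotonicity, all of which the paper compresses into single lines, but the underlying argument is the same.
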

\begin{proof} 
	Observe that
	\begin{align*}
	k &\cdot \Psi_i(S_i, e_i) \\
	&= k \cdot \max_{e \in \ground} \left\{ 0, \left(1 - \frac{\gamma}{k} \right)^{k - (i + 1)} g(e \mid S_i) - 
	c_{e} 
	\right\} &\text{(definitions of $\Psi$ and $e_i$)}\\
	&\geq |OPT| \cdot \max_{e \in \ground} \left\{ 0, \left(1 - \frac{\gamma}{k} \right)^{k - (i + 1)} g(e \mid 
	S_i) - c_{e}\right\} &\text{($\lvert \opt \rvert \leq k$)}\\
	&\geq |OPT| \cdot \max_{e \in \opt} \left\{\left(1 - \frac{\gamma}{k} \right)^{k - (i + 1)} g(e \mid S_i) - 
	c_e \right\} 
	&\text{(restricting maximization)}\\
	&\geq \sum_{e \in \opt} \left[ \left(1 - \frac{\gamma}{k} \right)^{k - (i + 1)} g(e \mid 
	S_i) - c_e \right] &\text{(averaging argument)}\\
	&= \left(1 - \frac{\gamma}{k} \right)^{k - (i + 1)} \sum_{e \in \opt} g(e \mid S_i) 
	- c(\opt) &\\
	&\geq \gamma \left(1 - \frac{\gamma}{k} \right)^{k - (i + 1)} \left[ g(\opt) - g(S_i) \right] - c(\opt) 
	\enspace.
	&\text{($\gamma$-weak submodularity)} \tag*{\qed}
	\end{align*}
	\let\qed\relax
\end{proof}

Using these two lemmas, we present an approximation guarantee for \dgreedy.
\begin{mdframed}[nobreak=true]
\begin{theorem} \label{thm:distorted_greedy}
	\dgreedy makes $O(nk)$ evaluations of $g$ and returns a set $R$ of size at most $k$ with
	\[ g(R) - c(R) \geq \left( 1 - e^{-\gamma} \right) g(\opt) - c(\opt) \enspace. \]
\end{theorem}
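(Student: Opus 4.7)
The plan is to telescope the gain in the distorted objective $\Phi_i(S_i)$ across the $k$ iterations, using Lemmas~\ref{lem:dist_gain_lb} and~\ref{lem:psi_lb_dgreedy} together. First, I would chain these lemmas to get a single lower bound on $\Phi_{i+1}(S_{i+1}) - \Phi_i(S_i)$. Substituting the bound on $\Psi_i(S_i, e_i)$ from Lemma~\ref{lem:psi_lb_dgreedy} into the identity of Lemma~\ref{lem:dist_gain_lb}, the two terms involving $g(S_i)$ cancel exactly (this cancellation is the whole point of the particular form of the distortion factor $(1 - \gamma/k)^{k-i}$), leaving
\begin{equation*}
\Phi_{i+1}(S_{i+1}) - \Phi_i(S_i) \geq \frac{\gamma}{k} \left(1 - \frac{\gamma}{k}\right)^{k-(i+1)} g(\opt) - \frac{1}{k} c(\opt).
\end{equation*}

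Next I would sum this inequality over $i = 0, 1, \dots, k-1$. On the left-hand side the sum telescopes to $\Phi_k(S_k) - \Phi_0(S_0)$. Since $g(\varnothing) = 0$ and $c(\varnothing) = 0$, we have $\Phi_0(\varnothing) = 0$, and by definition $\Phi_k(S_k) = g(S_k) - c(S_k)$. On the right-hand side, the coefficient of $g(\opt)$ is a geometric sum:
\begin{equation*}
\frac{\gamma}{k} \sum_{i=0}^{k-1} \left(1 - \frac{\gamma}{k}\right)^{k-(i+1)} = \frac{\gamma}{k} \cdot \frac{1 - (1-\gamma/k)^k}{\gamma/k} = 1 - \left(1 - \frac{\gamma}{k}\right)^k,
\end{equation*}
and the coefficient of $c(\opt)$ is $-1$. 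Using the standard inequality $(1 - \gamma/k)^k \leq e^{-\gamma}$ then yields the desired bound
\begin{equation*}
g(S_k) - c(S_k) \geq \left(1 - e^{-\gamma}\right) g(\opt) - c(\opt).
\end{equation*}

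Finally, I would note that $|S_k| \leq k$ because at most one element is added per iteration, and the runtime bound is immediate: each of the $k$ iterations computes an $\argmax$ over $\ground$, requiring $n$ evaluations of $g$ (to compute each marginal $g(e \mid S_i)$), for a total of $O(nk)$ value-oracle queries. The only real obstacle is the algebraic cancellation between the two lemmas; once that is in place, the rest of the proof is a routine telescoping computation together with the elementary estimate $(1 - \gamma/k)^k \leq e^{-\gamma}$.
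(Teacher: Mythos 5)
Your proposal is essentially identical to the paper's proof: same use of Lemmas~\ref{lem:dist_gain_lb} and~\ref{lem:psi_lb_dgreedy}, same cancellation of the $g(S_i)$ terms, same telescoping and geometric sum, and the same final estimate $(1-\gamma/k)^k \leq e^{-\gamma}$. One small imprecision: you assert $g(\varnothing) = 0$, but the hypotheses only give that $g$ is non-negative, so $g(\varnothing)$ may be strictly positive; the paper instead observes $\Phi_0(S_0) = (1-\gamma/k)^k g(\varnothing) \geq 0$, which is all the telescoping inequality $g(R)-c(R) = \Phi_k(S_k) \geq \Phi_k(S_k) - \Phi_0(S_0)$ actually requires, so the argument survives once you replace your equality with this inequality.
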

\end{mdframed}
\begin{proof}
	Since $c$ is modular and $g$ is non-negative, the definition of $\Phi$ gives
	\[
	\Phi_0(S_0) = \left( 1 - \frac{\gamma}{k} \right)^{k} g(\varnothing) - c(\varnothing) 
	= \left( 1 - \frac{\gamma}{k} \right)^{k}g(\varnothing)
	 \geq 0 \enspace.
	\]
and
	\[
	\Phi_k \left( S_k \right) = \left( 1 - \frac{\gamma}{k} \right)^{0} g(S_k) - c(S_k) = g(S_k) - c(S_k) 
	\enspace.
	\]
	Using this and the fact that the returned set $R$ is in fact $S_k$, we get
	\begin{equation} \label{eq:telescoping_sum1}
	g(R) - c(R) 
	\geq \Phi_k (S_k) - \Phi_0 (S_0) 
	= \sum_{i=0}^{k-1} \Phi_{i+1}(S_{i+1}) - \Phi_i(S_i)  \enspace.
	\end{equation}
	Applying Lemmas~\ref{lem:dist_gain_lb} and \ref{lem:psi_lb_dgreedy}, respectively, we have
	\begin{align*}
	\Phi_{i+1}(S_{i+1}) - \Phi_i(S_i) 
	&= \Psi_i(S_i, e_i) + \frac{\gamma}{k}\left( 1 - \frac{\gamma}{k} 
	\right)^{k-(i+1)} g(S_i) \\
	&\geq \frac{\gamma}{k} \left(1 - \frac{\gamma}{k} \right)^{k-(i+1)}  \left[ g(\opt) - g(S_i) 
	\right] \\
	&\quad
	 - \frac{1}{k} c(\opt)  + \frac{\gamma}{k}\left( 1 - \frac{\gamma}{k} 
	\right)^{k-(i+1)} g(S_i) \\
	&= 
	\frac{\gamma}{k} \left(1 - \frac{\gamma}{k} \right)^{k-(i+1)} g(\opt) - \frac{1}{k} c(\opt)
	\enspace.
	\end{align*}
	Finally, plugging this bound into 
	\eqref{eq:telescoping_sum1} yields
	\begin{align*}
	g(R) - c(R) 
	&\geq  \sum_{i=0}^{k-1} \left[ \frac{\gamma}{k} \left(1 - \frac{\gamma}{k} \right)^{k-(i+1)} g(\opt) 
	- \frac{1}{k} c(\opt) \right] \\
	&=  \left[ \frac{\gamma}{k} \sum_{i=0}^{k-1} \left(1 - \frac{\gamma}{k} \right)^{i} \right] g(\opt) - 
	c(\opt) \\
	&= \left(1 - \left(1 - \frac{\gamma}{k} \right)^{k} \right) g(\opt) - c(\opt) \\
	&\geq \left( 1 - e^{-\gamma} \right) g(\opt) - c(\opt)
	\enspace.
	\qedhere
	\end{align*}
\end{proof}

\subsection{Stochastic Distorted Greedy}
Our second algorithm, \sdgreedy, is presented as Algorithm~\ref{alg:stochastic_distorted_greedy}.
It uses the same distorted objective as \dgreedy, but enjoys an asymptotically faster run time due to 
sampling techniques of \cite{Mirzasoleiman2015}. Instead of optimizing over the entire ground set at 
each iteration, {\sdgreedy} optimizes over a random sample $B_i \subseteq \ground$ of size $O \left( 
\frac{n}{k} \log \frac{1}{\epsilon} \right)$. This sampling procedure ensures that sufficient potential gain 
occurs \emph{in expectation}, which is true for the following reason. If 
the sample size is sufficiently large, then  $B_i$ contains at least one element of $\opt$ with 
high probability. Conditioned on this (high probability) event, choosing the element with the maximum 
potential gain is at least as good as choosing an average element from $\opt$. 

\begin{algorithm}[h!]
	\caption{\sdgreedy}
	\label{alg:stochastic_distorted_greedy}
	\begin{algorithmic}
		\STATE {\bfseries Input:} utility $g$, weak $\gamma$, cost $c$, cardinality $k$, error 
		$\epsilon$
		\STATE Initialize $S_0 \gets \varnothing$, $s \gets \lceil \frac{n}{k} \log(\frac{1}{\epsilon}) \rceil$
		\FOR{$i=0$ {\bfseries to} $k-1$}
		\STATE $B_i \gets$ sample $s$ elements uniformly and independently from $\ground$
		\STATE $e_i \gets \argmax_{e \in B_i} \big\{\big( 1 - \frac{\gamma}{k} \big)^{k-(i+1)} g(e \mid S_i) - 
		c_e\big\}$
		\IF{$\left( 1 - \frac{\gamma}{k} \right)^{k-(i+1)} g(e_i \mid S_i) - c_{e_i} > 0$}
		\STATE $S_{i+1} \gets S_i \cup \{e_i\}$
		\ELSE 
		\STATE $S_{i+1} \gets S_i $
		\ENDIF
		\ENDFOR
		\STATE {\bfseries Return} $S_k$
	\end{algorithmic}
\end{algorithm}

The next three lemmas formalize this idea and are analogous to Lemma~2 in \cite{Mirzasoleiman2015}.
The first step is to show that an element of $\opt$ is likely to appear in the sample $B_i$.
\begin{lemma} \label{lem:sampling_lem}
	In each step $0 \leq i \leq k-1$ of \sdgreedy,
	\[ \prob{B_i \cap \opt \neq \varnothing } \geq \left( 1 - \epsilon \right) \frac{ \lvert \opt 
		\rvert}{k}\enspace.\]
\end{lemma}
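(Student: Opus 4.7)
The plan is to compute the non-intersection probability directly and then apply a concavity argument to extract the linear-in-$|\opt|/k$ lower bound.

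First, since $B_i$ consists of $s = \lceil \tfrac{n}{k}\log(1/\epsilon)\rceil$ independent uniform samples from $\ground$, the probability that no element of $\opt$ appears is $(1 - |\opt|/n)^s$. Using the standard inequality $1-y \leq e^{-y}$ together with the choice of $s$,
\[
(1 - |\opt|/n)^s \leq e^{-s|\opt|/n} \leq e^{-(|\opt|/k)\log(1/\epsilon)} = \epsilon^{|\opt|/k}.
\]
Hence $\Pr[B_i \cap \opt \neq \varnothing] \geq 1 - \epsilon^{|\opt|/k}$.

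It then remains to show that $1 - \epsilon^{x} \geq (1-\epsilon)x$ for all $x \in [0,1]$, and apply this with $x = |\opt|/k \in [0,1]$. I would prove this by noting that the function $h(x) = 1 - \epsilon^x - (1-\epsilon)x$ satisfies $h(0) = 0$ and $h(1) = 0$, and has $h''(x) = -\epsilon^x (\ln \epsilon)^2 \leq 0$, so $h$ is concave on $[0,1]$. Any concave function that vanishes at both endpoints of an interval is non-negative throughout, which yields $h(x) \geq 0$, i.e., the desired inequality.

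The steps are all routine; the only mild subtlety is recognizing that one needs to deduce a linear-in-$|\opt|/k$ lower bound from an exponential-in-$|\opt|/k$ quantity, which is where the concavity of $1 - \epsilon^x$ is essential rather than, say, a cruder bound of the form $1 - \epsilon^x \geq 1 - \epsilon$ (which would be too weak when $|\opt|$ is small). Combining the three displays above yields the claimed bound $\Pr[B_i \cap \opt \neq \varnothing] \geq (1-\epsilon)\,|\opt|/k$.
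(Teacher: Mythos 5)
Your proof is correct and follows the same route as the paper's: bound the non-intersection probability by $(1-|\opt|/n)^s \le e^{-s|\opt|/n}$, then convert the exponential bound into the linear one via the inequality $1-e^{-ax}\ge(1-e^{-a})x$ (which the paper cites as known, with $a=sk/n$, and which you restate as $1-\epsilon^x\ge(1-\epsilon)x$ and prove by concavity). The only difference is cosmetic — you substitute the value of $s$ one step earlier and spell out the concavity argument the paper leaves implicit.
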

\begin{proof}
	\begin{equation*}
	\prob{B_i \cap \opt = \varnothing}
	\leq \left(1 - \frac{ \lvert \opt \rvert}{n} \right)^s 
	\leq e^{-s \frac{ \lvert \opt \rvert }{n}} 
	= e^{-\frac{s k}{n} \frac{ \lvert \opt \rvert}{k} }\enspace,
	\end{equation*}
	where we used the known inequality $1 - x \leq e^{-x}$. Thus,
	\begin{equation*}
	\prob {B_i \cap \opt \neq \varnothing}
	\geq 1 - e^{-\frac{s k}{n} \frac{ \lvert \opt \rvert}{k} } 
	\geq \left( 1 - e^{-\frac{s k}{n}} \right) \frac{ \lvert \opt \rvert}{k} 
	\geq \left( 1 - \epsilon \right) \frac{ \lvert \opt \rvert}{k} \enspace,
	\end{equation*}
	where the second inequality follows from $1 - e^{-ax} \geq (1 - e^{-a})x$ for $x \in [0,1]$ and $a > 0$, and the 
	last 
	inequality follows from the choice of sample size $s = \lceil \frac{n}{k} \log \frac{1}{\epsilon} \rceil$.
\end{proof}

Conditioned on the fact that at least one element of $\opt$ was sampled, the following lemma shows 
that sufficient potential gain is made.
\begin{lemma} \label{lem:exp_psi_conditional_bound}
	In each step $0 \leq i \leq k-1$ of \sdgreedy,
	\begin{align*}
	\CondExp[e_i]{\Psi_i(S_i, e_i)}{ S_i, \ B_i \cap \opt \neq \varnothing} 
	&\geq \frac{\gamma}{\lvert \opt \rvert} \left( 1 - \frac{\gamma}{k} \right)^{k- (i+1)} \left[ g(\opt) -g(S_i) 
	\right] \\
	&\quad - \frac{1}{\lvert \opt \rvert} c(\opt) \enspace.
	\end{align*}
\end{lemma}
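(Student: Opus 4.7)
The plan is to reduce this to the deterministic estimate of Lemma~\ref{lem:psi_lb_dgreedy} by exploiting symmetry of i.i.d.\ uniform sampling. The key observation is that, conditioned on $B_i \cap \opt \neq \varnothing$, one can locate a canonical ``first'' sampled element of $\opt$ whose conditional distribution is uniform on $\opt$, and $\Psi_i(S_i, e_i)$ dominates $\Psi_i(S_i, \cdot)$ evaluated at that element.

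More precisely, I would first note the following comparison: since $e_i$ is the $B_i$-maximizer of $(1-\gamma/k)^{k-(i+1)} g(e \mid S_i) - c_e$ and $\Psi_i(S_i, e_i)$ is obtained by clipping this quantity at $0$, we have
\[
\Psi_i(S_i, e_i) \;\geq\; \Psi_i(S_i, e) \qquad \text{for every } e \in B_i.
\]
Index the samples as $a_1, \dots, a_s$ drawn i.i.d.\ uniformly from $\ground$, and on the event $B_i \cap \opt \neq \varnothing$ let $a_{j^*}$ be the first of these that lies in $\opt$. By symmetry of the i.i.d.\ sampling, the conditional distribution of $a_{j^*}$ given $\{B_i \cap \opt \neq \varnothing\}$ (and independent of $S_i$) is uniform on $\opt$. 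Hence
\[
\CondExp[e_i]{\Psi_i(S_i, e_i)}{S_i,\, B_i \cap \opt \neq \varnothing}
\;\geq\; \CondExp[]{\Psi_i(S_i, a_{j^*})}{S_i,\, B_i \cap \opt \neq \varnothing}
\;=\; \frac{1}{\lvert \opt \rvert}\sum_{e \in \opt} \Psi_i(S_i, e).
\]

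The second step is to lower-bound the sum $\sum_{e \in \opt} \Psi_i(S_i, e)$ by essentially repeating the averaging calculation from Lemma~\ref{lem:psi_lb_dgreedy}. Dropping the clip gives $\Psi_i(S_i, e) \geq (1 - \gamma/k)^{k-(i+1)} g(e \mid S_i) - c_e$ for every $e$, so
\[
\sum_{e \in \opt} \Psi_i(S_i, e) \;\geq\; \Bigl(1 - \tfrac{\gamma}{k}\Bigr)^{k-(i+1)} \sum_{e \in \opt} g(e \mid S_i) - c(\opt).
\]
Applying $\gamma$-weak submodularity to $S_i \subseteq S_i \cup \opt$ (noting that $g(e \mid S_i) = 0$ for $e \in \opt \cap S_i$, so only $e \in \opt \setminus S_i$ matter) together with monotonicity yields
\[
\sum_{e \in \opt} g(e \mid S_i) \;\geq\; \gamma\bigl[g(S_i \cup \opt) - g(S_i)\bigr] \;\geq\; \gamma\bigl[g(\opt) - g(S_i)\bigr],
\]
and combining the inequalities produces exactly the claimed bound.

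The main subtlety is the symmetry argument justifying that $a_{j^*}$ is conditionally uniform on $\opt$; this is the one step that would benefit from explicit verification in the write-up, since it relies on the fact that the $a_j$ are drawn i.i.d.\ uniformly (with replacement) and that the identity of the first ``hit'' is exchangeable across elements of $\opt$. Everything else is a routine reprise of the averaging argument from the deterministic analysis, with the factor of $k$ replaced by $\lvert \opt \rvert$.
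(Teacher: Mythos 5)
Your proof is correct and matches the paper's argument: the paper also lower-bounds the maximizer over $B_i$ by a canonical element of $B_i \cap \opt$ whose conditional law (given $B_i \cap \opt \neq \varnothing$) is uniform on $\opt$, and then runs the same averaging, $\gamma$-weak submodularity, and monotonicity chain as in the deterministic lemma. The symmetry step you flag is sound --- the paper states it slightly more informally via ``a uniformly random element of $B_i \cap \opt$'' rather than your ``first hit'' $a_{j^*}$, but both are conditionally uniform on $\opt$ by exchangeability over $\opt$ and the independence of $B_i$ from $S_i$.
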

\begin{proof}
	Throughout the proof, all expectations are 
	conditioned on the 
	current set $S_i$ and the event that $B_i \cap \opt \neq \varnothing$, as in the statement of the 
	lemma. For 
	convenience, we drop the notations of these conditionals from the calculations below.
	\begin{align*}
	\Exp[e_i]{\Psi_i(S_i, e_i)}
	&= \Exp{\max_{e \in B_i} \Psi_i(S_i, e_i) } 
	&\text{(definition of $e_i$)}\\
	&\geq \Exp{ \max_{e \in B_i \cap \opt } \Psi_i(S_i, e_i) }
	&\text{(restricting max)} \\
	&\geq \Exp{ \max_{e \in B_i \cap \opt } \left\{\left( 1 - \frac{\gamma}{k}\right)^{k-(i+1)} g(e \mid S_i) - 
		c_{e} \right\} }
	\enspace.
	&\text{(definition of $\Psi$)}
	\end{align*}
	We now note that $B_i \cap OPT$ is a subset of $OPT$ that contains every element of $OPT$ with 
	the same probability. Moreover, this is true also conditioned on $B_i \cap OPT \neq \varnothing$. 
	Thus, picking the best element from $B_i \cap OPT$ (when this set is not-empty) achieves gain at 
	least as large as picking a random element from $B_i \cap OPT$, which is identical to picking a 
	random element from $OPT$. Plugging this observation into the previous inequality, we get
	\begin{align*}
	\Exp[e_i]{\Psi_i(S_i, e_i)}
	&\geq \frac{1}{\lvert \opt \rvert} \sum_{e \in \opt} \left[ \left( 1 - \frac{\gamma}{k}\right)^{k-(i+1)}  
	g(e \mid S_i) - c_{e} \right]\\
	&= \frac{1}{\lvert \opt \rvert} \left( 1 - \frac{\gamma}{k}\right)^{k-(i+1)} \sum_{e \in \opt} g(e \mid S_i) - 
	\frac{1}{\lvert \opt \rvert} c(\opt) \\
	&\geq \frac{\gamma}{\lvert \opt \rvert} \left( 1 - \frac{\gamma}{k} \right)^{k-(i+1)} 
	\left[ g(\opt \cup S_i) - g(S_i) \right] 
	- \frac{1}{\lvert \opt \rvert} c(\opt) \\
	&\geq \frac{\gamma}{\lvert \opt \rvert} \left( 1 - \frac{\gamma}{k} \right)^{k-(i+1)} 
	\left[ g(\opt) - g(S_i) \right] 
	- \frac{1}{\lvert \opt \rvert} c(\opt)  \enspace,
	\end{align*}
	where the last two inequalities follows from the $\gamma$-weak submodularity and monotonicity of 
	$g$, respectively.
\end{proof}

The next lemma combines the previous two to show that sufficient gain of the distorted objective 
occurs at each iteration. 
\begin{lemma} \label{lem:psi_lb_sdgreedy}
	In each step of 
	\sdgreedy,
	\[
	\Exp{\Psi_i(S_i, e_i)} 
	\geq (1 - \epsilon) \bigg( \frac{\gamma}{k} \left( 1 - \frac{\gamma}{k} \right)^{k-(i+1)} 
	\big[ g(\opt) 	- \Exp{g(S_i)} \big] 
	- \frac{1}{k} c(\opt) \bigg) \enspace. 
	\]
\end{lemma}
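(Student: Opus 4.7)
My plan is to combine Lemma~\ref{lem:sampling_lem} and Lemma~\ref{lem:exp_psi_conditional_bound} via a standard conditioning argument, but with one small subtlety: the conditional lower bound from Lemma~\ref{lem:exp_psi_conditional_bound} may be negative (when $g(S_i)$ is already large or $c(\opt)$ is substantial), so I cannot naively multiply a probability lower bound by a conditional-expectation lower bound.

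First, I would fix $S_i$ and let $A_i$ denote the event $\{B_i \cap \opt \neq \varnothing\}$. Set
\[
Q(S_i) \triangleq \frac{\gamma}{\lvert \opt \rvert}\left(1 - \frac{\gamma}{k}\right)^{k-(i+1)}[g(\opt) - g(S_i)] - \frac{1}{\lvert \opt \rvert} c(\opt),
\]
so that Lemma~\ref{lem:exp_psi_conditional_bound} reads $\CondExp[e_i]{\Psi_i(S_i,e_i)}{S_i, A_i} \geq Q(S_i)$. Because $\Psi_i \geq 0$ by definition, this conditional expectation is also $\geq 0$, hence $\CondExp[e_i]{\Psi_i(S_i,e_i)}{S_i, A_i} \geq \max\{0, Q(S_i)\}$. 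The same non-negativity gives $\CondExp[e_i]{\Psi_i(S_i,e_i)}{S_i, A_i^c} \geq 0$. Splitting via total expectation yields
\[
\Exp[e_i,B_i]{\Psi_i(S_i,e_i) \mid S_i} \;\geq\; \prob{A_i}\cdot\max\{0, Q(S_i)\}.
\]

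Next I would finish by a short case split on the sign of $Q(S_i)$. If $Q(S_i) \geq 0$, then $\max\{0,Q(S_i)\} = Q(S_i) \geq 0$, and Lemma~\ref{lem:sampling_lem} gives $\prob{A_i} \geq (1-\epsilon)|\opt|/k$, so the right-hand side is at least $(1-\epsilon)\tfrac{|\opt|}{k} Q(S_i)$. If $Q(S_i) < 0$, then $\max\{0, Q(S_i)\} = 0$, while $(1-\epsilon)\tfrac{|\opt|}{k} Q(S_i)$ is non-positive, so the bound holds trivially. In either case,
\[
\Exp[e_i,B_i]{\Psi_i(S_i,e_i) \mid S_i} \;\geq\; (1-\epsilon)\tfrac{|\opt|}{k}\, Q(S_i) \;=\; (1-\epsilon)\Bigl(\tfrac{\gamma}{k}(1-\tfrac{\gamma}{k})^{k-(i+1)}[g(\opt) - g(S_i)] - \tfrac{1}{k}c(\opt)\Bigr),
\]
after cancelling $|\opt|$. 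Taking expectation over $S_i$ and using linearity gives the claimed bound.

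The only real obstacle is the sign issue I flagged above: the naive chain $\Exp{\Psi_i \mid S_i} \geq \prob{A_i}\,Q(S_i) \geq (1-\epsilon)\tfrac{|\opt|}{k} Q(S_i)$ fails for $Q(S_i)<0$ because a larger probability makes a negative quantity smaller. Using the non-negativity of $\Psi_i$ to replace $Q(S_i)$ by $\max\{0,Q(S_i)\}$ in the conditional bound is what makes the two-line case analysis go through. Everything else is routine conditioning and a linearity step to pass from $g(S_i)$ to $\Exp{g(S_i)}$.
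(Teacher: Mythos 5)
Your proof is correct and follows essentially the same route as the paper's: both decompose via total expectation over the event $A_i = \{B_i \cap \opt \neq \varnothing\}$, drop the complementary term using non-negativity of $\Psi$, and combine Lemmas~\ref{lem:sampling_lem} and~\ref{lem:exp_psi_conditional_bound}. The sign issue you flag is well-spotted: the paper's final inequality is attributed only to those two lemmas, but multiplying a probability lower bound by a possibly-negative conditional-expectation lower bound is not valid as a naive product-of-bounds step; it also requires that the conditional expectation itself is non-negative (which follows from $\Psi \geq 0$, and is needed to handle the case $Q(S_i) < 0$). Your $\max\{0, Q(S_i)\}$ intermediate step and two-case split make this dependence explicit, tightening the paper's write-up without changing its approach.
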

\begin{proof}
	By the law of iterated expectation and the non-negativity of $\Psi$,
	\begin{align*}
	&\CondExp[e_i]{\Psi_i(S_i, e_i)}{S_i} \\
	=& \CondExp[e_i]{\Psi_i(S_i, e_i)}{S_i, \ B_i \cap \opt \neq \varnothing} \prob{B_i \cap \opt \neq 
		\varnothing} \\
	& \quad +  \CondExp[e_i]{\Psi_i(S_i, e_i)}{ S_i , \ B_i \cap \opt = \varnothing} \prob{B_i \cap \opt =  
	\varnothing}\\
	\geq{} & \CondExp[e_i]{\Psi_i(S_i, e_i)}{S_i , \ B_i \cap \opt \neq \varnothing} \prob{B_i \cap \opt \neq 
		\varnothing}\\
	\geq{} & \left(\frac{\gamma}{\lvert \opt \rvert} \left( 1 - \frac{\gamma}{k} \right)^{k-(i+1)} \left[ g(\opt) 
	- 
	g(S_i) \right] 
	- \frac{1}{\lvert \opt \rvert} c(\opt) \right) 
	\left( \left( 1 - \epsilon \right) \frac{ \lvert \opt 	\rvert}{k} \right) 
	\\
	={} & (1 - \epsilon) \left( \frac{\gamma}{k} \left( 1 - \frac{\gamma}{k} \right)^{k-(i+1)}
	\left[ g(\opt) - g(S_i) 	\right] - \frac{1}{k} c(\opt) \right)
	\enspace,
	\end{align*}
	where the second inequality holds by Lemmas~\ref{lem:sampling_lem} 
	and~\ref{lem:exp_psi_conditional_bound}.
	The lemma now follows since the law of iterated expectations also implies
	$\Exp{\Psi_i(S_i, e_i)} = \Exp[S_i]{ \CondExp[e_i]{ \Psi_i(S_i, e_i)} {S_i}  }$.
\end{proof}

Using the previous lemmas, we can now prove the approximation guarantees of 
\sdgreedy.
\begin{mdframed}[nobreak=true]
\begin{theorem} \label{thm:stochastic_distorted_greedy}
	\sdgreedy uses $O(n \log \frac{1}{\epsilon})$ evaluations 
	of $g$ and returns a set $R$ with
	\[
	\Exp{g(R)  - c(R)} \geq \left( 1  - e^{-\gamma}  - \epsilon \right) g(\opt) - c(\opt) \enspace.
	\]
\end{theorem}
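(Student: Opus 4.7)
The plan is to follow the template of the proof of Theorem~\ref{thm:distorted_greedy}, upgrading each step to hold in expectation and substituting the deterministic per-iteration bound on $\Psi_i(S_i, e_i)$ with its stochastic counterpart from Lemma~\ref{lem:psi_lb_sdgreedy}.

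For the running time, I would simply count: in each of the $k$ iterations, \sdgreedy samples $s = \lceil \frac{n}{k} \log \frac{1}{\epsilon} \rceil$ elements of $\ground$ and evaluates a marginal of $g$ once per sampled element, giving $sk = O(n \log \frac{1}{\epsilon})$ oracle calls overall.

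For the approximation guarantee, my first observation is that Lemma~\ref{lem:dist_gain_lb} continues to hold verbatim for \sdgreedy: its proof only uses the acceptance rule (add $e_i$ iff $\Psi_i(S_i,e_i) > 0$) and the non-negativity of $c$, and is entirely indifferent to whether $e_i$ was chosen from all of $\ground$ or from the random sample $B_i$. With that in hand, exactly as in the proof of Theorem~\ref{thm:distorted_greedy}, one has $\Phi_0(S_0) \geq 0$ and $\Phi_k(S_k) = g(R) - c(R)$, so taking expectations of the telescoping sum yields
\begin{equation*}
\Exp{g(R) - c(R)} \geq \sum_{i=0}^{k-1} \Exp{\Phi_{i+1}(S_{i+1}) - \Phi_i(S_i)} \enspace.
\end{equation*}
I would then apply Lemma~\ref{lem:dist_gain_lb} to each summand and invoke Lemma~\ref{lem:psi_lb_sdgreedy} to lower bound $\Exp{\Psi_i(S_i,e_i)}$. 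The $\frac{\gamma}{k}(1-\gamma/k)^{k-(i+1)}\Exp{g(S_i)}$ terms partially cancel, and summing the geometric series $\sum_{i=0}^{k-1}(1-\gamma/k)^{k-(i+1)} = (1 - (1-\gamma/k)^k)/(\gamma/k)$ together with $(1-\gamma/k)^k \leq e^{-\gamma}$ gives a lower bound of the form $(1-\epsilon)(1-e^{-\gamma})\,g(\opt) - c(\opt)$.

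The main place where care is required, and the only step where the argument can easily go astray, is the bookkeeping of the $(1-\epsilon)$ factor coming out of Lemma~\ref{lem:psi_lb_sdgreedy}. It must \emph{not} multiply the $c(\opt)$ term --- doing so would flip the sign of the error contribution --- so I would drop $(1-\epsilon)$ on the cost using $(1-\epsilon) \leq 1$ and discard the residual non-negative $\epsilon \cdot \frac{\gamma}{k}(1-\gamma/k)^{k-(i+1)}\Exp{g(S_i)}$ term by non-negativity of $g$. A final application of $(1-\epsilon)(1-e^{-\gamma}) \geq 1 - e^{-\gamma} - \epsilon$, valid because $1-e^{-\gamma} \leq 1$, then converts the multiplicative $(1-\epsilon)$ loss into the additive $\epsilon$ loss stated in the theorem.
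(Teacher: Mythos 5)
Your proposal is correct and follows essentially the same route as the paper's proof: same telescoping decomposition, same observation that Lemma~\ref{lem:dist_gain_lb} carries over verbatim, same application of Lemma~\ref{lem:psi_lb_sdgreedy}, same partial cancellation of the $\Exp{g(S_i)}$ terms, same geometric-series summation, and same dropping of the $(1-\epsilon)$ factor on $c(\opt)$ via non-negativity of $c$. The only cosmetic difference is the final conversion: you use $(1-\epsilon)(1-e^{-\gamma}) \geq 1-e^{-\gamma}-\epsilon$, while the paper writes $1-e^{-\gamma}-\alpha\epsilon$ with $\alpha = 1-e^{-\gamma} \leq 0.65$ (a marginally sharper constant), but both yield the stated bound.
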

\end{mdframed}

\begin{proof} 
	As discussed in the proof of Theorem~\ref{thm:distorted_greedy}, we have that
	\begin{equation} \label{eq:telescoping_sum2}
	\Exp{g(R) - c(R)} \geq \Exp{\Phi_k (S_k) - \Phi_0 (S_0)} = \sum_{i=0}^{k-1}  \Exp{\Phi_{i+1}(S_{i+1}) - 
		\Phi_i(S_i)} \enspace,
	\end{equation}
	and so it is enough to lower bound each term in the rightmost side. To this end, we apply 
	Lemma~\ref{lem:dist_gain_lb} and Lemma~\ref{lem:psi_lb_sdgreedy} to obtain
	{\allowdisplaybreaks
	\begin{align*}
	\Exp{ \Phi_{i+1}(S_{i+1}) - \Phi_i(S_i) } 
	\mspace{-12mu}&\mspace{12mu}\geq 
	 \Exp{\Psi_i(S_i, e_i)} +  \frac{\gamma}{k}\left( 1 - 
	\frac{\gamma}{k} 
	\right)^{k-(i+1)} \Exp{g(S_i)}\\
	&\geq  (1 - \epsilon)
	\left( \frac{\gamma}{k} \left( 1 - \frac{\gamma}{k} \right)^{k-(i+1)} \left[ g(\opt) - \Exp{g(S_i)} \right] 
	+ \frac{1}{k} c(\opt) \right) \\
	&\quad +  \frac{\gamma}{k}\left( 1 - \frac{\gamma}{k} 
	\right)^{k-(i+1)}  \Exp{g(S_i)}\\
	&= (1 - \epsilon) \left( \frac{\gamma}{k} \left( 1 - \frac{\gamma}{k} \right)^{k-(i+1)} g(\opt) 
	- \frac{1 }{k} c(\opt) \right) \\
	& \quad +  \epsilon \cdot \frac{\gamma}{k} \left( 1 - \frac{\gamma}{k} 	\right)^{k-(i+1)} \Exp{g(S_i)}\\
	&\geq (1 - \epsilon) \left( \frac{\gamma}{k} \left( 1 - \frac{\gamma}{k} \right)^{k-(i+1)} g(\opt) 
	- \frac{1}{k} c(\opt) \right) \enspace,
	\end{align*}
	}
	where the last inequality followed from non-negativity of $g$. Plugging this bound into 
	\eqref{eq:telescoping_sum2} yields
	\begin{align*}
	\Exp{g(R) - c(R)} 
	&\geq  (1 - \epsilon) \sum_{i=0}^{k-1} \left[ \frac{\gamma}{k} \left(1 - \frac{\gamma}{k} \right)^{k-(i+1)} 
	g(\opt) - \frac{1}{k} c(\opt) \right] \\
	&= (1 - \epsilon) \left[ \frac{\gamma}{k} \sum_{i=0}^{k-1} \left( 1 - \frac{\gamma}{k} \right)^i \right] 
	g(\opt) - (1 - \epsilon) c(\opt) \\
	&\geq (1 - \epsilon) \left( 1 - e^{-\gamma} \right) g(\opt) - c(\opt) \\
	&= \left( 1 - e^{-\gamma} - \alpha \epsilon \right) g(\opt) -  c(\opt) \enspace,
	\end{align*}
	where the second inequality follows from non-negativity of $g$ and $c$, and $\alpha = 1 - 
	e^{-\gamma} \leq 0.65$.
	
	To bound the number of function evaluations used by {\sdgreedy}, observe that 
	this algorithm has $k$ rounds, each requiring $s = \lceil \frac{n}{k} \log \frac{1}{\epsilon} \rceil $ 
	function 
	evaluations. Thus, the total number of function evaluations is 
	$k \times \lceil \frac{n}{k} \log \frac{1}{\epsilon} \rceil  = O(n \log \frac{1}{\epsilon})$.
\end{proof}

\subsection{Unconstrained Distorted Greedy}
In this section, we present \udgreedy, an algorithm for the unconstrained setting (i.e., $k = n$), 
listed as Algorithm~\ref{alg:unconstrained_distorted_greedy}. {\udgreedy} samples a \emph{single} 
random element at 
each iteration, and adds it to the current solution if the potential gain is sufficiently large. Note that this
algorithm is faster than the previous two, as it requires only $O(n)$ evaluations of $g$. 

\begin{algorithm}[ht]
	\caption{\udgreedy}
	\label{alg:unconstrained_distorted_greedy}
	\begin{algorithmic}
		\STATE {\bfseries Input:} utility $g$, weak $\gamma$, cost $c$, cardinality $k$
		\STATE Initialize $S_0 \gets \varnothing$
		\FOR{$i=0$ {\bfseries to} $n-1$}
		\STATE $e_i \gets$ sample uniformly from $\ground$
		\IF{$\left( 1 - \frac{\gamma}{n} \right)^{n - (i+1)} g(e_i \mid S_i) - c_{e_i} > 0$}
		\STATE $S_{i+1} \gets S_i \cup \{e_i\}$
		\ELSE 
		\STATE $S_{i+1} \gets S_i $
		\ENDIF
		\ENDFOR
		\STATE {\bfseries Return} $S_n$
	\end{algorithmic}
\end{algorithm}

Like  \dgreedy and \sdgreedy, \udgreedy relies on the distorted objective and the heart of the 
analysis is showing that the increase of this distorted objective is sufficiently large in each iteration.
However, the argument in the analysis is different.
Our analysis of the previous algorithms argued that ``the best element is better than an 
average element'', while the analysis of \udgreedy works with that average directly.
This allows for significantly fewer evaluations of $g$ required by the algorithm.
\begin{lemma} \label{lem:psi_lb_udgreedy}
	In each step of \udgreedy,
	\[
	\Exp{\Psi_i(S_i, e_i)} \geq
	\frac{\gamma}{n} \left( 1 - \frac{\gamma}{n} \right)^{n - (i+1)} \big[ g(\opt) 
	- \Exp{g(S_i)} \big]  - \frac{1}{n} c(\opt) \enspace.
	\]
\end{lemma}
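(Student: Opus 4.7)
The plan is to directly unfold the expectation over the uniformly sampled $e_i$ as an average over the ground set, and then restrict that average to $\opt$ in order to exploit $\gamma$-weak submodularity. Since $\Psi_i$ is defined through a $\max\{0, \cdot\}$, we have two monotonicity facts for free that we can use independently: $\Psi_i(S_i, e) \geq 0$ for all $e$, and $\Psi_i(S_i, e) \geq (1-\gamma/n)^{n-(i+1)} g(e \mid S_i) - c_e$.

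Concretely, first I would condition on $S_i$ and write
\[
\CondExp[e_i]{\Psi_i(S_i, e_i)}{S_i} = \frac{1}{n} \sum_{e \in \ground} \Psi_i(S_i, e) \geq \frac{1}{n} \sum_{e \in \opt} \Psi_i(S_i, e),
\]
where the inequality discards the contribution of elements outside $\opt$ via the non-negativity of $\Psi_i$. Then I would drop the $\max$ in each remaining summand to obtain
\[
\CondExp[e_i]{\Psi_i(S_i, e_i)}{S_i} \geq \frac{(1-\gamma/n)^{n-(i+1)}}{n} \sum_{e \in \opt} g(e \mid S_i) - \frac{c(\opt)}{n}.
\]

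Next I would apply $\gamma$-weak submodularity with $A = S_i$ and $B = S_i \cup \opt$, noting that $g(e \mid S_i) = 0$ for any $e \in \opt \cap S_i$ so the sum over $\opt$ coincides with the sum over $\opt \setminus S_i$; this gives $\sum_{e \in \opt} g(e \mid S_i) \geq \gamma\bigl(g(S_i \cup \opt) - g(S_i)\bigr)$, and one more application of monotonicity of $g$ replaces $g(S_i \cup \opt)$ by $g(\opt)$. Finally, the law of iterated expectations with respect to the randomness of $S_i$ converts $g(S_i)$ to $\Exp{g(S_i)}$, yielding the claimed bound.

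The main conceptual point, as foreshadowed in the paragraph preceding the lemma, is that we never condition on ``$\opt$ was sampled'' (as in the proof of Lemma~\ref{lem:psi_lb_sdgreedy}); instead the averaging over the whole ground set already supplies the $1/n$ factor and there is no need to pay a $(1-\epsilon)$ loss. The only mild subtlety is ensuring the two inequalities ``drop elements outside $\opt$'' and ``drop the $\max$'' are applied in a compatible order, since neither of them individually is tight enough on its own — both slacks are needed simultaneously to land on the correct factor of $1/n$ out front while still being able to invoke $\gamma$-weak submodularity on the resulting sum.
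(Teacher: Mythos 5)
Your proposal follows essentially the same route as the paper: condition on $S_i$, expand the uniform expectation as an average over the ground set, discard non-$\opt$ terms by non-negativity of $\Psi_i$, drop the $\max$, invoke $\gamma$-weak submodularity and monotonicity, and finish with the law of iterated expectations. Your remark that $g(e\mid S_i)=0$ for $e\in\opt\cap S_i$ is a helpful explicit justification of a step the paper leaves implicit, but the argument is otherwise identical.
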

\begin{proof}
	We begin by analyzing the conditional expectation
	\begin{align*}
	&\CondExp[e_i]{\Psi_i(S_i, e_i)}{S_i} \\
	&= \frac{1}{n} \sum_{e \in \ground} \Psi_i(S_i, e) \\
	&\geq \frac{1}{n} \sum_{e \in \opt} \Psi_i(S_i, e) &\text{(non-negativity of $\Psi$)}\\
	&= \frac{1}{n} \sum_{e \in \opt} \max \left\{ 0, \left( 1 - \frac{\gamma}{n} \right)^{n - (i+1)} g(e \mid S_i) 
	- 
	c_e \right\}
	&\text{(by definition of $\Psi$)} \\
	&\geq \frac{1}{n} \sum_{e \in \opt} \left\{\left(1 - \frac{\gamma}{n} \right)^{n-(i+1)} g(e \mid S_i) - 
	c_e\right\} \\
	&= \frac{1}{n} \left(1 - \frac{\gamma}{n} \right)^{n-(i+1)} \sum_{e \in \opt} g(e \mid S_i) - 
	\frac{1}{n} c(\opt) &\text{(linearity of $c$)}\\
	&\geq \frac{\gamma}{n} \left(1 - \frac{\gamma}{n} \right)^{n-(i+1)} \left[ g(\opt \cup S_i) - g(S_i) \right] 
	- \frac{1}{n}c(\opt) 
	&\text{($\gamma$-weakly submodular)} \\
	&\geq \frac{\gamma}{n} \left(1 - \frac{\gamma}{n} \right)^{n-(i+1)} \left[ g(\opt) - g(S_i) \right] - 
	\frac{1}{n}c(\opt) 
	&\text{(monotonicity of $g$)}\enspace. 
	\end{align*}
	The lemma now follows by the law of iterated expectations.
\end{proof}

In the same way that Theorem~\ref{thm:distorted_greedy} follows from 
Lemma~\ref{lem:psi_lb_dgreedy}, the next theorem follows from Lemma~\ref{lem:psi_lb_udgreedy}, and 
so we omit its proof.
\begin{mdframed}[nobreak=true]
\begin{theorem} \label{thm:udgreedy}
	{\udgreedy} requires $O(n)$ function evaluations and outputs a set 
	$R$ such that
	\[ 
	\Exp{g(R) - c(R)} \geq (1 - e^{-\gamma})g(\opt) - c(\opt) \enspace.
	\]
\end{theorem}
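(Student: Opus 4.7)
The plan is to mirror the proof of Theorem~\ref{thm:distorted_greedy}, but in expectation and with $k$ replaced by $n$, using Lemma~\ref{lem:psi_lb_udgreedy} in place of Lemma~\ref{lem:psi_lb_dgreedy}. First, I would verify that Lemma~\ref{lem:dist_gain_lb} applies verbatim to \udgreedy: its proof only used the structural fact that $S_{i+1}$ either equals $S_i \cup \{e_i\}$ (in the ``accept'' branch, in which case $\Psi_i(S_i,e_i) > 0$ and non-negativity of $c$ forces $e_i \notin S_i$) or equals $S_i$ (in the ``reject'' branch, where $\Psi_i(S_i,e_i) = 0$). Both branches are present in \udgreedy with $k$ replaced by $n$, so the identity
\[
\Phi_{i+1}(S_{i+1}) - \Phi_i(S_i) = \Psi_i(S_i,e_i) + \frac{\gamma}{n}\Bigl(1 - \frac{\gamma}{n}\Bigr)^{n-(i+1)} g(S_i)
\]
holds pointwise along any realization of the random choices of $e_0, e_1, \dots, e_{n-1}$.

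Next, as in the proof of Theorem~\ref{thm:distorted_greedy}, I would observe $\Phi_0(S_0) = (1 - \gamma/n)^n g(\varnothing) \geq 0$ by non-negativity of $g$, and $\Phi_n(S_n) = g(S_n) - c(S_n) = g(R) - c(R)$. Taking expectations and telescoping gives
\[
\Exp{g(R) - c(R)} \geq \Exp{\Phi_n(S_n) - \Phi_0(S_0)} = \sum_{i=0}^{n-1} \Exp{\Phi_{i+1}(S_{i+1}) - \Phi_i(S_i)}.
\]
Then I would combine the displayed identity above with Lemma~\ref{lem:psi_lb_udgreedy}: the $\Exp{g(S_i)}$ term from the identity exactly cancels against the $-\frac{\gamma}{n}(1 - \gamma/n)^{n-(i+1)} \Exp{g(S_i)}$ term from the lemma, leaving
\[
\Exp{\Phi_{i+1}(S_{i+1}) - \Phi_i(S_i)} \geq \frac{\gamma}{n}\Bigl(1 - \frac{\gamma}{n}\Bigr)^{n-(i+1)} g(\opt) - \frac{1}{n} c(\opt).
\]

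Finally, I would sum the geometric series via $\frac{\gamma}{n} \sum_{i=0}^{n-1}(1-\gamma/n)^i = 1 - (1-\gamma/n)^n \geq 1 - e^{-\gamma}$ to obtain the claimed bound $\Exp{g(R) - c(R)} \geq (1 - e^{-\gamma}) g(\opt) - c(\opt)$. The running time bound is immediate: each of the $n$ iterations samples one element and evaluates the marginal $g(e_i \mid S_i)$ (which takes one oracle call given a cached value of $g(S_i)$), for a total of $O(n)$ evaluations. There is no real obstacle here; the only mild subtlety is the clean cancellation that makes the argument work with the \emph{average}-over-$\opt$ bound from Lemma~\ref{lem:psi_lb_udgreedy} rather than the $\max$-over-$\ground$ bound used in \dgreedy, and being careful to take expectations on both sides before invoking that lemma.
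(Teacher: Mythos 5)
Your proof is correct and follows exactly the route the paper indicates (the paper omits this proof with the remark that it follows from Lemma~\ref{lem:psi_lb_udgreedy} in the same way that Theorem~\ref{thm:distorted_greedy} follows from Lemma~\ref{lem:psi_lb_dgreedy}). You have correctly identified that Lemma~\ref{lem:dist_gain_lb} holds pointwise for \udgreedy with $k$ replaced by $n$, and that the $\Exp{g(S_i)}$ terms cancel cleanly, yielding the telescoping bound; this is precisely the argument the authors had in mind.
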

\end{mdframed}

\subsection{Guessing Gamma: A Geometric Sweep} \label{sec:gamma_sweep}
The previously described algorithms required knowledge of the submodularity ratio $\gamma$. 
However, it is very rare that the precise value of $\gamma$ is known in practice{\textemdash}unless 
$g$ is submodular, in which case $\gamma=1$. 
Oftentimes, $\gamma$ 
is data dependent and only a crude lower bound $L \leq \gamma$ is known. In this section, 
we describe a meta algorithm that ``guesses'' the value of $\gamma$. \gsweep , listed as 
Algorithm~\ref{alg:gamma_sweep},
runs a maximization algorithm $\mathcal{A}$ as a subroutine with a geometrically decreasing 
sequence of ``guesses'' $\gamma^{(k)}$ for $k=0, 1, \dotsc, \lceil \frac{1}{\delta} \log \frac{1}{\delta} 
\rceil$. The best set obtained by this procedure is guaranteed to have nearly as good 
approximation guarantees as when the true submodularity ratio $\gamma$ is known exactly.  
Moreover, fewer guesses are required if a good lower bound $L \leq \gamma$ is known, which is true 
for several problems of interest. 

\begin{algorithm}[ht]
	\caption{\gsweep}
	\label{alg:gamma_sweep}
	\begin{algorithmic}
		\STATE {\bfseries Input:} utility $g$, cost $c$, algorithm $\mathcal{A}$, lower bound $L$, $\delta \in 
		(0,1)$
		\STATE $S_{-1} \gets \varnothing$, $T \gets \left\lceil \frac{1}{\delta} \ln \left( \frac{1}{\max \{ \delta, 
		L\}} 
		\right) \right \rceil$
		\FOR{$r=0$ {\bfseries to} $T$}
		\STATE $\gamma_r \gets (1 - \delta)^r$
		\STATE $S_r \gets \mathcal{A}(g, \gamma_r, c, k, \delta)$
		\ENDFOR
		\STATE {\bfseries Return} the set $R$ maximizing $g(R) - c(R)$ among $S_{-1}, S_0, \dotsc, S_T$.
	\end{algorithmic}
\end{algorithm}

In the following theorem, we assume that $\mathcal{A}(g, \gamma, c, 
k, 
\epsilon)$ is 
an algorithm which returns a set $S$ with $|S| \leq k$ and $ \Exp{ g(S) - c (S)} \geq \left( 1 - e^{-\gamma} - \epsilon 
\right) g(\opt) - c(\opt) $ when $g$ is $\gamma$-weakly submodular, and $L \leq \gamma$ is known 
(one may always use $L=0$).

\begin{mdframed}[nobreak=true]
\begin{theorem} \label{thm:guessing_gamma}
	\gsweep requires at most $O\left( \frac{1}{\delta} 
	 \log \frac{1}{\delta} \right)$ calls to $\mathcal{A}$ and returns a set $R$ with
	 \[
	 \Exp{ g(R) - c(R)} 
	 \geq \left( 1 - e^{-\gamma} - O(\delta) \right) g(\opt) - c(\opt) 
	 \enspace.
	 \]
\end{theorem}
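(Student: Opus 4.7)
The plan is to exhibit a single index $r^* \in \{0, 1, \ldots, T\}$ in the geometric sweep such that the guess $\gamma_{r^*}$ closely approximates the true submodularity ratio $\gamma$ from below, and then invoke the guarantee of $\mathcal{A}$ at this index. Since $R$ is chosen to maximize $g(\cdot) - c(\cdot)$ over all candidate sets $S_{-1}, S_0, \ldots, S_T$, any lower bound on $\Exp{g(S_{r^*}) - c(S_{r^*})}$ transfers directly to $\Exp{g(R) - c(R)}$.

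First I would handle the main case $\gamma \geq \max\{\delta, L\}$. Here the sequence $\gamma_r = (1-\delta)^r$ starts at $\gamma_0 = 1$ and satisfies $\gamma_T \leq e^{-\delta T} \leq \max\{\delta, L\} \leq \gamma$ by the choice of $T$, so there exists a smallest index $r^*$ with $\gamma_{r^*} \leq \gamma$. Minimality (together with the boundary case $r^* = 0$, which forces $\gamma = 1$) gives $\gamma_{r^*} \geq (1-\delta)\gamma$. Since every monotone $\gamma$-weakly submodular function is also $\gamma'$-weakly submodular for any $\gamma' \leq \gamma$, I can invoke the guarantee of $\mathcal{A}$ with parameter $\gamma_{r^*}$ (using error tolerance $\delta$) to obtain
\[
\Exp{g(S_{r^*}) - c(S_{r^*})} \geq \left(1 - e^{-\gamma_{r^*}} - \delta\right) g(\opt) - c(\opt).
\]
A short calculation using $e^{-\gamma_{r^*}} \leq e^{-(1-\delta)\gamma} = e^{-\gamma} e^{\delta\gamma} \leq e^{-\gamma}(1 + 2\delta\gamma)$ (valid since $\delta\gamma \leq 1$) bounds $e^{-\gamma_{r^*}} - e^{-\gamma} \leq 2\delta\gamma e^{-\gamma} = O(\delta)$. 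Combining the two estimates yields $\left(1 - e^{-\gamma} - O(\delta)\right) g(\opt) - c(\opt)$ as a lower bound for $S_{r^*}$, and hence for $R$.

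The degenerate case $\gamma < \max\{\delta, L\}$ forces $\gamma < \delta$ (since $L \leq \gamma$). In this regime the elementary inequality $1 - e^{-\gamma} \leq \gamma < \delta$ implies that $\left(1 - e^{-\gamma} - O(\delta)\right) g(\opt) - c(\opt) \leq 0$ once the constant hidden in $O(\delta)$ absorbs the discrepancy; here non-negativity of $g$ and $c$ is used. Since $S_{-1} = \varnothing$ appears in the final maximization and satisfies $g(\varnothing) - c(\varnothing) = g(\varnothing) \geq 0$, the bound on $R$ holds trivially.

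The runtime claim follows by observing that the loop runs $T+1 = O\!\left(\frac{1}{\delta}\log \frac{1}{\max\{\delta, L\}}\right) = O\!\left(\frac{1}{\delta}\log\frac{1}{\delta}\right)$ iterations, each making one call to $\mathcal{A}$. The main obstacle is reconciling the two sources of approximation slack---the additive $\delta$ loss from $\mathcal{A}$ and the multiplicative $(1-\delta)$ spacing between consecutive guesses---into a single $O(\delta)$ loss in the exponent; the exponential estimate on $e^{-\gamma_{r^*}} - e^{-\gamma}$ is what makes this collapse work.
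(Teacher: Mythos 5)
Your proof is correct and follows essentially the same approach as the paper's: split on whether $\gamma \geq \delta$, find a guess $\gamma_{r^*} \in [(1-\delta)\gamma, \gamma]$ by the choice of $T$, invoke $\mathcal{A}$ at that guess, and show the loss $e^{-\gamma_{r^*}} - e^{-\gamma}$ is $O(\delta)$; the degenerate $\gamma < \delta$ case is handled by the empty set $S_{-1}$ just as in the paper. The only cosmetic difference is the elementary estimate used to absorb the slack (you bound $e^{\delta\gamma} \leq 1 + 2\delta\gamma$ directly, while the paper bounds $e^{\delta\gamma} \leq e^\delta \leq 1 + (e-1)\delta$ via convexity), which changes nothing material.
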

\end{mdframed}
\begin{proof}
	We consider two cases. First, suppose that $\gamma < \delta$. Under this assumption, we have
	$$ 1 - e^{-\gamma} - \delta < 1 - e^{-\delta} - \delta \leq \delta - \delta = 0 \enspace,$$
	where the second inequality used the fact that $1 - e^{-x} \leq x$. Thus,
	$$ g(\varnothing) - c(\varnothing) \geq 0 \geq (1 - e^{-\gamma} - \delta) g(\opt) - c(\opt) 
	\enspace,$$
	where the first inequality follows from non-negativity of $g$, and the second inequality follows from 
	non-negativity of both $c$ and $g$. Because Algorithm~\ref{alg:gamma_sweep} sets $S^{(-1)} = 
	\varnothing$ and $R$ is chosen to be the 
	best solution,  
	$$g(R) - c(R) \geq g(\varnothing) - c(\varnothing) \geq \left( 1 - e^{-\gamma} - \delta \right) g(\opt) 
	- 
	c(\opt) 
	\enspace.
	$$
	For the second case, suppose that $\gamma \geq \delta$. Recall that $\gamma \geq L$ by 
	assumption, 
	and thus, ${\gamma \geq B \triangleq \max \{ \delta, L \}}$. Now, we need to show that $(1-\delta)^T 
	\leq B$. 
	This is equivalent to $\left( \frac{1}{1 - \delta} \right)^T \geq \frac{1}{B}$, and by taking $\ln$, 	this is 
	equivalent to $T \geq \frac{\ln \frac{1}{B}}{ \ln \frac{1}{1-\delta}}$. This is true since the inequality 
	$\delta \leq \ln \left( \frac{1}{1-\delta} \right)$, which holds for all $\delta \in (0,1)$, implies
	$$T = \left\lceil \frac{1}{\delta} \ln \frac{1}{B} \right\rceil 
	\geq \frac{1}{\delta} \ln \frac{1}{B} 
	\geq \frac{\ln \frac{1}{B}}{ \ln \frac{1}{1-\delta}} \enspace.$$
	Hence, we have proved that $(1-\delta)^T \leq B \leq \gamma$, which implies that there exists $t 
	\in \{0, 1, \dotsc, T\}$ such that $\gamma \geq \gamma^{(t)} \geq (1 - \delta) \gamma$. For notational 
	convenience, we write $\gammaapx = \gamma^{(t)}$. Because $g$ is $\gamma$-weakly submodular 
	and 
	$\gamma \geq \gammaapx$, $g$ is also $\gammaapx$-weakly submodular. Therefore, by 
	assumption, the algorithm $\mathcal{A}$ returns a set $S^{(t)}$ which satisfies
	$$\Exp{g(S^{(t)}) + \ell(S^{(t)})}  \geq \left( 1 - e^{-\gammaapx} - \delta \right) g(\opt) - c(\opt) 
	\enspace.$$
	From the convexity of $e^x$, we have $e^\delta \leq (1 - \delta)e^0 + \delta e^1 = 1 + (e-1)\delta$ for 
	all $\delta \in [0,1]$. Using 
	this inequality, and the fact that $\hat{\gamma} \geq (1 - \delta) \gamma$, we get
	\[
	1 - e^{-\gammaapx} 
	\geq 1 - e^{-(1 - \delta ) \gamma} 
	\geq 1 - e^{-\gamma} e^\delta 
	\geq 1 - e^{-\gamma}( 1 + (e-1) \delta)
	=    1 - e^{-\gamma} - \beta \delta 
	\enspace.
	\]
	We remark that $\beta \leq e-1 \approx 1.72$. Thus, by the non-negativity of $g$ and because the 
	output 
	set $R$ was chosen as the set with highest value,  
	\[\Exp{g(R) - c(R)} \geq \Exp{g(S^{(t)}) - c(S^{(t)})} \geq \left( 1 - e^{-\gamma} - (\beta \delta + \delta) 
	\right) 
	g(\opt) - c(\opt) \enspace. \qedhere\]
\end{proof}

\begin{figure*}[tbh!]
	\centering
	\hspace{0.21in} 
	\includegraphics[width=\textwidth]{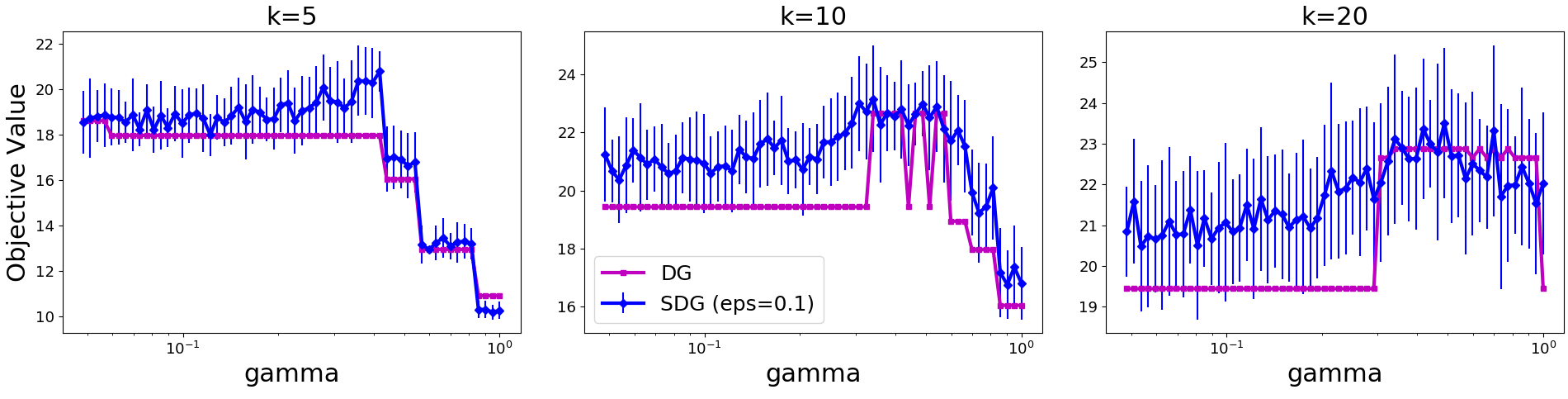}
	\caption{
		Results of the \gsweep with \dgreedy (DG) and \sdgreedy (SDG) as subroutines. For \sdgreedy , 
		mean values 
		with standard deviation bars are reported over 20 trials.
	}
	\label{fig:sweep_exp}
	\vspace{-0.1in}
\end{figure*}

In our experiments, we see that \sdgreedy combined with the \gsweep outperforms \dgreedy 
with \gsweep, especially for larger values of $k$.
Here, we provide some experimental evidence and explanation for why this may be occurring. 
Figure~\ref{fig:sweep_exp} shows the objective value of the sets $\{S_r\}_{r=0}^T$ produced by 
\sdgreedy and \dgreedy during the \gsweep for cardinality constraints $k=5$, $10$, and $20$.
Both subroutines return the highest objective value for similar ranges of $\gamma$.
However, the \sdgreedy subroutine appears to be better in two ways.
First, the average objective value is usually larger, meaning that an 
individual run of \sdgreedy is returning a higher quality set than \dgreedy.
This is likely due to the sub-sampling of the ground set, which might help avoiding the picking of a single ``bad 
element'', if one exists.
Second, the variation in \sdgreedy leads to a higher chance of producing a good solution.
For many values of $\gamma$, the \dgreedy subroutine returns a set of the same value; thus, the extra 
guesses of $\gamma$ are not particularly helpful.
On the other hand, the variation within the \sdgreedy subroutine means that these extra guesses are not 
wasted; in fact, they allow a higher chance of producing a set with good value.
Figure~\ref{fig:sweep_exp} also shows that the objective function throughout 
the sweep is fairly well-behaved, suggesting the possibility of early stopping heuristics. However, that 
is outside the scope of this paper.

\section{Hardness Result} \label{sec:hardness}

In this section, we give a hardness result which complements our algorithmic guarantees.
The hardness result shows that{\textemdash}in the case where $c=0${\textemdash}no algorithm 
making polynomially many queries to $g$ can achieve a better approximation ratio than $1-e^{-\gamma}$.
Although this was known in the case when $\gamma = 1$ (i.e., $g$ is submodular), the more 
general result for $\gamma < 1$ was unknown until this work.

\begin{mdframed}[nobreak=true]
\begin{theorem} \label{thm:weak_submodular_hardness}
	For every two constants $\eps > 0$ and $\gamma \in (0, 1]$, no polynomial time algorithm achieves 
	$(1 
	- e^{-\gamma} + \eps)$-approximation for the problem of maximizing a non-negative monotone 
	$\gamma$-weakly submodular function subject to a cardinality constraint in the value oracle model.
\end{theorem}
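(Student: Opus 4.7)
The plan is to establish this lower bound through the classical two-function indistinguishability argument used to prove value-oracle lower bounds for submodular maximization (Nemhauser--Wolsey; Vondr\'ak), adapted to interpolate across the full range $\gamma \in (0,1]$. Concretely, I would hide a uniformly random $k$-subset $T$ inside a ground set $\ground$ of size $n = \mathrm{poly}(k/\varepsilon)$ and construct a random family $\{g_T\}$ of non-negative, monotone, $\gamma$-weakly submodular functions with $g_T(T) = 1$, paired with a fixed symmetric ``twin'' $g^\star$ that does not depend on $T$ and satisfies $\max_{|S|\leq k} g^\star(S) \leq 1 - e^{-\gamma} + \varepsilon/2$. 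The key design requirement is that $g_T(S) = g^\star(S)$ whenever the query $S$ is ``balanced,'' i.e.\ $\bigl||S\cap T| - |S|k/n\bigr| \leq n^{1/3}$, so that $g_T$ and $g^\star$ are literally indistinguishable on typical queries.

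The indistinguishability step is then routine. Since $|S\cap T|$ is a hypergeometric random variable with mean $|S|k/n$, the probability that any fixed query is unbalanced is $\exp(-\Omega(n^{1/3}))$; a union bound over the polynomially many adaptive queries of any purported algorithm shows that, with probability $1 - o(1)$, every query is balanced, so the oracle's answers under $g_T$ coincide with those under $g^\star$. Consequently the output $S^{\mathrm{out}}$ of the algorithm is (essentially) independent of $T$, which gives
\[
\mathbb{E}_T\bigl[g_T(S^{\mathrm{out}})\bigr] \;\leq\; \max_{|S|\leq k} g^\star(S) + o(1) \;\leq\; \bigl(1 - e^{-\gamma} + \varepsilon\bigr)\,g(\opt).
\]
Yao's minimax principle then promotes this average-case bound over $T$ into the claimed worst-case hardness against randomized polynomial-time algorithms.

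The main obstacle{\textemdash}and essentially the only non-routine part of the argument{\textemdash}is the explicit construction of $g_T$ with submodularity ratio \emph{exactly} $\gamma$. All the standard submodular hard instances (concave functions of $|S\cap T|$, optionally plus a modular correction) are fully submodular and therefore only reproduce the $1-1/e$ bound, which is insufficient when $\gamma<1$. To drop the ratio below $1$, $g_T$ must contain a controlled amount of supermodular behavior while remaining monotone and non-negative. A plausible template is a non-separable composite $g_T(S) = \phi_1(|S\cap T|) + \phi_2(|S\cap T|,\,|S\setminus T|)$ in which $\phi_1$ is concave but $\phi_2$ contains an interaction whose marginals for ``non-$T$'' elements grow slightly with the current set size, with parameters tuned so that (i) $g_T(T) = 1$ while $g^\star(S) \leq 1 - (1-\gamma/k)^k + o(1)$ for every $|S|\leq k$, and (ii) $\sum_{e \in B\setminus A} g_T(e\mid A) \geq \gamma\bigl(g_T(A\cup B) - g_T(A)\bigr)$ for every $A\subseteq B$. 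The latter is verified by a case analysis on the four integer parameters $(|A\cap T|,\,|A\setminus T|,\,|B\cap T|,\,|B\setminus T|)$; most cases reduce to monotonicity of $\phi_1,\phi_2$, and the tight case{\textemdash}where both the concave and the supermodular parts contribute{\textemdash}is where the algebraic tuning of the coefficients against $\gamma$ is used. Once such a $g_T$ is in hand, one sets $g^\star(S) := \mathbb{E}_T[g_T(S)]$ (with a minor perturbation if needed to preserve the submodularity ratio), invokes hypergeometric concentration and Yao, and the proof is complete.
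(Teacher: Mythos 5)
Your high-level architecture is exactly the paper's: hide a uniformly random $k$-set $T$, build a family $\{g_T\}$ whose members agree with a $T$-oblivious twin on all ``typical'' queries, bound the twin's value on $k$-sets by roughly $1-e^{-\gamma}$, invoke hypergeometric concentration and a union bound to show a polynomial-query algorithm never leaves the typical regime, and finish with Yao. That scaffold is correct and matches the paper. The gap is that you explicitly leave the construction of $g_T$ as a ``plausible template'' without exhibiting or verifying it --- and that construction is the entire substance of the theorem; everything else is, as you say yourself, routine.

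Two concrete points on why the template as stated does not yet close the gap. First, the paper does not directly verify $\gamma$-weak submodularity; it proves the strictly stronger pointwise condition (``$\gamma$-weakly DR'': $f(u\mid A)\geq\gamma f(u\mid B)$ for $A\subseteq B$, $u\notin B$), which implies weak submodularity by a one-line telescoping argument and is far easier to check than your proposed case analysis over all $A\subseteq B$. Second, and more importantly, the actual hard instance in the paper is \emph{not} additive in the two counts $(|S\cap T|,\,|S\setminus T|)$: it is of the form $f_T(S)=1-f_{T,1}(S)\,f_{T,2}(S)\,f_{T,3}(S)$, where $f_{T,1}(S)=\bigl(1-\gamma/(k-g)\bigr)^{\min\{t_T(S),k\}}$ is a geometric decay in a \emph{truncated} count $t_T(S)=|S\setminus T|+\min\{g,|S\cap T|\}$, and $f_{T,2},f_{T,3}$ are clipping factors. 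The multiplicative structure is what lets the marginal of a $T$-element grow (in a controlled, factor-$1/\gamma$-bounded way) as $|S|$ grows, producing submodularity ratio exactly $\gamma$ while remaining monotone and bounded by $1$; it is not clear your additive $\phi_1+\phi_2$ ansatz supports this without effectively re-deriving the same thing. Relatedly, your twin $g^\star=\mathbb{E}_T[g_T]$ is not obviously equal to $g_T$ on balanced queries; the paper instead takes the twin to be $f_\varnothing$ (the construction with $T=\varnothing$) and \emph{builds in} the coincidence $f_T(S)=f_\varnothing(S)$ whenever $|S\cap T|\leq g$ or $|S|$ is large, via the truncated count $t_T$. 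Until you exhibit a specific $g_T$ and verify non-negativity, monotonicity, the $\gamma$-ratio, the $1-e^{-\gamma}+O(\eps)$ cap for the twin, and the agreement-on-typical-queries property, the proof is incomplete precisely at the step you flagged as the ``main obstacle.''
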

\end{mdframed}

As is usual in hardness proofs for submodular functions, the proof is based on constructing a family 
of $\gamma$-weakly submodular functions on which any deterministic algorithm will perform poorly in 
expectation, and then applying Yao's principle. It turns out that, instead of proving 
Theorem~\ref{thm:weak_submodular_hardness}, it is easier to 
prove a stronger theorem given below as Theorem~\ref{thm:weak_DR_hardness}. However, before we 
can present Theorem~\ref{thm:weak_DR_hardness}, we need the following definition (this definition is 
related to a notion called the \emph{DR-ratio} defined by~\citet{KSCT18} for functions over the integer 
lattice).



\begin{definition} \label{def:weak_DR}
A function $f\colon 2^\cN \to \bR$ is $\gamma$-weakly DR if for every two sets $A \subseteq B 
\subseteq \cN$ and element $u \in \cN \setminus B$ it holds that $f(u \mid A) \geq \gamma \cdot f(u 
\mid B)$.
\end{definition}
\begin{theorem} \label{thm:weak_DR_hardness}
For every two constants $\eps > 0$ and $\gamma \in (0, 1]$, no polynomial time algorithm achieves $(1 
- e^{-\gamma} + \eps)$-approximation for the problem of maximizing a non-negative monotone 
$\gamma$-weakly DR function subject to a cardinality constraint in the value oracle model.
\end{theorem}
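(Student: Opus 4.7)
The plan is to apply Yao's minimax principle: to rule out any randomized polynomial-time $(1 - e^{-\gamma} + \eps)$-approximation algorithm, it suffices to exhibit a distribution $\mu$ over non-negative, monotone, $\gamma$-weakly DR instances on which every deterministic polynomial-query algorithm has expected approximation ratio at most $1 - e^{-\gamma} + o(1)$. A natural choice for $\mu$ is a ``planted optimum'' distribution: fix a ground set $\ground$ with $|\ground| = n$ chosen much larger than $k$ and than the algorithm's query budget $T$, draw a planted optimum $A \subseteq \ground$ uniformly among $k$-subsets, and define a hard function $f_A$ so that $A$ is the unique near-optimum.

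The crux of the proof is the construction of $f_A$. One would want a function satisfying (i) non-negativity, monotonicity, and $\gamma$-weak DR, (ii) $f_A(A)$ normalized to $1$ so that $\opt \geq 1$, and (iii) on every query $S$ whose intersection $|S \cap A|$ is close to its typical value $k|S|/n$, $f_A(S)$ depends only on $|S|$ and equals a common symmetric value $\bar f(|S|)$ with $\max_{|S| \leq k} \bar f(S) \leq 1 - e^{-\gamma} + o(1)$. The function must also be ``symmetric on short queries'' (in particular, singletons should have the same value regardless of whether they lie in $A$), so that no short query sequence reveals $A$. The geometric factor $(1 - \gamma/k)^c$ that produces the $1 - e^{-\gamma}$ rate in the analysis of \dgreedy should govern the per-element marginal decay of $f_A$, so that the DR inequality $f_A(u \mid A') \geq \gamma f_A(u \mid B')$ is saturated exactly at ratio $\gamma$ on the relevant nested pairs.

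Given such an $f_A$, the indistinguishability step is standard: by a hypergeometric tail bound and a union bound over the $T$ queries, with probability $1 - o_n(1)$ every query $S_i$ made by a deterministic algorithm satisfies $|S_i \cap A| = (1 \pm o(1)) \cdot k |S_i|/n$. Conditioned on this high-probability event, $f_A(S_i)$ matches (or differs negligibly from) $\bar f(|S_i|)$ independently of $A$, so the algorithm's transcript, and therefore its output $R$, is conditionally independent of $A$; hence $\Exp{f_A(R)} \leq \max_{|S| \leq k} \bar f(S) + o(1) \leq 1 - e^{-\gamma} + o(1)$, whereas $\opt = 1$. Applying Yao's principle then lifts this to randomized algorithms as well.

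The main obstacle is the construction of $f_A$ itself: a pure coverage backbone is fully submodular, and greedy already achieves a $(1 - 1/e)$-approximation on any submodular function, which exceeds the claimed hardness rate of $1 - e^{-\gamma} < 1 - 1/e$ whenever $\gamma < 1$. Producing a $\gamma$-weakly DR function that is actually hard to approximate beyond $1 - e^{-\gamma}$ therefore requires a genuinely non-submodular construction, with marginals that can grow along nested chains by factor up to $1/\gamma$ (the DR ceiling), while still enforcing singleton-level symmetry and the typical-intersection collapse required by (iii). Calibrating this balance so that the DR condition holds on every pair $A' \subseteq B' \subseteq \ground$---not merely on the adversarial pairs used in the symmetry argument---is the delicate technical step.
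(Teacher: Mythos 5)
Your high-level framework is exactly right and mirrors the paper's proof: a planted-optimum distribution $\mu$ over instances $f_T$, indistinguishability of $f_T$ from a ``symmetric'' benchmark on all polynomially many queries via hypergeometric concentration and a union bound, a bound of $1 - e^{-\gamma} + o(1)$ on the symmetric benchmark over feasible sets, and Yao's principle to pass from deterministic to randomized algorithms. Your intuitions about the construction are also on target: the geometric decay $(1 - \gamma/(\cdot))^{\cdot}$ governing the marginal shrinkage, the fact that a pure coverage function cannot be hard beyond $1 - 1/e$, and the need for marginals to actually \emph{grow} along some chains (up to the DR ceiling $1/\gamma$) so that the function is genuinely non-submodular.

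However, you explicitly punt on the construction of $f_T$ itself, and that is not an incidental detail --- it is essentially the entire technical content of this theorem. The surrounding machinery (Yao, hypergeometric tail, union bound) is standard and takes about a page; the construction and the verification that it simultaneously satisfies (i) non-negativity, (ii) monotonicity, (iii) the $\gamma$-weak DR inequality on \emph{every} nested pair, (iv) the normalization $f_T(T) = 1$, (v) the upper bound $f_\varnothing(S) \leq 1 - e^{-\gamma} + O(\eps)$ for all feasible $S$, and (vi) the indistinguishability property $f_T(S) = f_\varnothing(S)$ whenever $|S\cap T|$ is small or $|S|$ is large, occupies the bulk of the hardness section and involves a delicate multi-case analysis. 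The paper's construction has the form $f_T(S) = 1 - f_{T,1}(S)\, f_{T,2}(S)\, f_{T,3}(S)$, where $f_{T,1}$ is the geometric-decay piece $(1 - \gamma/(k-g))^{\min\{t_T(S), k\}}$ built on the capped statistic $t_T(S) = |S\setminus T| + \min\{g, |S\cap T|\}$, and $f_{T,2}, f_{T,3}$ are two linear clamping factors; here $g = \Theta(\eps' k)$ is the indistinguishability threshold. The reason for \emph{three} factors (rather than a single geometric term) is exactly the tension you flag: the geometric piece alone cannot simultaneously achieve $f_T(T)=1$, stay bounded on large sets, and agree with $f_\varnothing$ outside the detection window, so one needs the clamping terms $f_{T,2}$ and $f_{T,3}$ to saturate at zero in the appropriate regimes --- and then one must re-verify $\gamma$-weak DR across all the cases where these pieces switch between their linear and saturated behavior. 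In short: the proposal correctly identifies the roadmap and the obstacle, but it does not clear the obstacle, and without a concrete $f_T$ there is no proof.
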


The following observation shows that every instance of the problem considered by 
Theorem~\ref{thm:weak_DR_hardness} is also an instance of the problem considered by 
Theorem~\ref{thm:weak_submodular_hardness}, and therefore, Theorem~\ref{thm:weak_DR_hardness} 
indeed implies Theorem~\ref{thm:weak_submodular_hardness}.
\begin{observation}
A monotone $\gamma$-weakly DR set function $f\colon 2^\cN \to \nnR$ is also $\gamma$-weakly 
submodular.
\end{observation}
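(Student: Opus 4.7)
The plan is to unfold $f(B) - f(A)$ into a telescoping sum of single-element marginals and then apply the $\gamma$-weakly DR inequality term by term. Concretely, I would enumerate the elements of $B \setminus A$ in an arbitrary order as $u_1, u_2, \ldots, u_m$, define the chain of nested sets $B_0 = A$, $B_i = A \cup \{u_1, \ldots, u_i\}$ so that $B_m = B$, and record the telescoping identity
\[
f(B) - f(A) \;=\; \sum_{i=1}^{m} f(u_i \mid B_{i-1}).
\]

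For each index $i$, by construction $A = B_0 \subseteq B_{i-1}$ and $u_i \notin B_{i-1}$, so Definition~\ref{def:weak_DR} applied to the pair $(A, B_{i-1})$ and the element $u_i$ yields the pointwise bound $f(u_i \mid A) \geq \gamma \cdot f(u_i \mid B_{i-1})$. Summing these bounds over $i$ and substituting in the telescoping identity above gives
\[
\sum_{e \in B \setminus A} f(e \mid A) \;=\; \sum_{i=1}^{m} f(u_i \mid A) \;\geq\; \gamma \sum_{i=1}^{m} f(u_i \mid B_{i-1}) \;=\; \gamma \bigl( f(B) - f(A) \bigr),
\]
which, since $A \subseteq B$ makes $A \cup B = B$, is precisely the defining inequality of $\gamma$-weak submodularity from \eqref{eq:weakly_submodular_def}.

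I do not foresee any real obstacle: the proof is a one-line telescoping argument combined with a pointwise application of weak DR. It is worth noting that the key chain of inequalities uses neither monotonicity nor non-negativity of $f$; these hypotheses enter only to ensure that the notion of $\gamma$-weak submodularity (as stated in the paper for monotone functions) even applies to $f$. Thus the observation holds in slightly greater generality than needed, which is why the argument is so short.
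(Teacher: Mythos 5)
Your proof is correct and takes essentially the same route as the paper's: telescope $f(B) - f(A)$ along a chain of nested sets $B_0 = A \subseteq B_1 \subseteq \dotsb \subseteq B_m = B$, apply the $\gamma$-weakly DR inequality to each single-element marginal, and sum. One small note: the paper's displayed inequality as published has the two sums transposed (it asserts $f(B \mid A) \geq \gamma \sum_i f(u_i \mid A)$, which is the reverse of what $\gamma$-weak submodularity requires), whereas your chain $\sum_i f(u_i \mid A) \geq \gamma \sum_i f(u_i \mid B_{i-1}) = \gamma\bigl(f(B) - f(A)\bigr)$ is stated in the correct direction, and your side remark that monotonicity and non-negativity play no role in the inequality itself is also accurate.
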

\begin{proof}
Consider arbitrary sets $A \subseteq B \subseteq \cN$, and let us denote the elements of the set $B 
\setminus A$ by $u_1, u_2, \dotsc, u_{|B \setminus A|}$ in a fixed arbitrary order. Then,
\[
	f(B \mid A)
	=
	\sum_{i = 1}^{|B \setminus A|} f(u_i \mid A \cup \{u_1, u_2, \dotsc, u_{i-1}\})
	\geq
	\gamma \cdot \sum_{i = 1}^{|B \setminus A|} f(u_i \mid A)
	\enspace.
	\qedhere
\]
\end{proof}

The following proposition is the main technical component used in the proof of 
Theorem~\ref{thm:weak_DR_hardness}. To facilitate the reading, we defer its proof to 
Section~\ref{sec:construction}.
\begin{proposition} \label{prop:construction}
For every value $\eps' \in (0, \nicefrac{1}{6})$, value $\gamma \in (0, 1]$ and integer $k \geq 1/\eps'$, there exists a ground set $\cN$ of size $\lceil 3k / \eps' \rceil$ and a set function $f_T\colon 
2^\cN \to \nnR$ for every set $T \subseteq \cN$ of size at most $k$ such that
\begin{compactenum}[\bf(P1)]
	\item $f_T$ is non-negtive monotone and $\gamma$-weakly DR. \label{property:named_properties}
	\item $f_T(S) \leq 1$ for every set $S \subseteq \cN$, and the inequality holds as an equality for $S = 
	T$ when the size of $T$ is exactly $k$. \label{property:upper_bound}
	\item $f_\varnothing(S) \leq 1 - e^{-\gamma} + 12\eps'$ for every set $S$ of size at most $k$. 
	\label{property:empty_behavior}
	\item $f_T(S) = f_\varnothing(S)$ when $|S| \geq 3k - g$ or $|S \cap T| \leq g$, where $g = \lceil \eps' 
	k + 3k^2 / |\cN|\rceil$. \label{property:different_T_relation}
\end{compactenum}
\end{proposition}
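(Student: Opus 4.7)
My plan is to build $f_T$ as a symmetric-plus-perturbation construction. Let $n = |\cN| = \lceil 3k/\eps'\rceil$. I first define $f_\varnothing(S) = h(|S|)$ with $h$ a piecewise function on $\{0, 1, \dotsc, n\}$: on $\{0, \dotsc, k\}$, set $h(x) = 1 - (1-\gamma/k)^x$, so that $h(k) \leq 1 - e^{-\gamma} + O(\eps')$ using $k \geq 1/\eps'$ and the standard inequality $(1-\gamma/k)^k \geq e^{-\gamma}(1 - \gamma^2/k)$; on $\{k, \dotsc, 3k - g\}$, set $h$ to rise affinely from $h(k)$ to $1$ with slope $(1 - h(k))/(2k - g)$; on $\{3k - g, \dotsc, n\}$, set $h$ identically equal to $1$. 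The $\gamma$-weakly DR condition for $h$ then amounts to checking that the slope in the affine piece is at most $1/\gamma$ times the final geometric-piece slope $(\gamma/k)(1-\gamma/k)^{k-1} \approx (\gamma/k) e^{-\gamma}$ (easily satisfied) and that the zero marginals on the constant tail satisfy $\gamma$-weak DR vacuously. This already gives property (P3), the cap at $1$ delivers (P2) for $T = \varnothing$, and monotonicity and non-negativity are clear.

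For nonempty $T$, I define $f_T(S) = h(|S|) + B(|S|, |S \cap T|)$, where the bonus $B(s, j) = \max(0, j - g) \cdot \rho(s)$ with $\rho\colon \{0, \dotsc, n\} \to \posreals$ supported on $[k, 3k - g]$: $\rho(k) = (1 - h(k))/(k - g)$, decaying linearly to $\rho(3k - g) = 0$, and zero outside that interval. The normalization is chosen precisely so that $B(k, k) = (k - g) \rho(k) = 1 - h(k)$, which yields $f_T(T) = 1$ when $|T| = k$. Property (P4) is then immediate: if $j \leq g$ then $B = 0$, and if $s \geq 3k - g$ then $\rho(s) = 0$, in either case $f_T(S) = h(|S|) = f_\varnothing(S)$. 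Property (P2) reduces to verifying $h(s) + B(s, j) \leq 1$ pointwise, which holds because $B(s, j)$ is linear in $j$, reaches its per-$s$ maximum at $j = k$, and is matched by $h(s) + (k-g)\rho(s)$ growing in $s$ in a controlled way. Monotonicity of $f_T$ reduces to showing that the positive increment of $h$ dominates the possibly-negative increment of $B$ when adding an element; the calculation works out because $\rho$ decays in $s$ at exactly the rate that $h$ grows affinely, so in both cases (adding $u \in T$ or $u \notin T$) the telescoping yields a non-negative increment provided $|S \cap T| \leq k$.

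The main obstacle is verifying the $\gamma$-weakly DR half of property (P1) for general $T$. This requires checking $f_T(u \mid A) \geq \gamma f_T(u \mid B)$ for all $A \subseteq B \subseteq \cN$ and $u \notin B$, which splits into cases based on (i) whether $u \in T$ and (ii) where $(|A|, |A \cap T|)$ and $(|B|, |B \cap T|)$ lie relative to the discovery region $\{(s, j) : g < j \leq k \text{ and } k \leq s < 3k - g\}$. The subtle cases are the boundary ones: for $u \in T$, the marginal can ``jump up'' by $\rho(\cdot)$ when crossing the threshold $j = g$, so if $A$ lies just below the threshold and $B$ just above, the marginal-ratio inequality must be enforced by hand; the key inequality is that the base geometric marginal $(\gamma/k)(1-\gamma/k)^{|A|}$ plus the bonus jump on the left dominates $\gamma$ times the affine marginal plus incremental bonus on the right, and a direct computation shows the ratio is at least $1 + \gamma$, comfortably exceeding $\gamma$. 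Similarly delicate is the crossing at $s = 3k - g$, where $\rho$ falls to zero. The $12\eps'$ slack in (P3) and the lower bound $k \geq 1/\eps'$ provide exactly the room to absorb the $O(\gamma^2/k)$ and $O(\eps')$ error terms that arise in these boundary estimates, so the verification, while involving a lengthy case split, is expected to go through.
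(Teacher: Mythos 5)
Your additive ``base plus bonus'' construction is genuinely different from the paper's, which instead sets $f_T = 1 - f_{T,1}\cdot f_{T,2}\cdot f_{T,3}$ for three carefully chosen multiplicative factors (a geometric-decay factor with ratio $1 - \gamma/(k-g)$ and two affine factors with denominator $k-g$). Unfortunately, the case you flag as ``subtle'' --- where $A$ lies just below the discovery threshold $j = g$ and $B$ just above, with $u \in T$ --- is not merely delicate: it is where the additive construction actually fails $\gamma$-weak DR, and your claim that ``the ratio is at least $1+\gamma$'' there is incorrect.

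Concretely, take $|A| = k-1$, $|A \cap T| = g - 1$, $B = A \cup \{v\}$ for some $v \in T$, and $u \in T\setminus B$. Then the bonus is zero on both sides of the $A$-marginal, so $f_T(u\mid A) = h(k) - h(k-1) = (\gamma/k)(1-\gamma/k)^{k-1} \approx \gamma e^{-\gamma}/k$. On the $B$-side, adding $u$ pushes $j$ from $g$ to $g+1$ and $s$ from $k$ to $k+1$, so
\[
f_T(u \mid B) \;=\; \frac{1-h(k)}{2k-g} + \rho(k+1) \;\approx\; \frac{e^{-\gamma}}{2k} + \frac{e^{-\gamma}}{k} \;=\; \frac{3e^{-\gamma}}{2k}\,,
\]
giving $f_T(u\mid A)/f_T(u\mid B) \approx 2\gamma/3 < \gamma$. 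Already at $\gamma = 1$ the construction is not even submodular. The underlying difficulty is structural: the bonus jump $\rho(k) = (1-h(k))/(k-g)$ is forced to be $\Theta(e^{-\gamma}/k)$ by the normalization $f_T(T) = 1$, which is comparable to (indeed larger than $\gamma$ times) the smallest $h$-marginal $\approx \gamma e^{-\gamma}/k$, so the threshold jump cannot be absorbed no matter how much $O(\eps')$ slack you allocate. Shrinking $\rho(k)$ by a factor of $\gamma$ breaks property (P2). The paper's multiplicative form sidesteps this precisely because the geometric factor $f_{T,1}$ appears in \emph{both} the $A$-marginal (via $\tfrac{\gamma}{k-g}f_{T,1}(A)f_{T,3}(A)$ with $f_{T,3}(A)=1$) and the $B$-marginal (via $\tfrac{1}{k-g}f_{T,1}(B)f_{T,2}(B)\cdot\min\{\cdot,1\}$), so the ratio reduces to $\gamma \cdot f_{T,1}(A)/\big(f_{T,1}(B)\cdot(\text{factors}\le 1)\big) \ge \gamma$ by monotonicity of $f_{T,1}$. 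That cancellation is exactly what your additive ansatz lacks.
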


At this point, let us consider some $\gamma$ value and set $\eps' = \eps/20$. Note that Theorem~\ref{thm:weak_DR_hardness} is trivial for $\eps > 1$, and thus, we may assume $\eps' \in (0, 
\nicefrac{1}{6})$, which implies that there exists a large enough integer $k$ for which $\gamma$, $\eps'$ and $k$ obey all 
the requirements of Proposition~\ref{prop:construction}. From this point on we consider the ground 
set $\cN$ and the functions $f_T$ whose existence is guaranteed by 
Proposition~\ref{prop:construction} for these values of $\gamma$, $\eps'$ and $k$. Let $\tilde{T}$ be 
a random subset of $\cN$ of size $k$ (such subsets exist because $|\cN| > k$). Intuitively, in the rest 
of this section we prove Theorem~\ref{thm:weak_DR_hardness} by showing that the problem 
$\max\{f_{\tilde{T}}(S) \mid S \subseteq \cN, |S| \leq k\}$ is hard in expectation for every algorithm.

Property~\property{property:upper_bound} of Proposition~\ref{prop:construction} shows that the 
optimal solution for the problem  $\max\{f_{\tilde{T}}(S) \mid S \subseteq \cN, |S| \leq k\}$ is $\tilde{T}$. 
Thus, an algorithm expecting to get a good approximation ratio for this problem should extract 
information about the random set $\tilde{T}$. The question is on what sets should the algorithm 
evaluate $f_{\tilde{T}}$ to get such information. Property~\property{property:different_T_relation} of 
the proposition shows that the algorithm cannot get much information about $\tilde{T}$ when querying 
$f_{\tilde{T}}$ on a set $S$ that is either too large or has a too small intersection with $\tilde{T}$. Thus, 
the only way in which the algorithm can get a significant amount of information about $\tilde{T}$ is by 
evaluating $f_{\tilde{T}}$ on a set $S$ that is small and not too likely to have a small intersection with 
$\tilde{T}$. Lemma~\ref{lem:probability} shows that such sets do not exist. However, before we can 
prove Lemma~\ref{lem:probability}, we need the following known lemma.
\begin{lemma}[Proved by~\citet{S13} based on results of~\cite{C79} and~\cite{H63}] \label{lem:hypergeometric}
Consider a population of $N$ balls, out of which $M$ are white. Given a hypergeometric variable $X$ 
measuring the number of white balls obtained by drawing uniformly at random $n$ balls from this 
population, it holds for every $t \geq 0$ that $\Pr[X \geq nM/N + tn] \leq e^{-2t^2n}$.
\end{lemma}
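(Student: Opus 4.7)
The plan is to apply a Chernoff-style exponential-moment argument, reducing the hypergeometric tail bound to a binomial tail bound that can be handled by the classical Hoeffding inequality. For any $s > 0$, Markov's inequality applied to $e^{sX}$ gives
\[
\prob{X \geq nM/N + tn} \leq e^{-s(nM/N + tn)} \cdot \Exp{e^{sX}},
\]
so the task reduces to bounding the moment generating function $\Exp{e^{sX}}$ and then optimizing over $s$.

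The technical heart of the proof --- and the step I expect to require the most care --- is comparing the MGF of the hypergeometric variable $X$ to that of the corresponding binomial $Y \sim \mathrm{Bin}(n, M/N)$ (the number of whites when the $n$ balls are drawn with replacement rather than without). Hoeffding's classical convex-order domination asserts that $\Exp{\phi(X)} \leq \Exp{\phi(Y)}$ for every continuous convex $\phi\colon \reals \to \reals$; the standard proof realizes $X$ as a conditional expectation of a permuted with-replacement sample and then appeals to Jensen's inequality. Specializing to $\phi(x) = e^{sx}$, which is convex for every $s \in \reals$, immediately yields $\Exp{e^{sX}} \leq \Exp{e^{sY}}$.

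With the MGF domination in hand, the remainder is routine. Since $Y$ is a sum of $n$ i.i.d.\ Bernoulli($M/N$) random variables each bounded in $[0,1]$, Hoeffding's inequality (which itself comes from bounding the MGF of each centered summand and optimizing over $s$) gives
\[
\prob{Y \geq nM/N + tn} \leq e^{-2t^2 n}.
\]
Plugging the MGF comparison into the Chernoff bound displayed above and choosing the same $s$ that is optimal in the binomial setting produces an identical exponent $-2t^2 n$ for $X$, which is exactly the claimed inequality. In summary, the proof factors cleanly into (i) Hoeffding's convex-order comparison of sampling without versus with replacement, which is the substantive probabilistic input, and (ii) the textbook Hoeffding bound for a sum of bounded independent random variables; only step (i) requires nontrivial argument, while step (ii) is completely standard.
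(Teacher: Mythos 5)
Your argument is correct, and since the paper gives no proof of this lemma but simply cites it to Skala, Chv\'atal, and Hoeffding, there is no in-paper proof to diverge from; your route is exactly Hoeffding's original 1963 one. You correctly reduce via Markov on $e^{sX}$, invoke Hoeffding's convex-order comparison $\Exp{\phi(X)} \le \Exp{\phi(Y)}$ between sampling without and with replacement to dominate the hypergeometric MGF by the binomial MGF uniformly in $s$, and then close with the standard Hoeffding bound for a sum of $n$ independent $[0,1]$-valued variables, giving the exponent $-2t^2 n$.
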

\begin{lemma} \label{lem:probability}
For every set $S \subseteq \cN$ whose size is less than $3k - g$, $\Pr[|S \cap \tilde{T}| \leq g] \geq 1 - 
e^{-\Omega(\eps^3|\cN|)}$.
\end{lemma}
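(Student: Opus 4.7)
The plan is to recognize $|S \cap \tilde T|$ as a hypergeometric random variable and invoke Lemma~\ref{lem:hypergeometric} directly. Since $\tilde T$ is a uniformly random $k$-subset of $\cN$, we can think of the $|S|$ elements of $S$ as ``white balls'' inside a population of $N = |\cN|$ balls, and $\tilde T$ as a sample of $n = k$ balls drawn without replacement. Then $X := |S \cap \tilde T|$ is hypergeometric with $M = |S|$ white balls, so $\Exp{X} = k|S|/|\cN|$.

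Next I would bound this mean using the hypothesis on $|S|$. The hypothesis $|S| < 3k - g$ implies in particular $|S| < 3k$, so
\[
\Exp{X} \;=\; \frac{k|S|}{|\cN|} \;<\; \frac{3k^2}{|\cN|}.
\]
Recalling $g \geq \eps' k + 3k^2/|\cN|$, the ``gap'' between $g$ and the mean is at least $\eps' k$; concretely, if I set $t := (g - \Exp{X})/k$, then $t \geq \eps'$ and $\Exp{X} + tk \leq g$. Applying Lemma~\ref{lem:hypergeometric} with this value of $t$ gives
\[
\Pr[X > g] \;\leq\; \Pr\!\left[X \geq \tfrac{kM}{N} + tk\right] \;\leq\; e^{-2t^2 k} \;\leq\; e^{-2\eps'^2 k}.
\]

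Finally I would convert the exponent into the form claimed. Since $|\cN| = \lceil 3k/\eps' \rceil$, we have $k = \Theta(\eps' |\cN|)$, and therefore $\eps'^2 k = \Theta(\eps'^3 |\cN|)$. Because $\eps' = \eps/20$, this is $\Theta(\eps^3 |\cN|)$, so
\[
\Pr[|S \cap \tilde T| \leq g] \;\geq\; 1 - e^{-2\eps'^2 k} \;=\; 1 - e^{-\Omega(\eps^3 |\cN|)},
\]
as required. I do not anticipate any real obstacle here: the statement is essentially a routine concentration bound, and the only subtlety is making sure that the assumption $|S| < 3k - g$ (rather than just $|S| < 3k$) is not needed, which is the case since $g \geq 0$ makes the weaker inequality sufficient for the mean bound.
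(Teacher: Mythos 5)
Your proof is correct and follows essentially the same route as the paper: model $|S\cap\tilde T|$ as a hypergeometric variable, apply Lemma~\ref{lem:hypergeometric} with a threshold shift of at least $\eps' k$ beyond the mean (the paper fixes $t=\eps'$ directly while you take $t=(g-\Exp{X})/k\geq\eps'$, which is equivalent), and then translate $\eps'^2 k$ into $\Omega(\eps^3|\cN|)$ using $|\cN|=\lceil 3k/\eps'\rceil$ and $\eps'=\eps/20$. Your closing remark is also accurate: the full strength of $|S|<3k-g$ is not needed here (only $|S|\leq 3k$); the tighter hypothesis exists so that the lemma dovetails with Property~\property{property:different_T_relation} in Corollary~\ref{cor:equality_probability}.
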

\begin{proof}
%
The distribution of $|S \cap \tilde{T}|$ is hypergeometric. More specifically, it is equivalent to drawing 
$k$ balls from a population of $|\cN|$ balls, of which only $|S|$ are white. Thus, by 
Lemma~\ref{lem:hypergeometric}, for every $t \geq 0$ we have
\[
	\Pr[|S \cap \tilde{T}| \geq k|S|/|\cN| + tk]
	\leq
	e^{-2t^2k}
	\enspace.
\]
Setting $t = \eps'$ and observing that $|S| \leq 3k - g \leq 3k$, the last inequality yields
\[
	\Pr[|S \cap \tilde{T}| \geq 3k^2/|\cN| + \eps' k]
	\leq
	\exp\left(-2(\eps')^2k\right)
	=
	\exp\left(-\frac{\eps^2k}{200}\right)
	\enspace.
\]
The lemma now follows since $g \geq 3k^2/|\cN| + \eps' k$, and (by the definition of $\cN$)
\[
	k
	\geq
	\frac{\eps'(|\cN| - 1)}{3}
	=
	\frac{\eps(|\cN| - 1)}{60}
	=
	\Omega(\eps|\cN|)
	\enspace.
	\qedhere
\]
\end{proof}
\begin{corollary} \label{cor:equality_probability}
For every set $S \subseteq \cN$, $\Pr[f_\varnothing(S) = f_{\tilde{T}}(S)] \geq 1 - 
e^{-\Omega(\eps^3|\cN|)}$.
\end{corollary}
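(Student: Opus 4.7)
The plan is to derive the corollary almost immediately by combining Property \property{property:different_T_relation} of Proposition~\ref{prop:construction} with Lemma~\ref{lem:probability}, splitting on the size of $S$. Recall that Property \property{property:different_T_relation} asserts $f_T(S) = f_\varnothing(S)$ whenever either $|S| \geq 3k - g$ or $|S \cap T| \leq g$. Thus, to obtain $f_\varnothing(S) = f_{\tilde{T}}(S)$ it suffices to show that at least one of these two conditions holds with high probability over the random choice of $\tilde{T}$.

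First I would handle the easy case $|S| \geq 3k - g$. Here the first clause of Property \property{property:different_T_relation} applies deterministically, so $f_\varnothing(S) = f_{\tilde{T}}(S)$ with probability $1$, and there is nothing to prove. Next I would turn to the case $|S| < 3k - g$. This is exactly the regime covered by Lemma~\ref{lem:probability}, which guarantees $\Pr[|S \cap \tilde{T}| \leq g] \geq 1 - e^{-\Omega(\eps^3 |\cN|)}$. On the event $\{|S \cap \tilde{T}| \leq g\}$ the second clause of Property \property{property:different_T_relation} (applied with $T = \tilde{T}$) yields $f_{\tilde{T}}(S) = f_\varnothing(S)$. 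Hence the same lower bound on the probability transfers to the event $\{f_\varnothing(S) = f_{\tilde{T}}(S)\}$.

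Combining the two cases gives the claimed bound uniformly in $S$. There is really no obstacle here: the whole content of the corollary has been packaged into Lemma~\ref{lem:probability} (where the hypergeometric concentration via Lemma~\ref{lem:hypergeometric} did the real work) and into Property~\property{property:different_T_relation} of the construction; the corollary itself is just the bookkeeping step that stitches them together. The proof should therefore be just a few lines of case analysis.
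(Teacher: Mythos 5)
Your proposal is correct and follows exactly the same two-case argument as the paper: when $|S| \geq 3k - g$ Property~\property{property:different_T_relation} gives the equality deterministically, and otherwise Lemma~\ref{lem:probability} plus the second clause of that property gives the high-probability bound. This matches the paper's proof in both structure and substance.
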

\begin{proof}
If $|S| \geq 3k - g$, then the corollary follows from Property~\property{property:different_T_relation} of 
Proposition~\ref{prop:construction}. Otherwise, it follows by combining this property with 
Lemma~\ref{lem:probability}.
\end{proof}

Using the above results, we are now ready to prove an hardness result for deterministic algorithms.
\begin{lemma} \label{lem:hardness_deterministinc}
Consider an arbitrary deterministic algorithm $ALG$ for the problem $\max\{f(S) \mid S \subseteq \cN, 
|S| \leq k\}$ whose time complexity is bounded by some polynomial function $C(|\cN|)$. Then, there is 
a large enough value $k$ that depends only on $C(\cdot)$ and $\eps$ such that, given the random 
instance $\max\{f_{\tilde{T}}(S) \mid S \subseteq \cN, |S| \leq k\}$ of the above problem,  the expected 
value of the output set of $ALG$ is no better than $1 - e^{-\gamma} + \eps$.
\end{lemma}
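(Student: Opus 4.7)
The plan is to exploit Corollary~\ref{cor:equality_probability} along the deterministic execution trace of $ALG$. Since $ALG$ is deterministic, its run against the fixed oracle $f_\varnothing$ produces a completely determined sequence of queries $S_1, \dotsc, S_q$ with $q \leq C(|\cN|)$ and a fixed output set $S^\star$; crucially, these sets depend only on $ALG$, $\cN$, and $k$, and not on the random choice of $\tilde T$. I would therefore first fix this ``reference trace'' and then argue that, with overwhelming probability over $\tilde T$, the actual run of $ALG$ against $f_{\tilde T}$ is indistinguishable from this reference trace.

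Applying Corollary~\ref{cor:equality_probability} to each of the (at most) $q+1$ sets $S_1, \dotsc, S_q, S^\star$ and taking a union bound, the ``agreement'' event $E$ on which $f_\varnothing(S) = f_{\tilde T}(S)$ holds simultaneously for every one of these sets has probability at least $1 - (q+1) e^{-\Omega(\eps^3|\cN|)}$. On $E$, a straightforward induction on the query index shows that the execution of $ALG$ with oracle access to $f_{\tilde T}$ must coincide query-by-query with its execution against $f_\varnothing$: determinism makes the first query identical, and if the first $i$ queries and their answers agree then the $(i+1)$-st query, being a deterministic function of the previous answers, must also agree. Consequently $ALG$ returns $S^\star$ against $f_{\tilde T}$ as well, and its value satisfies $f_{\tilde T}(S^\star) = f_\varnothing(S^\star) \leq 1 - e^{-\gamma} + 12\eps'$ by Property~\property{property:empty_behavior}, using $|S^\star| \leq k$.

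On the complement of $E$, Property~\property{property:upper_bound} supplies the trivial bound $f_{\tilde T}(ALG(\tilde T)) \leq 1$, so in expectation
\[
\Exp{f_{\tilde T}(ALG(\tilde T))} \leq 1 - e^{-\gamma} + 12\eps' + (C(|\cN|)+1) e^{-\Omega(\eps^3|\cN|)} \enspace.
\]
Since $C(\cdot)$ is polynomial and $|\cN| = \Theta(k/\eps')$, choosing $k$ sufficiently large (depending only on $C(\cdot)$ and $\eps$) drives the last term below $\eps'$, yielding the bound $1 - e^{-\gamma} + 13\eps' = 1 - e^{-\gamma} + 13\eps/20 < 1 - e^{-\gamma} + \eps$ claimed by the lemma. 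The only delicate point is the coupling argument in the second paragraph: the queries of $ALG$ a priori depend on oracle responses, so one must use determinism to lift pointwise agreement on the (trajectory-independent) reference sets into full trajectory agreement; once that is in place, what remains is routine union-bound and parameter-choice bookkeeping.
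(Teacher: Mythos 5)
Your proof is correct and follows essentially the same argument as the paper: fix the deterministic reference trace against $f_\varnothing$, union-bound over the (at most $C(|\cN|)+1$) queries plus output via Corollary~\ref{cor:equality_probability}, argue by induction and determinism that on the high-probability agreement event the execution against $f_{\tilde T}$ is identical, and then combine Properties \property{property:empty_behavior} and \property{property:upper_bound} with the parameter choice $\eps' = \eps/20$. The paper states the execution-coincidence step a bit more tersely than your explicit induction, but the decomposition, the key lemmas invoked, and the final bookkeeping all match.
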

\begin{proof}
Let $S_1, S_2, \dotsc, S_\ell$ be the sets on which $ALG$ evaluate $f_\varnothing$ when it is given the 
instance $\max\{f_\varnothing(S) \mid S \subseteq \cN, |S| \leq k\}$, and $S_{\ell + 1}$ be its output set 
given this instance. Let $\cE$ be the event that $f
_\varnothing(S_i) = f_{\tilde{T}}(S_i)$ for every $1 \leq i \leq \ell + 1$. By combining 
Corollary~\ref{cor:equality_probability} with the union bound, we get that
\[
	\Pr[\cE]
	\geq
	1 - (\ell + 1) \cdot e^{-\Omega(\eps^3|\cN|)}
	\geq
	1 - [C(|\cN|) + 1] \cdot e^{-\Omega(\eps^3|\cN|)})
	\enspace,
\]
where the second inequality holds since the time complexity of an algorithm upper bounds the number 
of sets on which it may evaluate $f_\varnothing$. Since $C(|\cN|)$ is a polynomial function, by making 
$k$ large enough, we can make $\cN$ large enough to guarantee that $C(|\cN|) \cdot 
e^{-\Omega(\eps^3|\cN|)} \leq \eps/20$, and thus, $\Pr[\cE] \geq 1 - \eps/20$.

When the event $\cE$ happens, the values that $ALG$ gets when evaluating $f_{\tilde{T}}$ on the sets 
$S_1, S_2, \dotsc,\allowbreak S_\ell$ is equal to the values that it would have got if the objective 
function was $f_\varnothing$. Thus, in this case $ALG$ follows the same execution path as when it 
gets $f_\varnothing$, and outputs $S_{\ell + 1}$ whose value is
\[
	f_{\tilde{T}}(S_{\ell + 1})
	=
	f_{\varnothing}(S_{\ell + 1})
	\leq
	1 - e^{-\gamma} + 12\eps'
	=
	1 - e^{-\gamma} + 3\eps/5
	\enspace,
\]
where the inequality holds by Property~\property{property:empty_behavior} of 
Proposition~\ref{prop:construction} since the output set $S_{\ell + 1}$ must be a feasible set, and thus, 
of size at most $k$. When the event $\cE$ does not happen, we can still upper bound the value of the 
output set of $ALG$ by $1$ using Property~\property{property:upper_bound} of the same proposition. 
Thus, if we denote by $R$ the output set of $ALG$, then, by the law of iterated expectations,
\begin{align*}
	\bE[f_{\tilde{T}}(R)]
	={} &
	\Pr[\cE] \cdot \bE[f_{\tilde{T}}(S_{\ell + 1}) \mid \cE] + \Pr[\neg \cE] \cdot \bE[f_{\tilde{T}}(R) \mid \neg 
	\cE]\\
	\leq{} &
	1 \cdot (1 - e^{-\gamma} + 3\eps/5) + (\eps/20) \cdot 1
	=
	1 - e^{-\gamma} + 13\eps/20
	\leq
	1 - e^{-\gamma} + \eps
	\enspace.
	\qedhere
\end{align*}
\end{proof}

Lemma~\ref{lem:hardness_deterministinc} shows that there is a single distribution of instances which is 
hard for every deterministic algorithm whose time complexity is bounded by a polynomial function 
$C(|\cN|)$. Since a randomized algorithm whose time complexity is bounded by $C(|\cN|)$ is a 
distribution over deterministic algorithms of this kind, by Yao's principle, 
Lemma~\ref{lem:hardness_deterministinc} yields the next corollary.

\begin{corollary} \label{cor:hardness_randomized}
Consider an arbitrary algorithm $ALG$ for the problem $\max\{f(S) \mid S \subseteq \cN, |S| \leq k\}$ 
whose time complexity is bounded by some polynomial function $C(|\cN|)$. Then, there is a large 
enough value $k$ that depends only on $C(\cdot)$ such that, for some set $T \subseteq \cN$ of size 
$k$, given the instance $\max\{f_T(S) \mid S \subseteq \cN, |S| \leq k\}$ of the above problem,  the 
expected value of the output set of $ALG$ is no better than $1 - e^{-\gamma} + \eps$.
\end{corollary}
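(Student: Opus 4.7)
The plan is to deduce this corollary from Lemma~\ref{lem:hardness_deterministinc} by a standard application of Yao's minimax principle. Recall that any randomized algorithm $ALG$ with time complexity bounded by $C(|\cN|)$ can be viewed as a distribution $\mu$ over deterministic algorithms, each of which has time complexity also bounded by $C(|\cN|)$ (since for each fixing of the internal random bits we get a deterministic procedure that runs within the same time budget). This is the only structural fact about randomized algorithms that I would need.

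First, I would fix an arbitrary randomized algorithm $ALG$ as above, and consider running it on the random instance $\max\{f_{\tilde{T}}(S) \mid S \subseteq \cN, |S| \leq k\}$, where $\tilde{T}$ is the uniformly random size-$k$ subset of $\cN$ used in Lemma~\ref{lem:hardness_deterministinc}. Write $R$ for the output of $ALG$, and let $A \sim \mu$ denote a deterministic algorithm sampled from the distribution corresponding to $ALG$, with $R_A$ its output on $f_{\tilde{T}}$. Then by the tower property,
\[
	\bE[f_{\tilde{T}}(R)]
	= \bE_{A \sim \mu}\bigl[\bE_{\tilde{T}}[f_{\tilde{T}}(R_A)]\bigr]
	\leq 1 - e^{-\gamma} + \eps,
\]
where the inequality applies Lemma~\ref{lem:hardness_deterministinc} pointwise to each deterministic $A$ in the support of $\mu$ (choosing $k$ large enough in terms of $C(\cdot)$ and $\eps$, as provided by the lemma).

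Next, I would extract a fixed bad instance by a simple averaging argument: swap the order of expectations to write
\[
	\bE_{\tilde{T}}\bigl[\bE_A[f_{\tilde{T}}(R_A)]\bigr]
	\leq 1 - e^{-\gamma} + \eps,
\]
which by the probabilistic method guarantees the existence of some set $T \subseteq \cN$ of size $k$ (in the support of $\tilde{T}$) for which the inner expectation $\bE_A[f_T(R_A)]$, i.e., the expected value of $ALG$'s output on the fixed instance $\max\{f_T(S) \mid S \subseteq \cN, |S| \leq k\}$, is at most $1 - e^{-\gamma} + \eps$. This is exactly the statement of the corollary.

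The proof has no real obstacle; it is a textbook application of Yao's principle. The one bookkeeping point to be careful about is that the constant $k$ produced by Lemma~\ref{lem:hardness_deterministinc} depends on the time bound $C(\cdot)$, but since every deterministic algorithm in the support of $\mu$ obeys the \emph{same} bound $C(|\cN|)$, a single threshold $k$ works uniformly across the support, which is what allows the pointwise application in the first display above.
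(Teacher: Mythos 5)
Your proof is correct and takes essentially the same route as the paper, which simply invokes Yao's principle to pass from the deterministic hardness of Lemma~\ref{lem:hardness_deterministinc} to the randomized case; your write-up merely spells out the interchange of expectations and the averaging step that extracts a fixed hard set $T$.
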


Theorem~\ref{thm:weak_DR_hardness} now follows from Corollary~\ref{cor:hardness_randomized} 
because Property~\property{property:upper_bound} shows that the optimal solution for the instance 
$\max\{f_T(S) \mid S \subseteq \cN, |S| \leq k\}$ mentioned by this corollary has a value of $1$, and 
Property~\property{property:named_properties} of the same proposition shows that this instance is an 
instance of the problem of maximizing a non-negative monotone $\gamma$-weakly-DR function 
subject to a cardinality constraint.

\subsection{Proof of Proposition~\ref{prop:construction}} \label{sec:construction}

In this section we prove Proposition~\ref{prop:construction}. We begin the proof by defining the 
function $f_T$ whose existence is guaranteed by the proposition. To define $f_T$, we first need to 
define the following four helper functions. Note that in $f_{T, 2}$ we use the notation $[x]^+$ to 
denote the maximum between $x$ and $0$.
\begin{center}\begin{tabular}{ll}
\textbullet\hspace{2mm} $\displaystyle
	t_T(S)
	\triangleq
	|S \setminus T| + \min\{g, |S \cap T|\}
$
&
\textbullet\hspace{2mm} $\displaystyle
	f_{T,2}(S) \triangleq 1 - \frac{\min\{[t_T(S) - k]^+, k - g\}}{k - g}
$\\[5mm]
\textbullet\hspace{2mm} $\displaystyle
	f_{T,1}(S) \triangleq \left(1 - \frac{\gamma}{k - g}\right)^{\min\{t_T(S), k\}}
$
&
\textbullet\hspace{2mm} $\displaystyle
	f_{T,3}(S) \triangleq 1 - \frac{\min\{|S| - t_T(S), k - g\}}{k - g}
	\enspace.
$
\end{tabular}\end{center}

Using these helper functions, we can now define $f_T$ for every set $S \subseteq \cN$ by
\[
	f_T(S)
	\triangleq
	1 - f_{T,1}(S) \cdot f_{T,2}(S) \cdot f_{T,3}(S)
	\enspace.
\]

In the rest of this section we show that the function $f_T$ constructed this way obeys all the 
properties guaranteed by Proposition~\ref{prop:construction}. We begin with the following technical 
observation that comes handy in some of our proofs.
\begin{observation} \label{obs:g_smaller_k}
$g \leq 2\eps' k + 1 \leq \min\{k - 2, 3\eps' k\}$.
\end{observation}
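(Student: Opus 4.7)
The plan is to verify the three inequalities separately, using only the definitions $g = \lceil \eps' k + 3k^2/|\cN|\rceil$ and $|\cN| = \lceil 3k/\eps'\rceil$ together with the hypotheses $\eps' \in (0, \nicefrac{1}{6})$ and $k \geq 1/\eps'$ from Proposition~\ref{prop:construction}.

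First I would bound the ``$3k^2/|\cN|$'' term. Since $|\cN| = \lceil 3k/\eps' \rceil \geq 3k/\eps'$, we get $3k^2 / |\cN| \leq \eps' k$. Combining with the ceiling identity $\lceil x \rceil \leq x + 1$ gives
\[
	g = \lceil \eps' k + 3k^2/|\cN| \rceil \leq \eps' k + 3k^2/|\cN| + 1 \leq 2\eps' k + 1,
\]
which is the leftmost inequality.

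Next, I would handle the two cases of the minimum. For $2\eps' k + 1 \leq 3\eps' k$, rearranging gives $1 \leq \eps' k$, which is precisely the standing hypothesis $k \geq 1/\eps'$. For $2\eps' k + 1 \leq k - 2$, rearranging gives $3 \leq (1 - 2\eps')k$; since $\eps' < \nicefrac{1}{6}$ we have $1 - 2\eps' > \nicefrac{2}{3}$, and since $k \geq 1/\eps' > 6$ we conclude $(1 - 2\eps')k > \nicefrac{2}{3} \cdot 6 = 4 \geq 3$.

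There is no real obstacle here; the only slightly subtle point is remembering to absorb the $+1$ coming from the ceiling into the factor $\eps' k$, which is exactly where the hypothesis $k \geq 1/\eps'$ is used.
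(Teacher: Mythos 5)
Your proof is correct and follows essentially the same route as the paper's: the leftmost inequality is obtained exactly as the paper does it, by bounding $3k^2/|\cN| \leq \eps' k$ and absorbing the ceiling into a $+1$, and you simply spell out the two elementary comparisons that the paper dismisses as following ``immediately'' from $k \geq 1/\eps'$ and $\eps' < \nicefrac{1}{6}$.
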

\begin{proof}
The second inequality of the observation follows immediately from the assumptions of 
Proposition~\ref{prop:construction} regarding $k$ and $\eps'$ (\ie, the assumptions that $k \geq 1 / \eps'$ and $\eps' \in (0, \nicefrac{1}{6})$). Thus, we only need to prove the first inequality. 
Since $|\cN| \geq 3k / \eps'$,
\[
	g
	=
	\left \lceil \eps' k + \frac{3k^2}{|\cN|} \right\rceil
	\leq
	\eps' k + \frac{3k^2}{3k / \eps'} + 1
	=
	2\eps' k + 1
	\enspace.
	\qedhere
\]
\end{proof}

The next three lemmata prove together Property~\property{property:named_properties} of 
Proposition~\ref{prop:construction}.

\begin{lemma} \label{lem:non_negative}
The outputs of the functions $f_{T,1}$, $f_{T,2}$ and $f_{T,3}$ are always within the range $[0, 1]$, 
and thus, $f_T$ is non-negative.
\end{lemma}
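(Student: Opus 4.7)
The plan is to verify the $[0,1]$ range for each helper function separately, after which non-negativity of $f_T$ follows immediately since $f_T(S) = 1 - f_{T,1}(S) \cdot f_{T,2}(S) \cdot f_{T,3}(S)$ and the product of three numbers in $[0,1]$ lies in $[0,1]$.

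For $f_{T,1}$, I would check that the base $1 - \gamma/(k-g)$ lies in $[0,1]$ and then observe that the exponent $\min\{t_T(S), k\}$ is a non-negative integer. For the base, I need $k - g \geq \gamma$; since $\gamma \in (0,1]$, it suffices to show $k - g \geq 1$, which follows directly from Observation~\ref{obs:g_smaller_k} (which gives $g \leq k - 2$). The exponent is non-negative because $t_T(S) \geq 0$ by its definition as a sum of non-negative quantities.

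For $f_{T,2}$, the numerator $\min\{[t_T(S) - k]^+, k - g\}$ is clearly in $[0, k-g]$ by construction (it is the minimum of a non-negative quantity with $k-g$, and $k-g > 0$ by Observation~\ref{obs:g_smaller_k}), so the subtracted fraction lies in $[0,1]$ and hence so does $f_{T,2}(S)$. For $f_{T,3}$, the only thing to check is that $|S| - t_T(S) \geq 0$. This follows from the bound
\[
    t_T(S) = |S \setminus T| + \min\{g, |S \cap T|\} \leq |S \setminus T| + |S \cap T| = |S|,
\]
after which the same argument as for $f_{T,2}$ applies.

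No step here is a real obstacle; the only subtlety is ensuring $k - g \geq \gamma$ so that the base of $f_{T,1}$ is non-negative, and this is handled cleanly by invoking Observation~\ref{obs:g_smaller_k}. Once each $f_{T,i}(S) \in [0,1]$, their product is in $[0,1]$, so $f_T(S) = 1 - f_{T,1}(S) f_{T,2}(S) f_{T,3}(S) \in [0,1]$, establishing in particular non-negativity.
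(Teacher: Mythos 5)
Your proof is correct and follows essentially the same route as the paper's: verify the $[0,1]$ range for each of $f_{T,1}$, $f_{T,2}$, $f_{T,3}$ separately (using Observation~\ref{obs:g_smaller_k} to control $k-g$, $t_T(S) \geq 0$, and $t_T(S) \leq |S|$), then conclude $f_T = 1 - f_{T,1}\cdot f_{T,2}\cdot f_{T,3} \geq 0$. The only cosmetic difference is that you spell out the inequality $t_T(S) \leq |S|$ explicitly, which the paper leaves as an observation from the definition.
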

\begin{proof}
We prove the lemma for every one of the functions $f_{T,1}$, $f_{T,2}$ and $f_{T,3}$ separately.
\begin{itemize}
	\item Let $b = 1 - \gamma / (k - g)$. One can observe that $f_{T, 1}$ is defined as $b$ to the power 
	of $\min\{t_T(S), k\}$. Thus, to show that the value of $f_{T, 1}$ always belongs to the range $[0, 1]$, 
	it suffices to prove that $b \in (0, 1]$ and $\min\{t_T(S), k\}$ is non-negative. The first of these claims 
	holds since $\gamma \in (0, 1]$ by assumption and $k - g \geq 2$ by 
	Observation~\ref{obs:g_smaller_k}, and the second claim can be verified by looking at the definition 
	of $t_T(S)$ and noting that $g$ must be positive.
	\item Since $k - g \geq 0$ by Observation~\ref{obs:g_smaller_k}, $\min\{[t_T(S) - k]^+, k - g\} \in [0, k 
	- g]$. Plugging this result into the definition of $f_{T, 2}$ yields that the value of $f_{T, 2}$ always 
	belongs to $[0, 1]$.
	\item Note that the definition of $t_T(S)$ implies $t_T(S) \leq |S|$. Together with the inequality $k - g 
	\geq 0$, which holds by Observation~\ref{obs:g_smaller_k}, this guarantees $\min\{|S| - t_T(S), k - 
	g\} \in [0, k - g]$. Plugging this result into the definition of $f_{T, 3}$ yields that the value of $f_{T, 
	3}$ always belongs to $[0, 1]$. \qedhere
\end{itemize}
\end{proof}

We say that a set function $h\colon 2^\cN \to \bR$ is monotonically decreasing if $f(A) \geq f(B)$ for 
every two sets $A \subseteq B \subseteq \cN$.

\begin{lemma} \label{lem:monotonicity}
The functions $f_T$ and $|S| - t_T(S)$ are monotone and the functions $f_{T,1}$, $f_{T,2}$ and 
$f_{T,3}$ are monotonically decreasing.
\end{lemma}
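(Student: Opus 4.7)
The plan is to first understand how the auxiliary quantity $t_T(S)$ changes when a single element is added to $S$, then derive the monotonicity properties of $f_{T,1},f_{T,2},f_{T,3}$ as simple monotone transformations, and finally combine them to handle $f_T$.

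The core technical step will be a short case analysis on $t_T(e\mid S)$. Adding an element $e\notin S$ affects $t_T$ in one of three ways: (i) if $e\notin T$, then $|S\setminus T|$ grows by $1$ while $\min\{g,|S\cap T|\}$ is unchanged, so $t_T$ grows by $1$; (ii) if $e\in T$ and $|S\cap T|<g$, then $|S\setminus T|$ is unchanged and the $\min$ grows by $1$, so $t_T$ grows by $1$; (iii) if $e\in T$ and $|S\cap T|\geq g$, then both terms are unchanged and $t_T$ stays the same. Therefore $t_T$ is non-decreasing, and since $|S|$ always grows by exactly $1$ while $t_T$ grows by either $0$ or $1$, the difference $|S|-t_T(S)$ is also non-decreasing, which handles the second half of the first claim.

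Next I would verify that each $f_{T,i}$ is monotonically decreasing by reading off its definition. Write $b\triangleq 1-\gamma/(k-g)$; by Observation~\ref{obs:g_smaller_k} and the assumption $\gamma\in(0,1]$, we have $b\in(0,1]$. Since $\min\{t_T(S),k\}$ is non-decreasing in $S$ (composition of $t_T$ with the non-decreasing map $x\mapsto\min\{x,k\}$), $f_{T,1}(S)=b^{\min\{t_T(S),k\}}$ is non-increasing. For $f_{T,2}$, the map $x\mapsto\min\{[x-k]^+,k-g\}$ is non-decreasing, so $\min\{[t_T(S)-k]^+,k-g\}$ is non-decreasing in $S$, and the same is true of $\min\{|S|-t_T(S),k-g\}$ (using the monotonicity of $|S|-t_T(S)$ established above). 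Consequently $f_{T,2}$ and $f_{T,3}$, being $1$ minus a non-decreasing non-negative function divided by the positive constant $k-g$, are monotonically decreasing.

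Finally, for $f_T$, note that Lemma~\ref{lem:non_negative} gives $f_{T,1}(S), f_{T,2}(S), f_{T,3}(S)\in[0,1]$, and the product of non-negative monotonically decreasing functions is itself monotonically decreasing: for $A\subseteq B$ one inserts the $B$-values one factor at a time, using non-negativity of the remaining factors to preserve the inequality. Hence $f_{T,1}\cdot f_{T,2}\cdot f_{T,3}$ is monotonically decreasing, and so $f_T=1-f_{T,1}\cdot f_{T,2}\cdot f_{T,3}$ is monotone (non-decreasing). The only delicate point is case (iii) in the $t_T$ analysis, which is precisely what makes $|S|-t_T(S)$ genuinely grow in a regime where $t_T$ has saturated; once that is noted, everything else is a direct consequence of the structure of the definitions.
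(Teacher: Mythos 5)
Your proof is correct and follows the same overall structure as the paper's: first establish monotonicity of $t_T$ and of $|S|-t_T(S)$, then deduce that $f_{T,1},f_{T,2},f_{T,3}$ are monotonically decreasing, and finally combine these facts via the non-negativity of each factor (Lemma~\ref{lem:non_negative}) to conclude that the product is monotonically decreasing and hence $f_T$ is monotone. The one difference is how you show $|S|-t_T(S)$ is monotone: you do a three-way case analysis on how adding a single element $e$ affects $t_T$ (grows by $1$ if $e\notin T$ or $e\in T$ with $|S\cap T|<g$, unchanged if $e\in T$ with $|S\cap T|\geq g$), while the paper instead observes the closed-form identity $|S|-t_T(S)=|S\cap T|-\min\{g,|S\cap T|\}=[|S\cap T|-g]^+$, whose monotonicity in $S$ is immediate. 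Both arguments are valid and short; the algebraic identity is slightly slicker, while your elementwise case analysis simultaneously reproves monotonicity of $t_T$ itself and makes more explicit the saturation regime that you correctly flag as the delicate point.
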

\begin{proof}
It immediately follows from the definition of $t_T(S)$ that it is monotone. Additionally, $|S| - t_T(S)$ is 
a monotone function since it is equal to
\[
	|S| - t_T(S)
	=
	|S \cap T| - \min\{g, |S \cap T|\}
	=
	[|S \cap T| - g]^+
	\enspace.
\]
Plugging these observations into the definitions of $f_{T,1}$, $f_{T,2}$ and $f_{T,3}$ yields that these 
three functions are all monotonically decreasing. Since these three functions are also non-negative by 
Lemma~\ref{lem:non_negative}, this implies that $f_{T,1}(S) \cdot f_{T,2}(S) \cdot f_{T, 3}(S)$ is also a 
monotonically decreasing function, and thus, $f_T$ is a monotone function since it is equal to $1$ 
minus this product.
\end{proof}

\begin{lemma}
$f_T$ is $\gamma$-weakly-DR.
\end{lemma}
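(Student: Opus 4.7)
The plan is to verify the definition of $\gamma$-weakly DR directly. Fix $A \subseteq B \subseteq \cN$ and $u \in \cN \setminus B$. Since $f_T = 1 - f_{T,1} \cdot f_{T,2} \cdot f_{T,3}$, the marginal $f_T(u \mid S)$ equals the drop of the product $f_{T,1}(S) f_{T,2}(S) f_{T,3}(S)$ upon inserting $u$. The plan is to parameterize every set $S$ by the pair $(\tau_S, \sigma_S) := (t_T(S),\, |S| - t_T(S))$ and observe that each $f_{T,i}$ depends on $S$ only through $\tau_S$ or $\sigma_S$. Inspecting $t_T(S) = |S \setminus T| + \min\{g, |S \cap T|\}$ and the identity $|S| - t_T(S) = [|S \cap T| - g]^+$ shows that adding a fresh element $u$ to $S$ increments \emph{exactly one} of $\tau_S, \sigma_S$ by one: $\tau_S$ grows when $u \notin T$ or $|S \cap T| < g$, and $\sigma_S$ grows when $u \in T$ and $|S \cap T| \geq g$.

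Writing $b := 1 - \gamma/(k-g)$, let $\psi(\tau)$ denote $f_{T,1} \cdot f_{T,2}$ viewed as a function of $\tau$, and $\phi(\sigma)$ denote $f_{T,3}$ viewed as a function of $\sigma$. The definitions of the three helper functions yield the finite differences
\begin{equation*}
\psi(\tau) - \psi(\tau+1) =
\begin{cases}
\gamma b^\tau/(k-g) & \tau < k, \\
b^k/(k-g) & k \leq \tau < 2k-g, \\
0 & \tau \geq 2k-g,
\end{cases}
\qquad
\phi(\sigma) - \phi(\sigma + 1) =
\begin{cases}
1/(k-g) & \sigma < k-g, \\
0 & \sigma \geq k-g,
\end{cases}
\end{equation*}
and both $\psi$ and $\phi$ are non-increasing.

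Since $|A \cap T| \leq |B \cap T|$, the configuration ``$\sigma$ grows at $A$ but $\tau$ grows at $B$'' cannot occur, leaving three cases. When $\tau$ grows at both $A$ and $B$, the marginal at $S$ factors as $\phi(\sigma_S) [\psi(\tau_S) - \psi(\tau_S+1)]$, so via $\phi(\sigma_A) \geq \phi(\sigma_B)$ the weak DR inequality reduces to $\psi(\tau_A) - \psi(\tau_A+1) \geq \gamma[\psi(\tau_B) - \psi(\tau_B+1)]$, which the plan dispatches in each of the nine subcases $(\tau_A, \tau_B) \in \{<k,\, [k, 2k-g),\, \geq 2k-g\}^2$ by plugging in the formulas above and using $\tau_A \leq \tau_B$, $b \leq 1$, and $\gamma \leq 1$. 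When $\sigma$ grows at both, the marginal factors as $\psi(\tau_S) [\phi(\sigma_S) - \phi(\sigma_S+1)]$, and the inequality collapses to $\psi(\tau_A) \geq \gamma \psi(\tau_B)$, which is immediate from monotonicity of $\psi$.

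The main obstacle is the mixed case: $u \in T$ with $|A \cap T| < g \leq |B \cap T|$, so $\tau$ grows at $A$ and $\sigma$ grows at $B$. Here $|A \cap T| < g$ forces $\sigma_A = 0$ and hence $\phi(\sigma_A) = 1$, so the required bound becomes $\psi(\tau_A) - \psi(\tau_A+1) \geq \gamma \psi(\tau_B)/(k-g)$ (vanishing right-hand side when $\sigma_B \geq k-g$). I would split according to where $\tau_A$ sits relative to $k$ and $2k-g$; the crucial feature is that when $\tau_A < k$ the left-hand side already carries an explicit factor of $\gamma$, which absorbs the $\gamma$ on the right-hand side and leaves the comparison $b^{\tau_A} \geq \psi(\tau_B)$, clear from $\tau_A \leq \tau_B$ and the piecewise form of $\psi$; when $k \leq \tau_A < 2k-g$ the uniform bound $\psi(\tau_B) \leq b^k$ makes the inequality hold with room to spare; and when $\tau_A \geq 2k-g$ both sides vanish. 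Handling this mixed case completes the verification that $f_T(u \mid A) \geq \gamma \cdot f_T(u \mid B)$.
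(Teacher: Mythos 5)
Your proposal is correct and follows essentially the same route as the paper: the paper likewise splits into the same three cases according to whether $t_T$ increases under $A$, $B$, both, or only $A$, and computes the marginal gain by factoring out the unchanged helper functions. The reparameterization by $(\tau_S, \sigma_S)$ with explicit finite differences of $\psi$ and $\phi$ is a nice organizational refinement that makes the sub-case bookkeeping in the mixed case more systematic, but the underlying decomposition and inequalities are the same as in the paper.
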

\begin{proof}
Consider arbitrary sets $A \subseteq B \subseteq \cN$, and fix an element $u \in \cN \setminus B$. We 
need to show that $f_T(u \mid A) \geq \gamma \cdot f_T(u \mid B)$, which we do by considering three 
cases.

The first case is when $t_T(A \cup \{u\}) = t_T(A) + 1$ and $t_T(B \cup \{u\}) = t_T(B) + 1$. Note that for every set 
$S$ for which $t_T(S \cup \{u\}) = t_T(S) + 1$ and $t_T(S) < k$ we have
\begin{align} \label{eq:increasing_smaller_k}
	f_T(u \mid S)
	={} &
	f_{T,1}(S) \cdot f_{T,2}(S) \cdot f_{T,3}(S) - f_{T,1}(S \cup \{u\}) \cdot f_{T,2}(S \cup \{u\}) \cdot f_{T,3}(S \cup \{u\}) \nonumber\\ 
	={} &
	f_{T,1}(S) \cdot f_{T,3}(S) - \left(1 - \frac{\gamma}{k - g}\right) \cdot f_{T,1}(S) \cdot f_{T,3}(S) \\
	\nonumber
	={} &
	\frac{\gamma}{ k - g} \cdot f_{T,1}(S) \cdot f_{T,3}(S)
	\enspace,
\end{align}
and for every set $S$ for which $t_T(S + u) = t_T(S) + 1$ and $t_T(S) \geq k$ we have
\begin{align} \label{eq:increasing_larger_k}
	f_T(u \mid {}&S)
	=
	f_{T,1}(S) \cdot f_{T,2}(S) \cdot f_{T,3}(S) - f_{T,1}(S \cup \{u\}) \cdot f_{T,2}(S \cup \{u\}) \cdot f_{T,3}(S \cup \{u\})\nonumber\\ 
	={} &
	f_{T,1}(S) \cdot f_{T,2}(S) \cdot f_{T,3}(S) \nonumber \\ 
	&\quad - f_{T,1}(S) \cdot \left[f_{T,2}(S) - \frac{\min\{[(k - g) - 
	(t_T(S) - k)]^+, 1\}}{k - g} \right] \cdot f_{T,3}(S)\\ \nonumber
	={} &
	\frac{\min\{[(k - g) - (t_T(S) - k)]^+, 1\}}{k - g} \cdot f_{T,1}(S) \cdot f_{T,3}(S)
	\leq
	\frac{1}{k - g} \cdot f_{T,1}(S) \cdot f_{T,3}(S)
	\enspace,
\end{align}
where the last inequality holds since $f_{T,1}$ and $f_{T,3}$ are non-negative by 
Lemma~\ref{lem:non_negative}. Since $f_1$ and $f_3$ are monotonically decreasing functions (by 
Lemma~\ref{lem:monotonicity}), the above inequalities show $f_T(u \mid A) \geq \gamma \cdot f_T(u 
\mid B)$ whenever $t_T(A) < k$---if $t_T(B) < k$, then the inequalities in fact show $f_T(u \mid A) \geq 
f_T(u \mid B)$, but this implies $f_T(u \mid A) \geq \gamma \cdot f_T(u \mid B)$ because $f_T$ is 
monotone and $\gamma \in (0, 1]$. It remains to consider the option $t(A) \geq k$. Note that when this 
happens, we also have $t_T(B) \geq k$ because $t_T(S)$ is a monotone function. Thus, $f_T(u \mid A) \geq 
f_T(u \mid B)$ because $f_{T,1}$, $f_{T,3}$ and $\min\{[(k - g) - (t_T(S) - k)]^+, 1\}$ are all non-negative 
monotonically decreasing functions, and like in the above, this implies $f_T(u \mid A) \geq \gamma \cdot 
f_T(u \mid B)$.

The second case we consider is when $t_T(A \cup \{u\}) = t_T(A)$. Note that in this case we also have $t_T(B 
\cup \{u\}) = t_T(B)$ because the equality $t_T(A \cup \{u\}) = t_T(A)$ implies $g = \min\{|A \cap T|, g\} \leq \min\{|B \cap T|, g\} 
\leq g$, which implies in its turn $\min\{|B \cap T|, g\} = g$. For every set $S$ for which $t_T(S \cup \{u\}) = t_T(S)$ and $u \not \in S$ we have
\begin{align} \label{eq:not_increasing}
	f_T(u \mid{}& S)
	=
	f_{T,1}(S) \cdot f_{T,2}(S) \cdot f_{T,3}(S) - f_{T,1}(S \cup \{u\}) \cdot f_{T,2}(S \cup \{u\}) \cdot f_{T,3}(S \cup \{u\}) \nonumber\\ 
	={} &
	f_{T,1}(S) \cdot f_{T,2}(S) \cdot f_{T,3}(S) \nonumber \\  
	& \quad - f_{T,1}(S) \cdot f_{T,2}(S) \cdot \left[f_{T,3}(S) - 
	\frac{\min\{[(k - g) - (|S| - t_T(S))]^+, 1\}}{k -g}\right]\\ \nonumber
	={} &
	f_{T,1}(S) \cdot f_{T,2}(S) \cdot \frac{\min\{[(k - g) - (|S| - t_T(S))]^+, 1\}}{k -g}
	\enspace.
\end{align}
Recall now that $f_{T,1}$ and $f_{T,3}$ are non-negative and monotonically decreasing functions by 
Lemmata~\ref{lem:non_negative} and~\ref{lem:monotonicity}. We additionally observe that the function 
$\min\{[(k - g) - (|S| - t_T(S)]^+, 1\}$ also has these properties because Lemma~\ref{lem:monotonicity} 
shows that $|S| - t(S)$ is monotone. Combining these facts, we get that the expression we obtained 
for $f(u \mid S)$ is a monotonically decreasing function of $S$. Thus, $f(u \mid A) \geq f(u \mid B)$, 
which implies $f(u \mid A) \geq \gamma \cdot f(u \mid B)$.

The last case we need to consider is the case that $t_T(A \cup \{u\}) = t_T(A) + 1$ and $t_T(B \cup \{u\}) = t_T(B)$. 
The fact that $t_T(B \cup \{u\}) = t_T(B)$ implies that $u \in T$, and therefore, the fact that $t(A \cup \{u\}) = t(A) + 
1$ implies that $|A \cap T| < g$ and $t_T(A) = |A|$, which induces in its turn $f_{T,3}(A) = 1$. There are 
now a few sub-cases to consider. If $t_T(A) < k$, then
\begin{align*}
f_T(u \mid A)
&=
\frac{\gamma}{k - g} \cdot f_{T,1}(A) \\
&\geq
\gamma \cdot f_{T,1}(B) \cdot f_{T,2}(B) \cdot \frac{\min\{[(k - g) - (|B| - t_T(B))]^+, 1\}}{k - g}
=
\gamma \cdot f_T(u \mid B)
\enspace,
\end{align*}
where the first equality holds by Equation~\eqref{eq:increasing_smaller_k}, the last equality holds 
by Equation~\eqref{eq:not_increasing}, and the 
inequality holds since $\min\{[(k - g) - (|B| - t_T(B))]^+, 1\} \in [0, 1]$, $f_{T, 1}(A) \geq f_{T, 1}(B) \geq 
0$ by Lemmata~\ref{lem:non_negative} and~\ref{lem:monotonicity} and $f_{T,2}(B) \in [0, 1]$ by 
Lemma~\ref{lem:non_negative} . The second sub-case we need to consider is when $k \leq t(A) \leq 2k 
- g - 1$. In this case
\begin{align*}
	f_T(u \mid A)
	={} &
	\frac{\min\{[(k - g) - (t_T(A) - k)]^+, 1\}}{k - g} \cdot f_{T,1}(A)
	=
	\frac{1}{k - g} \cdot f_{T,1}(A)\\
	\geq{} &
	\frac{\gamma}{k - g} \cdot f_{T,1}(B)\\
	\geq{} &
	\frac{\gamma \cdot \min\{[(k - g) - (|B| - t_T(B))]^+, 1\}}{k - g} \cdot f_{T,1}(B) \cdot f_{T,2}(B)
	=
	\gamma \cdot f_T(u \mid B)
	\enspace,
\end{align*}
where the first equality holds by Equation~\eqref{eq:increasing_larger_k} and the last equality 
holds by Equation~\eqref{eq:not_increasing}. The first inequality holds since $\gamma \in (0, 1]$ and 
$f_{T, 1}$ is non-negative and monotonicity decreasing, and the second inequality holds since 
$\min\{[(k - g) - (|B| - t_T(B))]^+, 1\}$ and $f_{T, 2}(B)$ are both values in the range $[0, 1]$ and 
$\gamma \cdot f_{T, 1}(B) / (k - g)$ is non-negative. The final sub-case we consider is the case in 
which $t_T(A) \geq 2k - g - 1$. Since $|T \cap A| < g$ (but $|T \cap B| \geq g$), in this 
sub-case we must have $t_T(B) \geq 2k - g$, which implies $f_{T,2}(B) = 0$, and thus,
\[
	f_T(u \mid B)
	=
	f_{T,1}(B) \cdot f_{T,2}(B) \cdot \frac{\min\{[(k - g) - (|B| - t_T(B))]^+, 1\}}{k -g}
	=
	0
	\leq
	f_T(u \mid A)
	\enspace,
\]
where the equality holds by Equation~\eqref{eq:not_increasing}, and the inequality follows from the 
monotonicity of $f_T$.
\end{proof}

This completes the proof of Property~\property{property:named_properties} of 
Proposition~\ref{prop:construction}. The next lemma proves 
Property~\property{property:upper_bound} of this proposition.
\begin{lemma}
$f_T(S) \leq 1$ for every set $S \subseteq \cN$, and the inequality holds as an equality for $S = T$ 
when the size of $T$ is exactly $k$.
\end{lemma}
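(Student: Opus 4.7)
The plan is to read off both claims directly from the definition $f_T(S) = 1 - f_{T,1}(S) \cdot f_{T,2}(S) \cdot f_{T,3}(S)$, using Lemma~\ref{lem:non_negative} and a short computation of the helper quantity $t_T(T)$.

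For the upper bound $f_T(S) \leq 1$, I would simply invoke Lemma~\ref{lem:non_negative}, which shows that each of $f_{T,1}(S)$, $f_{T,2}(S)$ and $f_{T,3}(S)$ takes values in $[0, 1]$. In particular, their product is non-negative, and subtracting a non-negative quantity from $1$ yields a number at most $1$.

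For the equality case with $|T| = k$, the plan is to exhibit one factor in the product that vanishes. Specifically, I would compute $t_T(T) = |T \setminus T| + \min\{g, |T \cap T|\} = 0 + \min\{g, k\} = g$, where the last equality uses Observation~\ref{obs:g_smaller_k} to guarantee $g \leq k - 2 \leq k$. Consequently $|T| - t_T(T) = k - g$, so
\[
f_{T,3}(T) = 1 - \frac{\min\{|T| - t_T(T),\, k - g\}}{k - g} = 1 - \frac{k - g}{k - g} = 0.
\]
Plugging this into the definition of $f_T$ gives $f_T(T) = 1 - f_{T,1}(T) \cdot f_{T,2}(T) \cdot 0 = 1$, matching the upper bound.

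There is no real obstacle here: both parts are essentially bookkeeping once the helper functions are in hand. The only mild care needed is to cite Observation~\ref{obs:g_smaller_k} in order to simplify $\min\{g, k\}$ to $g$; without that, the computation of $t_T(T)$ would be incomplete.
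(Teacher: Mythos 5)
Your proposal is correct and follows essentially the same route as the paper's proof: invoke Lemma~\ref{lem:non_negative} to bound the product by $[0,1]$ for the upper bound, then compute $t_T(T) = g$ via Observation~\ref{obs:g_smaller_k} to make $f_{T,3}(T)$ vanish. The only difference is that you spell out the $t_T(T)$ computation a bit more explicitly, which is fine.
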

\begin{proof}
Since $f_{T, 1}$, $f_{T, 2}$ and $f_{T, 3}$ all output only values within the range $[0, 1]$ by 
Lemma~\ref{lem:non_negative}, $f_T(S) = 1 - f_{T, 1}(S) \cdot f_{T, 2}(S) \cdot f_{T, 3}(S) \leq 1$. 
Additionally, since $g \leq k$ by Observation~\ref{obs:g_smaller_k}, $t_T(T) = g$ when $|T| = k$. 
Hence, for such $T$,
\[
	f_{T,3}(T)
	=
	1 - \frac{\min\{k - g, k - g\}}{k - g}
	=
	0
	\enspace,
\]
which implies, $f_T(T) = 1 - f_{T,1}(T) \cdot f_{T, 2}(T) \cdot f_{T,3}(T) = 1$.
\end{proof}

The next lemma proves Property~\property{property:empty_behavior} of 
Proposition~\ref{prop:construction}.
\begin{lemma}
$f_\varnothing(S) \leq 1 - e^{-\gamma} + 8\eps'$ for every set $S$ obeying $|S| \leq k$.
\end{lemma}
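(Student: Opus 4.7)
The plan is to first simplify $f_\varnothing(S)$ drastically, and then reduce the lemma to a purely analytic estimate comparing $(1-\gamma/(k-g))^k$ with $e^{-\gamma}$.

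First I would compute each of the three helper functions when $T=\varnothing$. Since $|S\cap\varnothing|=0$, the definition of $t_T$ gives $t_\varnothing(S)=|S|$. When $|S|\le k$, this immediately implies $[t_\varnothing(S)-k]^+=0$, so $f_{\varnothing,2}(S)=1$, and also $|S|-t_\varnothing(S)=0$, so $f_{\varnothing,3}(S)=1$. Plugging these into the definition of $f_T$,
\[
f_\varnothing(S) \;=\; 1-f_{\varnothing,1}(S) \;=\; 1-\Bigl(1-\tfrac{\gamma}{k-g}\Bigr)^{|S|}.
\]
Because the right-hand side is monotonically increasing in $|S|$, it suffices to prove the bound at $|S|=k$; that is, I only need
\[
\Bigl(1-\tfrac{\gamma}{k-g}\Bigr)^{k} \;\ge\; e^{-\gamma}-8\eps'.
\]

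The second step is the analytic estimate. I would set $y=\gamma/(k-g)$ and bound $k\ln(1-y)$ from below. Using the power series bound $-\ln(1-y)\le y+y^2$ valid for $y\le 1/2$, together with Observation~\ref{obs:g_smaller_k} (which gives $g\le 3\eps' k$, hence $k-g\ge (1-3\eps')k\ge k/2$) and the hypothesis $k\ge 1/\eps'$, I can show $y\le 2/k\le 2\eps'$ and
\[
ky \;=\; \frac{\gamma k}{k-g} \;\le\; \gamma\cdot\frac{1}{1-3\eps'} \;\le\; \gamma+O(\eps'),
\qquad
ky^{2}\;\le\;\frac{4}{k}\;\le\;4\eps'.
\]
These combine into $k\ln(1-y)\ge -\gamma-O(\eps')$, so $(1-y)^k\ge e^{-\gamma}e^{-O(\eps')}\ge e^{-\gamma}(1-O(\eps'))\ge e^{-\gamma}-O(\eps')$. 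Tracking the constants carefully (e.g.\ using $1/(1-3\eps')\le 1+6\eps'$ for $\eps'\le 1/6$ and $e^{-x}\ge 1-x$) should yield the bound with the claimed constant $8\eps'$.

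The main obstacle is purely quantitative: making the various linearizations tight enough to land on the constant $8\eps'$ rather than something a bit larger. The error terms come from three sources (the $y^2$ term in the Taylor expansion of $\ln(1-y)$, the factor $k/(k-g)$ being slightly larger than $1$, and the passage from $e^{-\gamma-\delta}$ to $e^{-\gamma}-\delta$), and I would need to balance them using both $g\le 2\eps' k+1$ from Observation~\ref{obs:g_smaller_k} and $k\ge 1/\eps'$. If the constant $8$ turns out to be slightly sharper than what this direct Taylor approach yields, one can instead split $k=(k-g)+g$ and use the standard inequality $(1-\gamma/(k-g))^{k-g}\ge e^{-\gamma}-\gamma^{2}/(k-g)$, then bound the leftover factor $(1-\gamma/(k-g))^{g}\ge 1-\gamma g/(k-g)$, which tends to give a cleaner accounting.
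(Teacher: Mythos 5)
Your reduction to the analytic estimate $(1-\gamma/(k-g))^k \ge e^{-\gamma}-8\eps'$ is identical to the paper's opening move, and your instinct at the end is correct: it is precisely the split $k=(k-g)+g$ (the paper uses the slightly coarser split at $k-3\eps'k$ after first replacing $g$ by $3\eps'k$ via Observation~\ref{obs:g_smaller_k}) that lands on the constant $8$. Your primary route through the Taylor bound $-\ln(1-y)\le y+y^2$ does \emph{not} quite get there under the careful accounting you promise: with $y=\gamma/(k-g)$ one has $ky\le\gamma+6\eps'$ and $ky^2\le 4\eps'$, so $k(-\ln(1-y))\le\gamma+10\eps'$ and hence $(1-y)^k\ge e^{-\gamma}-10\eps'$, a factor $10$ rather than $8$. (The slack is baked into the two bounds $g/(k-g)\le 6\eps'$ and $1/(k-g)\le 2\eps'$; tightening either under only the stated hypotheses $\eps'<1/6$, $k\ge 1/\eps'$ is not possible in the worst case.) By contrast, the split-exponent route gives $(e^{-\gamma}-\gamma^2/(k-g))(1-\gamma g/(k-g))\ge(e^{-\gamma}-2\eps')(1-6\eps')\ge e^{-\gamma}-8\eps'$, which is the paper's calculation. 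So: first approach correct idea but off by a constant; the backup approach you suggest is the right one and matches the paper. Note also that the downstream use (Property \property{property:empty_behavior} only demands $12\eps'$) would tolerate either constant, so the gap is harmless for the hardness theorem, but as a proof of the lemma \emph{as stated} you should lead with the split-exponent argument.
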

\begin{proof}
Consider an arbitrary set $S$ obeying $|S| \leq k$. Note that for such a set we have $t_\varnothing(S) 
= |S| \leq k$. Hence,
\[
	f_{\varnothing, 2}(S)
	=
	f_{\varnothing, 3}(S)
	=
	1 - \frac{\min\{0, k - g\}}{k - g}
	=
	1
	\enspace.
\]
Therefore,
\begin{align} 
	\nonumber
	f_\varnothing(S)
	&=
	1 - f_{\varnothing, 1}(S) \cdot f_{\varnothing, 2}(S) \cdot f_{\varnothing, 3}(S)
	=
	1 - f_{\varnothing, 1}(S) \\
	&=
	1 - \left(1 - \frac{\gamma}{k - g}\right)^{|S|}
	\leq
	1 - \left(1 - \frac{\gamma}{k - g}\right)^{k}
	\enspace. \label{eq:f_S_first}
\end{align}
To prove the lemma, we need to upper bound the rightmost side of the last inequality. Towards this 
goal, observe that
\begin{align} 
	\nonumber
	\left(1 - \frac{\gamma}{k - g}\right)^{k}
	&\geq
	\left(1 - \frac{\gamma}{k - 3\eps' k}\right)^{k} \\
	&=
	\left(1 - \frac{\gamma}{k - 3\eps' k}\right)^{k - 3\eps' k} \cdot \left(1 - \frac{\gamma}{k - 3\eps' 
	k}\right)^{3\eps' k}
	\enspace, \label{eq:f_S_second} 
\end{align}
where the first inequality holds since $g \leq 3\eps' k$ by Observation~\ref{obs:g_smaller_k}. Let us 
now lower bound the two factors in the product on the rightmost side. First,
\[
	\left(1 - \frac{\gamma}{k - 3\eps' k}\right)^{k - 3\eps' k}
	\geq
	e^{-\gamma}\left(1 - \frac{\gamma^2}{k - 3\eps' k}\right)
	\geq
	e^{-\gamma}\left(1 - 2\eps'\right)
	\enspace,
\]
where the first inequality holds since the assumptions of Proposition~\ref{prop:construction} imply $k 
- 3\eps'k \geq k/2 \geq 1$, and the second inequality holds since these assumptions include $k \geq 
1/\eps'$ and $\gamma \in (0, 1]$. Additionally,
\[
	\left(1 - \frac{\gamma}{k - 3\eps' k}\right)^{3\eps' k}
	\geq
	\left(1 - \frac{2}{k}\right)^{3\eps' k}
	\geq
	1 - \frac{2}{k} \cdot (3\eps' k)
	=
	1 - 6\eps'
	\enspace,
\]
where the first inequality holds again since $\gamma \in (0, 1]$ and $k - 3\eps' k \geq k/2$. Plugging 
the last two lower bounds into Inequality~\eqref{eq:f_S_second} and combining with 
Inequality~\eqref{eq:f_S_first}, we get
\[
	f_\varnothing(S)
	\leq
	1 - e^{-\gamma}(1 - 2\eps') \cdot \left(1 - 6\eps'\right)
	\leq
	1 - e^{-\gamma} \cdot \left(1 - 8\eps'\right)
	\leq
	1 - e^{-\gamma} + 8\eps'
	\enspace.
	\qedhere
\]
\end{proof}

To complete the proof of Proposition~\ref{prop:construction}, it remains to prove 
Property~\property{property:different_T_relation}, which is done by the next two observations.
\begin{observation}
If $|S \cap T| \leq g$, then $f_T(S) = f_\varnothing(S)$.
\end{observation}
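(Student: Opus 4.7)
The plan is to show that under the assumption $|S \cap T| \leq g$, the three helper functions $f_{T,1}$, $f_{T,2}$, $f_{T,3}$ take the same values as $f_{\varnothing,1}$, $f_{\varnothing,2}$, $f_{\varnothing,3}$ on $S$, which immediately gives $f_T(S) = f_\varnothing(S)$ by the definition $f_T = 1 - f_{T,1} \cdot f_{T,2} \cdot f_{T,3}$.

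The key observation is that each of $f_{T,1}(S)$, $f_{T,2}(S)$, $f_{T,3}(S)$ depends on $T$ only through the quantity $t_T(S)$ and $|S|$. So it suffices to show $t_T(S) = t_\varnothing(S)$ whenever $|S \cap T| \leq g$. First I would compute $t_\varnothing(S)$: since $S \cap \varnothing = \varnothing$, we have $t_\varnothing(S) = |S \setminus \varnothing| + \min\{g, 0\} = |S|$. Next, under the hypothesis $|S \cap T| \leq g$ we get $\min\{g, |S \cap T|\} = |S \cap T|$, so
\[
t_T(S) = |S \setminus T| + |S \cap T| = |S| = t_\varnothing(S).
\]

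From here, plugging $t_T(S) = t_\varnothing(S)$ into the definitions of the three helper functions yields $f_{T,i}(S) = f_{\varnothing,i}(S)$ for $i \in \{1,2,3\}$ (note that $f_{T,3}$ also depends on $|S| - t_T(S)$, but this quantity equals $0$ in both cases, so the equality is preserved). Consequently,
\[
f_T(S) = 1 - f_{T,1}(S) \cdot f_{T,2}(S) \cdot f_{T,3}(S) = 1 - f_{\varnothing,1}(S) \cdot f_{\varnothing,2}(S) \cdot f_{\varnothing,3}(S) = f_\varnothing(S).
\]

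There is no real obstacle here; the proof is a direct unwinding of the definition of $t_T$, whose purpose is precisely to make the helper functions insensitive to $T$ on sets that do not intersect $T$ in more than $g$ elements. The main thing to be careful about is to verify that each of the three helper functions really does depend on $T$ only through $t_T(S)$, which is immediate from their definitions.
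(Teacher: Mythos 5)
Your proof is correct and follows essentially the same route as the paper's: both observe that all $T$-dependence in $f_T$ passes through $t_T(S)$, and that the hypothesis $|S \cap T| \leq g$ forces $t_T(S) = |S| = t_\varnothing(S)$. You spell out the check for each helper function a bit more explicitly than the paper does, but the argument is the same.
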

\begin{proof}
The only place in the definition of $f_T(S)$ in which the set $T$ is used is in the definition of $t_T(S)$. 
Thus, $f_T(S) = f_{T'}(S)$ whenever $t_T(S) = t_{T'}(S)$. In particular, one can note that the condition 
$|S \cap T| \leq g$ implies $t_T(S) = |S| = t_\varnothing(S)$, and thus, $f_T$ and $f_\varnothing$ must 
agree on the set $S$.
\end{proof}

\begin{observation}
The equality $f_T(S) = 1$ holds for every set $S$ of size at least $3k - g$ and set $T$ of size at most 
$k$.
\end{observation}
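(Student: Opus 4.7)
The plan is to show that the second helper function vanishes, i.e., $f_{T,2}(S) = 0$, under the stated hypotheses. Once that is established, the definition $f_T(S) = 1 - f_{T,1}(S) \cdot f_{T,2}(S) \cdot f_{T,3}(S)$ immediately yields $f_T(S) = 1$, since the product collapses to zero. So the entire task reduces to verifying that the quantity $t_T(S)$ is large enough to force $\min\{[t_T(S) - k]^+, k - g\} = k - g$, i.e., that $t_T(S) \geq 2k - g$.

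The key algebraic identity I would use is
\[
t_T(S) = |S \setminus T| + \min\{g, |S \cap T|\} = |S| - [|S \cap T| - g]^+ \enspace.
\]
This rewriting is convenient because it isolates the only way $t_T(S)$ can fall below $|S|$, namely through a large intersection with $T$. First I would bound $[|S \cap T| - g]^+$ from above: since $|S \cap T| \leq |T| \leq k$ and since $g \leq k$ by Observation~\ref{obs:g_smaller_k}, we get
\[
[|S \cap T| - g]^+ \leq [k - g]^+ = k - g \enspace.
\]
Combining this with the hypothesis $|S| \geq 3k - g$ gives
\[
t_T(S) \geq |S| - (k - g) \geq (3k - g) - (k - g) = 2k \geq 2k - g \enspace.
\]

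Plugging $t_T(S) \geq 2k - g$ into the definition of $f_{T,2}$ yields $\min\{[t_T(S) - k]^+, k - g\} = k - g$, so $f_{T,2}(S) = 0$ as desired. I do not anticipate any real obstacle here; the only subtlety is remembering to invoke Observation~\ref{obs:g_smaller_k} to justify $g \leq k$ (which guarantees $k - g \geq 0$ so that the upper bound on $[|S \cap T| - g]^+$ is valid), and to use the hypothesis $|T| \leq k$ to bound $|S \cap T|$. The rest is a two-line unpacking of definitions.
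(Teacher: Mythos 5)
Your proof is correct and follows essentially the same route as the paper: both arguments reduce to showing $t_T(S) \geq 2k - g$ so that $f_{T,2}(S) = 0$, with the paper using the slightly more direct chain $t_T(S) \geq |S \setminus T| \geq |S| - |T| \geq (3k - g) - k = 2k - g$ while you derive the (in fact slightly stronger) bound $t_T(S) \geq 2k$ via the identity $t_T(S) = |S| - [|S \cap T| - g]^+$. The difference is purely cosmetic.
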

\begin{proof}
Note that $t_T(S) \geq |S \setminus T| \geq |S| - |T| \geq (3k - g) - k = 2k - g$. Thus,
\[
	f_{T, 2}(S)
	=
	1 - \frac{\min\{[t_T(S) - k]^+, k - g\}}{k - g}
	=
	1 - \frac{k - g}{k - g}
	=
	0
	\enspace,
\]
which implies $f_T(S) = 1 - f_{T,1}(S) \cdot f_{T,2}(S) \cdot f_{T,3}(S) = 1$.
\end{proof}

\section{Experiments} \label{sec:experiments}
To demonstrate the effectiveness of our proposed algorithms, we run experiments on two applications: 
Bayesian $A$-optimal design with costs and directed vertex cover with costs. The code was written using 
the \texttt{Julia} programming language, version 1.0.2. Experiments were run on a 2015 MacBook Pro 
with 3.1 GHz Intel Core i7 and 8 GB DDR3 SDRAM and the timing was reported using the 
\texttt{@timed} feature in \texttt{Julia}. 
The source code is available on a public GitHub repository.%
\footnote{\url{https://github.com/crharshaw/submodular-minus-linear}}

\subsection{Bayesian \texorpdfstring{$A$}{A}-Optimal Design} \label{sec:a_optimal}

We first describe the problem of Bayesian $A$-Optimal design. Suppose that $\theta \in \reals^d$ is an 
unknown 
parameter vector that we wish to estimate from noisy linear measurements using least squares 
regression. Our goal is to choose a set $S$ of linear measurements (the so-called experiments) which 
have low cost and also maximally reduce the variance of our estimate $\hat{\theta}$. More precisely, let 
$x_1, x_2, \dots, x_n \in \reals^d$ be a fixed set of measurement vectors, and let $X = [ x_1, x_2, \dots, 
x_n]$ be the corresponding $d \times n$ matrix. Given a set of measurement vectors 
$S \subseteq [n]$, we may run the experiments and obtain the noisy linear observations, 
\[
y_S = X_S^T \theta + \zeta_S \enspace,
\]
where $\zeta_S$ is normal i.i.d. noise, i.e., $\zeta_1, \dotsc, \zeta_n \sim N(0, \sigma^2)$. 
We estimate $\theta$ using the least squares estimator
$ \hat{\theta} = (X_S X_S^T)^{-1}X_S^T y_S $. Assuming a normal Bayesian prior distribution on the 
unknown parameter, $\theta \sim N(0, \Sigma)$, the sum of the variance of the coefficients given the 
measurement set $S$ is 
$r(S) = \trinv{ \Sigma^{-1} + \frac{1}{\sigma^2} X_S X_S^T } $.
We define $g(S) = r(\varnothing) - r(S)$ to be the \emph{reduction in variance} produced by experiment 
set~$S$.

Suppose that each experiment $x_i$ has an associated non-negative cost $c_i$. 
In this application, we seek to maximize the ``revenue'' of the experiment, 
\[
g(S) - c(S) = \tr{\Sigma} -  \trinv{ \Sigma^{-1} + \frac{1}{\sigma^2} X_S X_S^T } \mspace{-13mu} - c(S) 
\enspace,
\]
which trades off the utility of the experiments (i.e., the variance 
reduction in the estimator) and their overall cost.

\citet{Bian2017} showed that $g$ is $\gamma$-weakly submodular, providing a lower bound for 
$\gamma$ in the case where $\Sigma = \beta I$. However, their bound relies rather unfavorably on the 
spectral norm of $X$, and does not extend to general $\Sigma$. 
\citet{chamon2017} showed that $g$ satisfies the stronger condition of $\gamma$-weak DR 
 (Definition~\ref{def:weak_DR}), 
but their bound on the submodularity ratio $\gamma$ depends on the cardinality of the sets.
We give a tighter bound here which 
relies on the Matrix Inversion Lemma (also known as Woodbury Matrix 
Identity and Sherman-Morrison-Woodbury Formula).

\begin{lemma}[Woodbury] \label{lemma:woodbury}
	For matrices $A$, $C$, $U$, and $V$ of the appropriate sizes,	
	$$(A + UCV)^{-1} = A^{-1} - A^{-1} U( C^{-1} + V A^{-1} U)^{-1} V A^{-1} $$
	In particular, for a matrix $A$, a vector $x$, and a number $\alpha$, we have that
	$$ \left( A + \frac{1}{\alpha} x x^T \right)^{-1} = A^{-1} - \frac{A^{-1} x x^T A^{-1}}{ \alpha + x^T A^{-1} 
		x} \enspace.$$
\end{lemma}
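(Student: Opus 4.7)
The plan is to verify this identity by direct computation: I will multiply the claimed inverse on the left by $A + UCV$ and check that the product is the identity. Since matrix inverses are unique, this suffices. The special case then follows by specializing $U = x$, $V = x^T$, and $C = 1/\alpha$ (a $1 \times 1$ matrix), noting that $V A^{-1} U = x^T A^{-1} x$ is a scalar so that $C^{-1} + V A^{-1} U = \alpha + x^T A^{-1} x$ is also a scalar and its inverse is just the reciprocal.

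Concretely, expanding the product $(A + UCV)\bigl(A^{-1} - A^{-1}U(C^{-1} + VA^{-1}U)^{-1}VA^{-1}\bigr)$ gives four terms:
\[
I \;-\; U(C^{-1} + VA^{-1}U)^{-1}VA^{-1} \;+\; UCVA^{-1} \;-\; UCVA^{-1}U(C^{-1} + VA^{-1}U)^{-1}VA^{-1}.
\]
The key algebraic manoeuvre is to factor $U$ on the left and $(C^{-1} + VA^{-1}U)^{-1}VA^{-1}$ on the right from the first and fourth terms, producing
\[
-U\bigl[I + CVA^{-1}U\bigr](C^{-1} + VA^{-1}U)^{-1}VA^{-1} \;=\; -UC\bigl[C^{-1} + VA^{-1}U\bigr](C^{-1} + VA^{-1}U)^{-1}VA^{-1} \;=\; -UCVA^{-1},
\]
which cancels the remaining $UCVA^{-1}$ term, leaving $I$ as desired.

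There is no real obstacle here beyond carrying out the bookkeeping carefully; the only subtle point is that one needs $C$ and $C^{-1} + VA^{-1}U$ to be invertible in order for both sides of the stated identity to even be well-defined, which the lemma implicitly assumes as part of the ``appropriate sizes'' clause. Once the general identity is verified, the particular case is an immediate substitution and requires no further work.
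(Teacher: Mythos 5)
Your verification is correct, and it is the standard textbook proof of the Woodbury identity: multiply the candidate inverse by $A + UCV$, expand, and observe that the cross terms collapse after factoring out $UC$ and $(C^{-1}+VA^{-1}U)^{-1}VA^{-1}$. The specialization to $U=x$, $V=x^T$, $C=1/\alpha$ (so that $C^{-1}+VA^{-1}U$ becomes the scalar $\alpha + x^TA^{-1}x$) is also handled correctly. Note, however, that the paper itself gives no proof of this lemma---it states the identity as a well-known fact (the Woodbury / Sherman--Morrison--Woodbury formula) and simply applies it in the proof of Claim~\ref{claim:a_opt_gamma_lb}---so there is no proof in the paper to compare against; your argument supplies one that any reader would accept. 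One small slip in the prose: when you group the terms sharing the common left factor $U$ and right factor $(C^{-1}+VA^{-1}U)^{-1}VA^{-1}$, those are the \emph{second} and fourth terms of the displayed four-term expansion (the first term is $I$), not the ``first and fourth''; the displayed algebra that follows is nonetheless correct and yields $-UCVA^{-1}$, which cancels the remaining $+UCVA^{-1}$ to leave $I$.
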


\begin{claim} \label{claim:a_opt_gamma_lb}
	$g$ is a non-negative, monotone and $\gamma$-weakly submodular function with 
	$$\gamma \geq \left(1 + \frac{s^2}{\sigma^2} \lambda_{\max}(\Sigma) \right)^{-1} \enspace,$$
	where $s = \max_{i \in [n]} \| x_i \|_2$.
\end{claim}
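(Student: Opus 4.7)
The plan is to first handle non-negativity and monotonicity via Loewner order arguments, then compute a closed form for the marginal gains using the Woodbury identity (Lemma~\ref{lemma:woodbury}), and finally reduce the $\gamma$-weak submodularity inequality to a matrix trace comparison whose ratio I can bound by $(1+\tfrac{s^2}{\sigma^2}\lambda_{\max}(\Sigma))^{-1}$.

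For the easy parts, let $M_S \triangleq \Sigma^{-1} + \tfrac{1}{\sigma^2} X_S X_S^T$, so that $g(S) = \mathrm{Tr}(\Sigma) - \mathrm{Tr}(M_S^{-1})$. Since adding an element $e$ to $S$ adds the positive semidefinite matrix $\tfrac{1}{\sigma^2} x_e x_e^T$ to $M_S$, we have $M_A \preceq M_B$ whenever $A \subseteq B$, hence $M_A^{-1} \succeq M_B^{-1}$, so $\mathrm{Tr}(M_A^{-1}) \geq \mathrm{Tr}(M_B^{-1})$. This gives monotonicity of $g$, and non-negativity follows from $g(\varnothing) = 0$.

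For the weak submodularity bound, I would first use Lemma~\ref{lemma:woodbury} with the rank-one update $M_{S \cup \{e\}} = M_S + \tfrac{1}{\sigma^2} x_e x_e^T$ to compute
\[
g(e \mid S) \;=\; \mathrm{Tr}(M_S^{-1}) - \mathrm{Tr}(M_{S \cup \{e\}}^{-1}) \;=\; \frac{x_e^\top M_S^{-2} x_e}{\sigma^2 + x_e^\top M_S^{-1} x_e}\enspace.
\]
Then, letting $V$ be the matrix whose columns are $\{x_e : e \in B \setminus A\}$, I would apply Woodbury again with the block update $M_{A \cup B} = M_A + \tfrac{1}{\sigma^2} V V^\top$ to obtain
\[
g(A \cup B) - g(A) \;=\; \mathrm{Tr}\!\left( V^\top M_A^{-2} V \cdot (\sigma^2 I + V^\top M_A^{-1} V)^{-1} \right)\enspace.
\]
Introducing the two positive semidefinite matrices $P \triangleq V^\top M_A^{-2} V$ and $Q \triangleq V^\top M_A^{-1} V$, the inequality I need becomes
\[
\sum_{e \in B \setminus A} \frac{P_{ee}}{\sigma^2 + Q_{ee}} \;\geq\; \gamma \cdot \mathrm{Tr}\!\left( P (\sigma^2 I + Q)^{-1} \right)\enspace.
\]

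The main (though still routine) obstacle is to compare these two trace-like quantities. My approach is to bound each side separately. For the left-hand side, replace each denominator by the maximum: $\mathrm{LHS} \geq \mathrm{Tr}(P)/(\sigma^2 + \max_e Q_{ee})$. For the right-hand side, use $(\sigma^2 I + Q)^{-1} \preceq \sigma^{-2} I$ together with $P \succeq 0$ to get $\mathrm{RHS} \leq \mathrm{Tr}(P)/\sigma^2$. Dividing yields
\[
\gamma \;\geq\; \frac{\sigma^2}{\sigma^2 + \max_e Q_{ee}}\enspace.
\]
Finally, because $M_A \succeq M_\varnothing = \Sigma^{-1}$ implies $M_A^{-1} \preceq \Sigma$, I get the uniform bound $Q_{ee} = x_e^\top M_A^{-1} x_e \leq \|x_e\|_2^2 \lambda_{\max}(M_A^{-1}) \leq s^2 \lambda_{\max}(\Sigma)$. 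Substituting gives exactly $\gamma \geq (1 + \tfrac{s^2}{\sigma^2}\lambda_{\max}(\Sigma))^{-1}$, completing the proof.
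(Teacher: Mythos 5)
Your proposal is correct and follows essentially the same route as the paper: apply the Woodbury identity to get the closed-form marginal gain $g(e\mid A) = \tfrac{x_e^\top M_A^{-2}x_e}{\sigma^2+x_e^\top M_A^{-1}x_e}$ and the block formula for $g(A\cup B)-g(A)$, then bound the numerator of the submodularity ratio from below by replacing the denominators uniformly by $\sigma^2 + s^2\lambda_{\max}(\Sigma)$ (using $M_A^{-1}\preceq\Sigma$), and bound the denominator from above via $(\sigma^2 I + V^\top M_A^{-1}V)^{-1}\preceq\sigma^{-2}I$. Your $P,Q$ notation is a cosmetic repackaging of the paper's $X_B^\top M_A^{-2}X_B$ and $X_B^\top M_A^{-1}X_B$, and your Loewner-order argument for monotonicity/non-negativity is a mild streamlining of the paper's direct computation from the rank-one marginal-gain formula.
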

\begin{proof}
	Recall that 
	\[
	g(S) = \tr{\Sigma} -  \trinv{ \Sigma^{-1} + \frac{1}{\sigma^2} X_S X_S^T } \enspace.
	\]
	Let $A, B \subseteq \ground$, and suppose without loss of generality that $A$ and $B$ are disjoint. 
	Using Lemma~\ref{lemma:woodbury}, we show how to obtain a formula for $g(B \cup A) - g(A)$.
	Let us denote ${M_A = \Sigma^{-1} + \frac{1}{\sigma^2} X_A X_A^T}$. Using linearity and cyclic property of 
	trace, we obtain
	\begin{align*}
	g&(B \cup A) - g(A) \\
	&= \trinv{\Sigma^{-1} + \frac{1}{\sigma^2} X_{A} X_{A}^T } 
	- \trinv{ \Sigma^{-1} + \frac{1}{\sigma^2} X_{B \cup A} X_{B \cup A}^T }  \\
	&= \trinv{\Sigma^{-1} + \frac{1}{\sigma^2} X_{A} X_{A}^T } 
	- \trinv{ \Sigma^{-1} + \frac{1}{\sigma^2} X_{A} X_{A}^T + \frac{1}{\sigma^2} X_B X_B^T }  \\
	&= \trinv{M_A } 
	- \trinv{ M_A + \frac{1}{\sigma^2} X_B X_B^T}  \\
	&= \trinv{M_A} - 
	\tr{ M_A^{-1} - M_A^{-1} X_B \left( \sigma^2 I + X_B^T M_A^{-1} X_B \right)^{-1} X_B^T M_A^{-1}}
	&\text{(Lemma~\ref{lemma:woodbury})}\\
	&= \tr{M_A^{-1} X_B \left( \sigma^2 I + X_B^T M_A^{-1}  X_B \right)^{-1} X_B^T M_A^{-1}}
	\\
	&= \tr{\left( \sigma^2 I + X_B^T M_A^{-1}  X_B \right)^{-1} X_B^T M_A^{-2} X_B} \enspace,
	\end{align*}
	where the identity matrix is of size $|B|$. From this formula, we can easily derive the marginal gain of 
	a single element. In this case,  $B = \{e\}$ and  $X_B = x_e$, so the marginal gain is given by
	\begin{equation} \label{eq:a_opt_marginal_gain}
	g(e \mid A) = \frac{x_e^T M_A^{-2} x_e}{\sigma^2 + x_e^T M_A^{-1} x_e} \enspace.
	\end{equation}
	Note that $\Sigma^{-1} \preceq M_A$ (where $\preceq$ denotes the usual semidefinite ordering), 
	and thus, $M_A$ is positive definite. Hence, $M_A^{-1}$ and $M_A^{-2}$ are also positive definite, 
	which 
	means that their quadratic forms are non-negative. In particular, $x_e^T M_A^{-2} x_e \geq 0$ and 
	$x_e^T M_A^{-1} x_e \geq 0$, which implies $g(e \mid A) \geq 0$. Also note that $g(\varnothing)  = 
	0$. Combining this equality with the previous inequality, we get that $g$ is non-negative and 
	monotonically increasing. 
	
	Now we seek to show the lower bound on $\gamma$. Again, let $A, B \subseteq \ground$, and 
	assume 
	without loss of generality that $A$ and $B$ are disjoint. We seek to lower bound the ratio
	\begin{equation} \label{eq:sub_ratio}
	\frac{\sum_{e \in B} g(e \mid A) }{g(B \cup A) - g(A)} \enspace.
	\end{equation}
	Let $s = \max_{e \in \ground} \| x_e \|_2$. Observe that
	\begin{equation} \label{eq:denom_lb}
	\sigma^2 + x_e^T M_A^{-1} x_e 
	= \sigma^2 + \|x_e\|^2 \left( \frac{x_e^T M_A^{-1} x_e }{\|x_e\|^2} \right)
	\leq \sigma^2 + s^2 \lambda_{\max}\left( M_A^{-1} \right)
	= \sigma^2 + s^2 \lambda_{\max} \left( \Sigma \right) 
	\enspace,
	\end{equation}
	where the first inequality follows from the Courant-Fischer theorem, i.e., the variational 
	characterization of eigenvalues. 
	The second inequality is derived as follows: $M_A = \Sigma^{-1} + \frac{1}{\sigma} X_A X_A^T$ and 
	so $M_A \succeq \Sigma^{-1}$. This means that $M_A^{-1} \preceq \Sigma$. Thus, 
	$\lambda_{max}(M_A^{-1}) \leq \lambda_{max} \left( \Sigma \right)$.
	Using this, we may obtain a lower bound on the numerator in \eqref{eq:sub_ratio}.
	\begin{align*}
	\sum_{e \in B} g(e \mid A) 
	&= \sum_{e \in B} \frac{x_e^T M_A^{-2} x_e}{\sigma^2 + x_e^T M_A^{-1} x_e} 
	&\text{(by \eqref{eq:a_opt_marginal_gain})}\\
	&=\sum_{e \in B} \frac{ \tr{x_e x_e^T M_A^{-2}}}{\sigma^2 + x_e^T M_A^{-1} x_e} 
	&\text{(cyclic property of trace)}\\
	&\geq \frac{1}{\sigma^2 + s^2 \lambda_{\min} \left( M_A \right)} 
	\sum_{e \in B} \tr{x_e x_e^T M_A^{-2}}
	&\text{(by \eqref{eq:denom_lb})}\\
	&= \frac{\tr{X_B X_B^T M_A^{-2}}}{\sigma^2 + s^2 \lambda_{\min} \left( M_A \right)} 
	&\text{(linearity of trace)} \\
	&= \frac{\tr{X_B^T M_A^{-2} X_B}}{\sigma^2 + s^2 \lambda_{\min} \left( M_A \right)} \enspace.
	&\text{(cyclic property of trace)}
	\end{align*}
	Now, we will bound the denominator of \eqref{eq:sub_ratio}. We have already shown that
	\[
	g(B \cup A) - g(A) 
	= \tr{\left( \sigma^2 I + X_B^T M_A^{-1}  X_B \right)^{-1} X_B^T M_A^{-2} X_B} \enspace.
	\]
	Additionally, we have shown that $M_A^{-1}$ is positive semidefinite, and thus, $ X_B^T M_A^{-1}  
	X_B $ is also positive semidefinite. Hence,
	$\sigma^2 I \preceq \sigma^2 I + X_B^T M_A^{-1}  X_B$.
	This implies that $\left( \sigma^2 I + X_B^T M_A^{-1}  X_B \right)^{-1} \preceq \left( \sigma^2 I 
	\right)^{-1} = \frac{1}{\sigma^2} I$.
	Finally, we have shown that $M_A^{-2}$ is positive semidefinite, and therefore, we have that $ X_B^T 
	M_A^{-2}  
	X_B $ is also positive semidefinite. Thus,
	\[
	g(B \cup A) - g(A) 
	= \tr{\left( \sigma^2 I + X_B^T M_A^{-1}  X_B \right)^{-1} X_B^T M_A^{-2} X_B} 
	\leq \frac{1}{\sigma^2} \tr{ X_B^T M_A^{-2} X_B} \enspace.
	\]
	Applying these bound on $\sum_{e \in B} g(e \mid A)$ and $g(A \cup B) - g(A)$, we obtain
	\[
	\frac{\sum_{e \in B} g(e \mid A) }{g(B \cup A) - g(A)} 
	\geq \left( \frac{\sigma^2}{ \sigma^2 + s^2 \lambda_{max} (\Sigma)} \right) \frac{\tr{ X_B^T M_A^{-2} 
			X_B} 
	}{\tr{ X_B^T 
			M_A^{-2} X_B} }
	= \left( 1 + \frac{s^2}{\sigma^2} \lambda_{max} (\Sigma) \right)^{-1} \enspace.
	\qedhere
	\]
\end{proof}

Unlike submodular functions, lazy evaluations 
\citep{Minoux1978} of 
$\gamma$-weakly submodular $g$ are generally not possible, as the marginal gains vary unpredictably.
However, for specific functions, one can possibly speed up the greedy search.
For the utility $g$ considered here, we implemented a faster greedy search using 
the matrix 
inversion lemma.
The naive approach of computing $g(e \mid S)$ by constructing $\Sigma^{-1} + X_S X_S^T$, explicitly 
computing its inverse, and summing the diagonal elements is not only expensive{\textemdash}inversion 
alone costs 
$O(d^3)$ arithmetic operations{\textemdash}but also memory-inefficient. 
Instead, \eqref{eq:a_opt_marginal_gain} shows that 
\[ g(e \mid S) = \frac{\| z_e \|^2 }{\sigma^2 + \langle x_e, z_e \rangle} \enspace, \]
where $z_e = M_S^{-1} x_e$ and $M_S = \Sigma^{-1} + X_S X_S^T$.
In fact, $M_S^{-1}$ may be stored and 
updated directly in each iteration using the matrix inversion lemma so that \emph{no matrix inversion 
	are required}. Note that $M_{\varnothing}^{-1} = 
\Sigma$, which is an input parameter. By the matrix inversion lemma, 
\[
M_{S \cup e}^{-1} = M_S^{-1} - \frac{M_S^{-1} x_e x_e^T M_S^{-1}}{\sigma^2 + x_e^T M_S^{-1} x_e} 
\enspace,
\]
which takes $O(d^2)$ arithmetic operations.
Once $M_S^{-1}$ is known explicitly, computing $g(e \mid S)$ is simply matrix-vector multiplication on 
a fixed matrix.
We found that this greatly improved the efficiency of our code.

For this experiment, we used the Boston Housing dataset \citep{Harrison78}, a standard benchmark 
dataset containing $d=14$ attributes of $n=506$ Boston homes, including average number of rooms 
per dwelling, proximity to the Charles River, and crime rate per capita. 
We preprocessed the data by normalizing the features to have a zero mean and a standard deviation 
of $1$.
As there is no specified cost per measurement, we assigned costs proportionally to initial marginal 
gains in utility;
that is, $c_e = \alpha g(e)$ for some $\alpha \in [0,1]$.
We set $\sigma = 1/\sqrt{d}$, and
randomly generated a normal prior with covariance $\Sigma = A D A^T$, where $A$ is randomly 
chosen as $A_{i,j} \sim N(0,1)$ and $D$ is diagonal with $D_{i,i} = (i/d)^2$.
We choose not to use $\Sigma = \beta I$, as we found this causes $g$ to be nearly modular along 
solution paths, yielding it an easy problem instance for all algorithms and not a suitable benchmark.

\begin{figure*}[ht]
	\centering
	\hspace{0.21in} 
	\subfloat[]{\includegraphics[width=0.48 \textwidth]{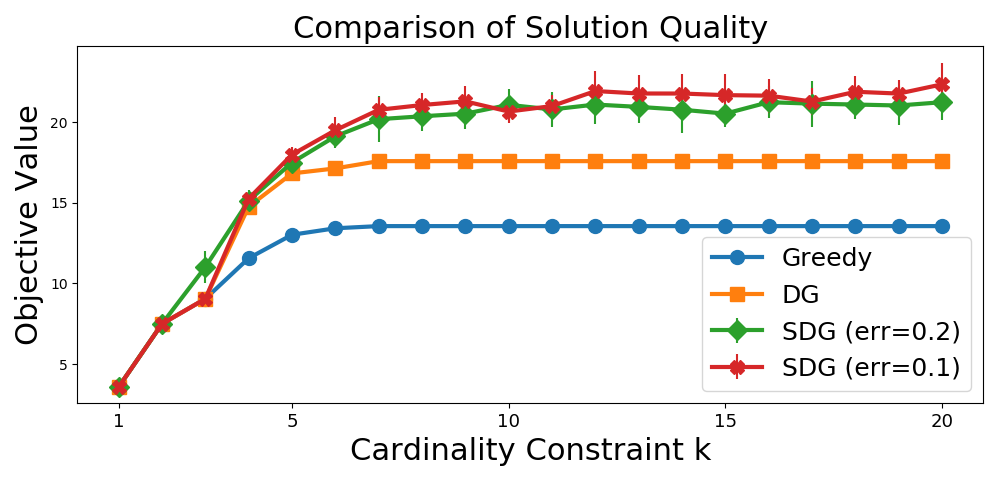}
		\label{fig:a_opt_vary_k}}
	\subfloat[]{\includegraphics[width=0.48 \textwidth]{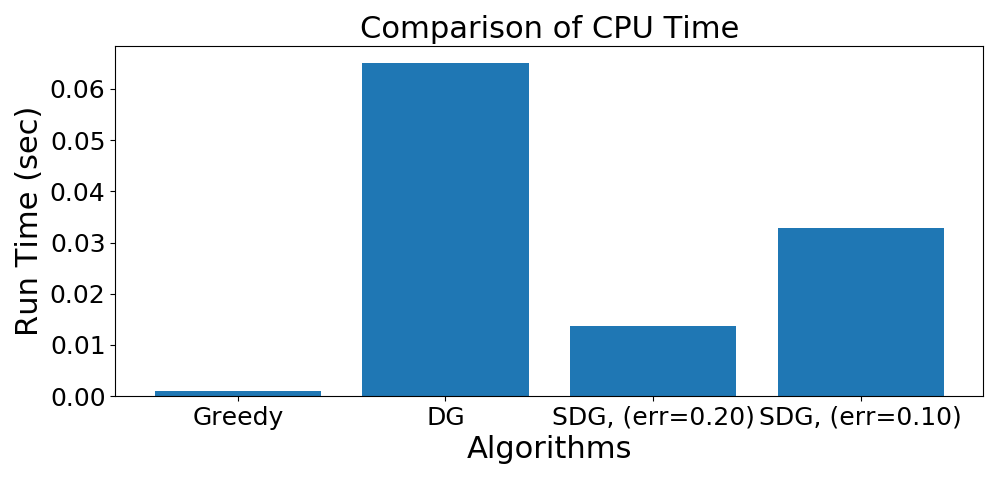}
		\label{fig:a_opt_timing}} \\
	\subfloat[]{\includegraphics[width=0.48 \textwidth]{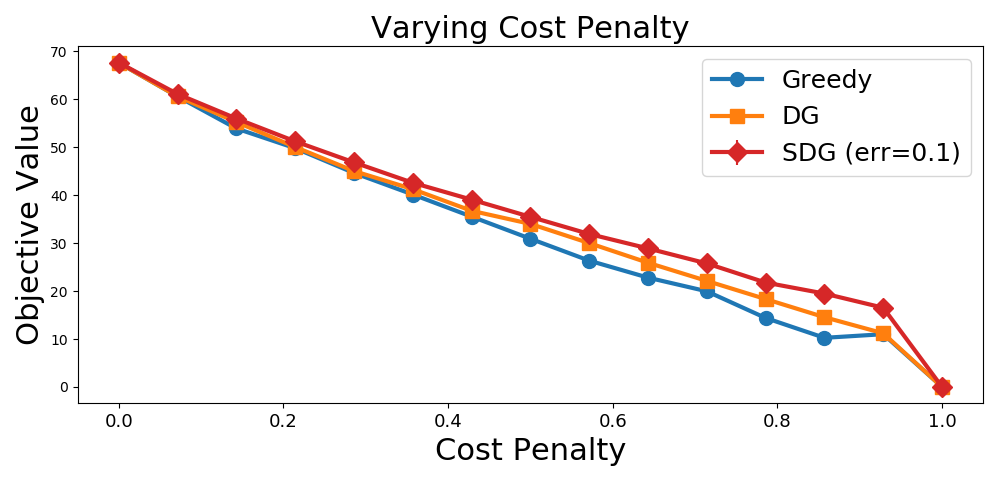}
		\label{fig:a_opt_vary_cost}}
	\subfloat[]{\includegraphics[width=0.48 \textwidth]{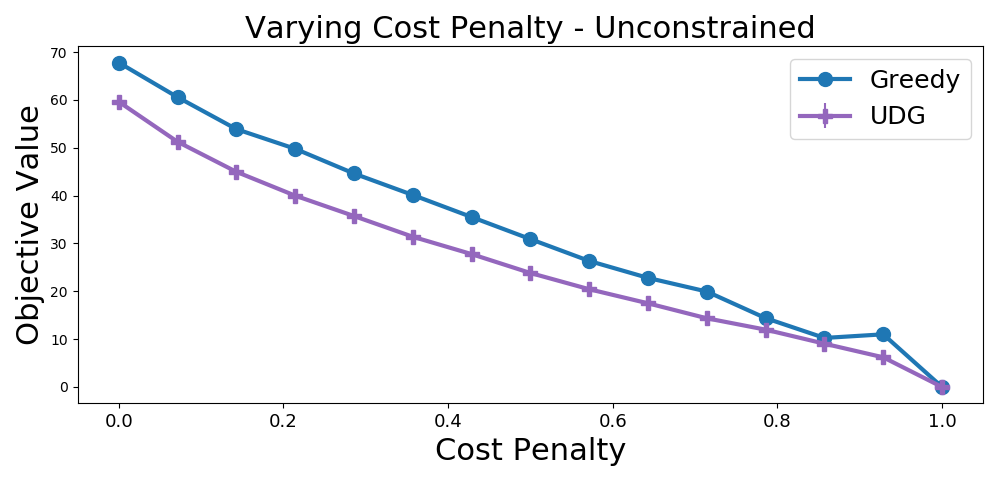}
		\label{fig:a_opt_vary_cost_unconstrained}} \\
	\caption{An algorithmic performance comparison for Bayesian $A$-Optimal design 
		on 
		the Boston Housing dataset.
		We report values for stochastic algorithms with mean and standard 
		deviation bars, over 20 trials.
		\eqref{fig:a_opt_vary_k} objective values, varying the 
		cardinality $k$, for a fixed cost penalty $\alpha = 0.8$.
		\eqref{fig:a_opt_timing}  runtime for a fixed cardinality $k=15$.
		\eqref{fig:a_opt_vary_cost} objective values, varying the cost penalty $\alpha$ 
		for 
		a fixed cardinality $k=15$.
		\eqref{fig:a_opt_vary_cost_unconstrained} objective values, varying the cost 
		penalty $\alpha$ in an unconstrained setting.
	}
	\label{fig:a_opt_results}
	\vspace{-0.1in}
\end{figure*}

In our first experiment, we fixed the cost penalty $\alpha = 0.8$, and ran the algorithms for varying 
cardinality constraints from $k=1$ to $k=15$. We ran the greedy algorithm, \dgreedy with \gsweep 
(setting 
$\delta = 0.1$), and two instances of \sdgreedy with \gsweep (with $\delta = \epsilon= 0.1$ 
and $\delta = \epsilon = 0.05$).
All \gsweep runs used $L=0$.
Figure~\ref{fig:a_opt_vary_k} compares the objective value of the sets returned by each of these  
algorithms. 
One can observe that the marginal gain obtained by the greedy algorithm is not non-increasing (at 
least for the first few elements), which is a result of the 
fact that $g$ is weakly submodular with $\gamma < 1$.
For small values of $k$, all algorithms produce comparable solutions; however, the greedy 
algorithm gets stuck in a local maximum of size $k=7$, while our algorithms are able to produce larger 
solutions with higher objective value. 
Moreover, \gsweep with \sdgreedy performs better than \gsweep with \dgreedy for larger values of $k$, 
for reasons discussed in Section~\ref{sec:gamma_sweep}.
Figure~\ref{fig:a_opt_timing} shows CPU times of each algorithm run with the single cardinality 
constraint  
$k=20$.
We see that the greedy algorithm runs faster than our algorithms. This difference in the runtime is a 
result of both the added complexity of the \gsweep procedure, 
and that greedy terminates early, when a local maximum is reached.
Figure~\ref{fig:a_opt_timing} also shows that the sub-sampling step in \sdgreedy results in a faster 
runtime than \dgreedy, as predicted by the theory.
We did not display the number of function evaluations, as it exhibits nearly identical trends to the 
actual CPU run time.
In our next experiment, we fixed the cardinality $k=15$ and varied the cost penalty $\alpha \in [0,1]$.
Figure~\ref{fig:a_opt_vary_cost} shows that all algorithm return similar solutions for $\alpha=0$ and 
$\alpha=1$, which are the cases in which either $c=0$ or the function $g - c$ is non-positive, respectively.
For all other values of $\alpha$, our algorithms yield improvements over greedy.
In our final experiment, we varied the cost penalty $\alpha \in [0,1]$, comparing the output of greedy 
and \gsweep with \udgreedy for the unconstrained setting.
Figure~\ref{fig:a_opt_vary_cost_unconstrained} shows that greedy outperforms our algorithm in this 
instance, which can occur, especially in the absence of ``bad elements'' of the kind discussed in 
Section~\ref{sec:algorithms}.

\subsection{Directed Vertex Cover with Costs}

The second experiment is directed vertex cover with costs. Let $G = (V, E)$ be a directed graph and 
let $w\colon V \rightarrow \reals$ be a weight function on the vertices. For a vertex set $S \subseteq V$, let 
$N(S)$ denote the set of vertices which are pointed to by $S$, $N(S) \triangleq \left\{ v \in V \mid (u,v) 
\in E \text{ for some } 
u \in S 
\right\}$. The weighted directed vertex cover function is 
$ g(S) = \sum_{u \in N(S) \cup S}w_u$.
We also assume that each vertex $v \in V$ has an associated nonnegative cost $c_v$. Our goal is to 
maximize the resulting revenue,
\[g(S) - c(S) = \sum_{u \in N(S) \cup S} \mspace{-18mu} w_u - \sum_{u \in S} c_u \enspace. \]
Because $g$ is submodular, we can forgo the \gsweep routine and apply our algorithms directly with
$\gamma=1$. 
Moreover, we implement lazy evaluations of $g$ in our code. 

For our experiments, we use the EU Email Core network, a directed graph generated using email data 
from a large European research institution \citep{yin2017, Leskovec2007}. 
The graph has 1k nodes and 25k directed edges, where nodes represent people and a directed edge 
from $u$ to $v$ means that an email was sent from $u$ to $v$.
We assign each node a weight of $1$.
Additionally, as there are no costs in the dataset, we assign costs in the following manner.
For a fixed $q$, we set $c(v) = 1 + \max\{d(v) - q, 0\}$, where $d(v)$ is the out-degree of $v$.
In this way, all vertices with out-degree larger than $q$ have the same initial marginal gain $g(v) - c(v)= 
q$.

\begin{figure}[htb!]
	\centering
	\hspace{0.21in} 
	\subfloat[]{\includegraphics[width = 0.48 \textwidth]{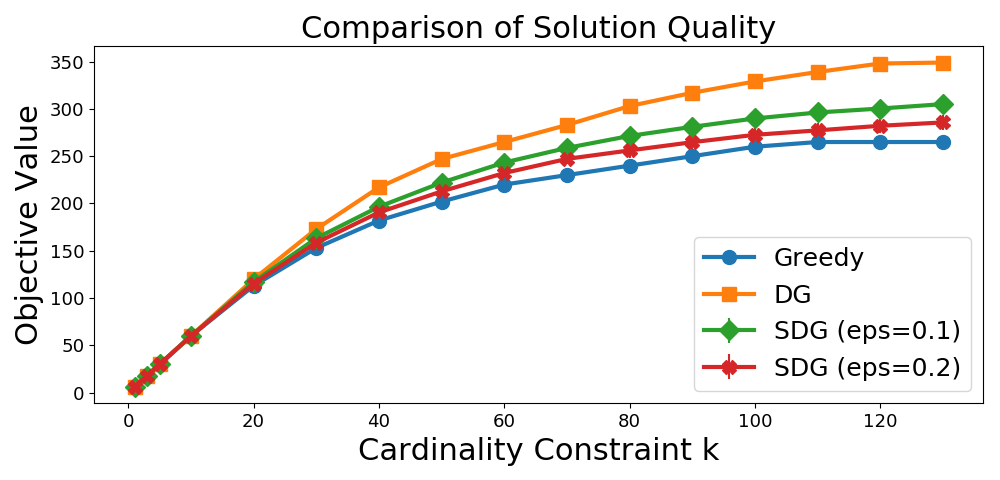}
		\label{fig:vc_fun_val}}
	\subfloat[]{\includegraphics[width = 0.48 \textwidth ]{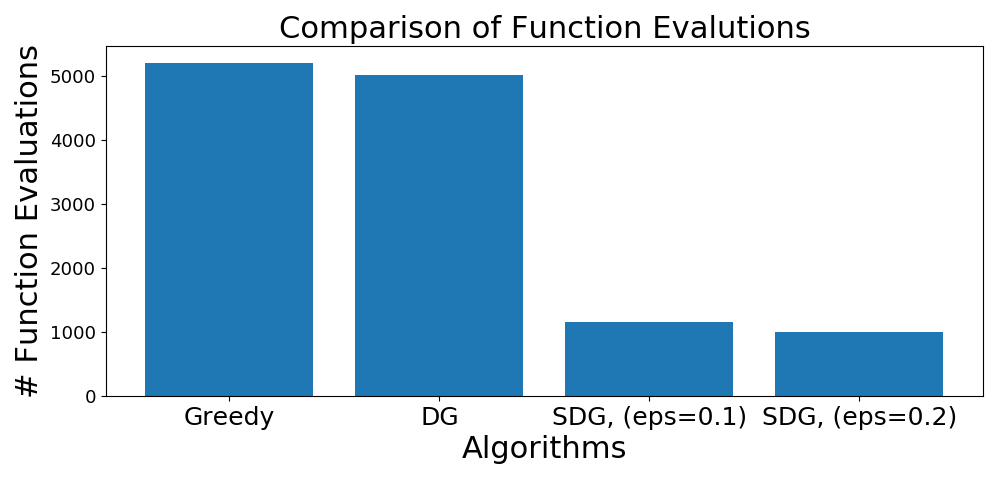}
		\label{fig:vc_fun_evals}} \\
	\subfloat[]{\includegraphics[width = 0.55 \textwidth]{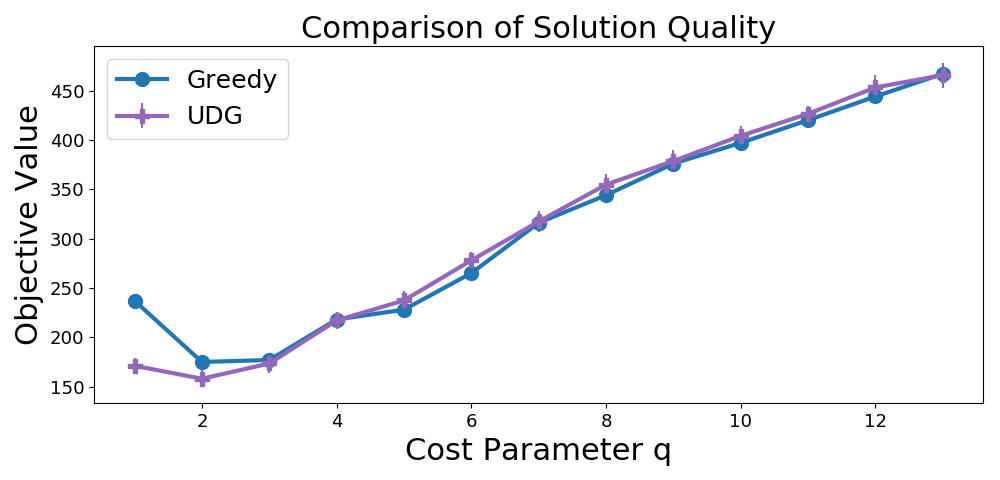}
		\label{fig:vc_unconstrained}}
	\caption{
		A performance comparison for directed vertex cover
		on the EU Email Core network.
		We report values for stochastic algorithms with mean and standard 
		deviation bars, over 20 trials.
		\eqref{fig:vc_fun_val} objective values, varying the 
		cardinality $k$, for a fixed cost factor $q = 6$.
		\eqref{fig:vc_fun_evals}  $g$ evaluations for a fixed cardinality $k=130$.
		\eqref{fig:vc_unconstrained} objective values, varying the cost factor $q$ in an unconstrained 
		setting.
	}
	\label{fig:vc_results}
	\vspace{-0.2in}
\end{figure}

In our first experiment, we fixed the cost factor $q=6$, and ran the algorithms for varying cardinality 
constraints from $k=1$ to $k=130$.
We see in Figure~\ref{fig:vc_fun_val} that our methods outperform greedy. 
\dgreedy achieves the highest objective value for each cardinality constraint, while \sdgreedy achieves 
higher objective values as the accuracy parameter $\epsilon$ is decreased.
Figure~\ref{fig:vc_fun_evals} shows the number of function evaluations made by the algorithms when 
$k=130$.
We observe that \sdgreedy requires much fewer function evaluations, even when lazy evaluations are 
implemented.%
\footnote{We do not report the CPU time for this experiment, as its behavior is somewhat different 
than the behavior of the number of function 
evaluations. This is an artifact of the implementation of the data structure we use to store the lazy evaluations.}
Finally, we ran greedy and \udgreedy while varying the cost factor $q$ from 1 to 12, and we note that in 
this setting (as can be seen in Figure~\ref{fig:vc_unconstrained}) our algorithm performs similarly to 
the greedy algorithm.

\section{Conclusion} \label{sec:conclusion}
We presented a suite of fast algorithms for maximizing the difference between a non-negative 
monotone $\gamma$-weakly submodular $g$ and a non-negative modular $c$ 
in both the cardinality constrained and unconstrained settings.
Moreover, we gave a matching hardness result showing that no algorithm can do better with only 
polynomially many oracle queries to $g$.
Finally, we experimentally validated our algorithms on Bayesian $A$-Optimality with costs and directed 
vertex cover with costs, and demonstrated that they outperform the greedy heuristic.


\bibliographystyle{plainnat}
\bibliography{icml2019}

\begin{thebibliography}{33}
\providecommand{\natexlab}[1]{#1}
\providecommand{\url}[1]{\texttt{#1}}
\expandafter\ifx\csname urlstyle\endcsname\relax
  \providecommand{\doi}[1]{doi: #1}\else
  \providecommand{\doi}{doi: \begingroup \urlstyle{rm}\Url}\fi

\bibitem[Bian et~al.(2017)Bian, Buhmann, Krause, and Tschiatschek]{Bian2017}
Andrew~An Bian, Joachim~M. Buhmann, Andreas Krause, and Sebastian Tschiatschek.
\newblock Guarantees for greedy maximization of non-submodular functions with
  applications.
\newblock In \emph{Proceedings of the 34th International Conference on Machine
  Learning}, volume~70, pages 498--507, 2017.

\bibitem[Boykov et~al.(2001)Boykov, Veksler, and Zabih]{Boykov2001}
Y.~Boykov, O.~Veksler, and R.~Zabih.
\newblock Fast approximate energy minimization via graph cuts.
\newblock \emph{IEEE Transactions on Pattern Analysis and Machine
  Intelligence}, 23\penalty0 (11):\penalty0 1222--1239, 2001.

\bibitem[Buchbinder and Feldman(2016)]{Buchbinder2016}
Niv Buchbinder and Moran Feldman.
\newblock Constrained submodular maximization via a non-symmetric technique.
\newblock \emph{CoRR}, abs/1611.03253, 2016.

\bibitem[Chamon and Ribeiro(2017)]{chamon2017}
Luiz F.~O. Chamon and Alejandro Ribeiro.
\newblock Approximate supermodularity bounds for experimental design.
\newblock In \emph{Advances in Neural Information Processing Systems}, 2017.

\bibitem[Chekuri et~al.(2014)Chekuri, Vondr{\'{a}}k, and
  Zenklusen]{Chekuri2014}
Chandra Chekuri, Jan Vondr{\'{a}}k, and Rico Zenklusen.
\newblock Submodular function maximization via the multilinear relaxation and
  contention resolution schemes.
\newblock \emph{{SIAM} J. Comput.}, 43\penalty0 (6):\penalty0 1831--1879, 2014.

\bibitem[Chv\'{a}tal(1979)]{C79}
V.~Chv\'{a}tal.
\newblock The tail of the hypergeometric distribution.
\newblock \emph{Discrete Mathematics}, 25\penalty0 (3):\penalty0 285--287,
  1979.

\bibitem[Das and Kempe(2011)]{das2011submodular}
Abhimanyu Das and David Kempe.
\newblock {Submodular meets Spectral: Greedy Algorithms for Subset Selection,
  Sparse Approximation and Dictionary Selection}.
\newblock In \emph{{International Conference on Machine Learning}}, pages
  1057--1064, 2011.

\bibitem[Elenberg et~al.(2017)Elenberg, Dimakis, Feldman, and
  Karbasi]{elenberg2017}
Ethan~R. Elenberg, Alexandros~G. Dimakis, Moran Feldman, and Amin Karbasi.
\newblock Streaming weak submodularity: Interpreting neural networks on the
  fly.
\newblock In \emph{Advances in Neural Information Processing Systems}, 2017.

\bibitem[Elenberg et~al.(2018)Elenberg, Khanna, Dimakis, and
  Negahban]{elenberg2018}
Ethan~R. Elenberg, Rajiv Khanna, Alexandros~G. Dimakis, and Sahand Negahban.
\newblock Restricted strong convexity implies weak submodularity.
\newblock \emph{Annals of Statistics}, 46, 2018.

\bibitem[Ene and Nguyen(2016)]{Ene2016}
Alina Ene and Huy~L. Nguyen.
\newblock Constrained submodular maximization: Beyond 1/e.
\newblock In \emph{FOCS}, pages 248--257, 2016.

\bibitem[Feldman(2019)]{feldman2019}
Moran Feldman.
\newblock Guess free maximization of submodular and linear sums, 2019.
\newblock To appear in WADS 2019.

\bibitem[Feldman et~al.(2017)Feldman, Harshaw, and Karbasi]{Feldman2017}
Moran Feldman, Christopher Harshaw, and Amin Karbasi.
\newblock Greed is good: Near-optimal submodular maximization via greedy
  optimization.
\newblock In \emph{COLT}, pages 758--784, 2017.

\bibitem[Golovin and Krause(2011)]{golovin11}
Daniel Golovin and Andreas Krause.
\newblock Adaptive submodularity: Theory and applications in active learning
  and stochastic optimization.
\newblock \emph{Journal of Artificial Intelligence Research}, 42:\penalty0
  427--486, 2011.

\bibitem[Hoeffding(1963)]{H63}
Wassily Hoeffding.
\newblock Probability inequalities for sums of bounded random variables.
\newblock \emph{Journal of the American Statistical Association}, 1963.

\bibitem[Hu et~al.(2016)Hu, Grubb, Bagnell, and Hebert]{Hu2016EfficientFG}
Hanzhang Hu, Alexander Grubb, J.~Andrew Bagnell, and Martial Hebert.
\newblock Efficient feature group sequencing for anytime linear prediction.
\newblock In \emph{Proceedings of the Thirty-Second Conference on Uncertainty
  in Artificial Intelligence}, 2016.

\bibitem[Jegelka and Bilmes(2011)]{jegelka2011submodularity}
Stefanie Jegelka and Jeff Bilmes.
\newblock {Submodularity beyond submodular energies: coupling edges in graph
  cuts}.
\newblock In \emph{Computer Vision and Pattern Recognition (CVPR)}, pages
  1897--1904. IEEE, 2011.

\bibitem[Jr. and Rubenfield(1978)]{Harrison78}
David~Harrison Jr. and Daniel~L Rubenfield.
\newblock Hedonic housing prices and the demand for clean air.
\newblock \emph{J. of Environmental Economics and Management}, 5\penalty0
  (1):\penalty0 81--102, 1978.

\bibitem[Kempe et~al.(2003)Kempe, Kleinberg, and Tardos]{kempe03}
David Kempe, Jon Kleinberg, and \'{E}va Tardos.
\newblock Maximizing the spread of influence through a social network.
\newblock In \emph{Proceedings of the ninth ACM SIGKDD international conference
  on Knowledge discovery and data mining}, pages 137--146. ACM, 2003.

\bibitem[Khanna et~al.(2017)Khanna, Elenberg, Dimakis, Negahban, and
  Ghosh]{khanna2017scalable}
Rajiv Khanna, Ethan~R. Elenberg, Alexandros~G. Dimakis, Sahand Negahban, and
  Joydeep Ghosh.
\newblock {Scalable Greedy Feature Selection via Weak Submodularity}.
\newblock In \emph{Proceedings of the 20th International Conference on
  Artificial Intelligence and Statistics ({AISTATS})}, pages 1560--1568, 2017.

\bibitem[Krause and Guestrin(2005)]{krause05near}
A.~Krause and C.~Guestrin.
\newblock {Near-optimal Nonmyopic Value of Information in Graphical Models}.
\newblock In \emph{Uncertainty in Artificial Intelligence (UAI)}, pages
  324--331, 2005.

\bibitem[Kuhnle et~al.(2018)Kuhnle, Smith, Crawford, and Thai]{KSCT18}
Alan Kuhnle, J.~David Smith, Victoria~G. Crawford, and My~T. Thai.
\newblock Fast maximization of non-submodular, monotonic functions on the
  integer lattice.
\newblock In \emph{ICML}, pages 2791--2800, 2018.

\bibitem[Lee et~al.(2010)Lee, Sviridenko, and Vondr{\'{a}}k]{Lee2010}
Jon Lee, Maxim Sviridenko, and Jan Vondr{\'{a}}k.
\newblock Submodular maximization over multiple matroids via generalized
  exchange properties.
\newblock \emph{Math. Oper. Res.}, 35\penalty0 (4):\penalty0 795--806, 2010.

\bibitem[Leskovec et~al.(2007)Leskovec, Kleinberg, and Faloutsos]{Leskovec2007}
Jure Leskovec, Jon Kleinberg, and Christos Faloutsos.
\newblock Graph evolution: Densification and shrinking diameters.
\newblock \emph{ACM Trans. Knowl. Discov. Data}, 1\penalty0 (1), 2007.

\bibitem[Lin and Bilmes(2011)]{lin2011class}
Hui Lin and Jeff Bilmes.
\newblock A class of submodular functions for document summarization.
\newblock In \emph{Proceedings of the 49th Annual Meeting of the Association
  for Computational Linguistics: Human Language Technologies-Volume 1}, pages
  510--520. Association for Computational Linguistics, 2011.

\bibitem[Minoux(1978)]{Minoux1978}
Michel Minoux.
\newblock Accelerated greedy algorithms for maximizing submodular set
  functions.
\newblock In \emph{Optimization Techniques}, pages 234--243, 1978.

\bibitem[Mirzasoleiman et~al.(2015)Mirzasoleiman, Badanidiyuru, Karbasi,
  Vondr{\'{a}}k, and Krause]{Mirzasoleiman2015}
Baharan Mirzasoleiman, Ashwinkumar Badanidiyuru, Amin Karbasi, Jan
  Vondr{\'{a}}k, and Andreas Krause.
\newblock Lazier than lazy greedy.
\newblock In \emph{Proceedings of the Twenty-Ninth {AAAI} Conference on
  Artificial Intelligence}, pages 1812--1818, 2015.

\bibitem[Nemhauser and Wolsey(1978)]{Nemhauser1978}
G~L Nemhauser and L~A Wolsey.
\newblock Best algorithms for approximating the maximum of a submodular set
  function.
\newblock \emph{Mathematics of Operations Research}, 3\penalty0 (3):\penalty0
  177--188, 1978.

\bibitem[Nemhauser et~al.(1978)Nemhauser, Wolsey, and Fisher]{Nemhauser1978a}
G~L Nemhauser, L~A Wolsey, and M~L Fisher.
\newblock An analysis of approximations for maximizing submodular set
  functions--{I}.
\newblock \emph{Mathematical Programming}, 14\penalty0 (1):\penalty0 265--294,
  1978.

\bibitem[Skala(2013)]{S13}
Matthew Skala.
\newblock Hypergeometric tail inequalities: ending the insanity.
\newblock \emph{CoRR}, abs/1311.5939, 2013.

\bibitem[Sviridenko(2004)]{Sviridenko04}
Maxim Sviridenko.
\newblock A note on maximizing a submodular set function subject to a knapsack
  constraint.
\newblock \emph{Oper. Res. Lett.}, 32\penalty0 (1):\penalty0 41--43, 2004.

\bibitem[Sviridenko et~al.(2017)Sviridenko, Vondr{\'{a}}k, and
  Ward]{Sviridenko2017}
Maxim Sviridenko, Jan Vondr{\'{a}}k, and Justin Ward.
\newblock Optimal approximation for submodular and supermodular optimization
  with bounded curvature.
\newblock \emph{Math. Oper. Res.}, 42\penalty0 (4):\penalty0 1197--1218, 2017.

\bibitem[Wei et~al.(2013)Wei, Liu, Kirchhoff, and Bilmes]{wei2013speech}
Kai Wei, Yuzong Liu, Katrin Kirchhoff, and Jeff Bilmes.
\newblock {Using Document Summarization Techniques for Speech Data Subset
  Selection}.
\newblock In \emph{Proceedings of NAACL-HLT 2013}, page 721–726, 2013.

\bibitem[Yin et~al.(2017)Yin, Benson, Leskovec, and Gleich]{yin2017}
Hao Yin, Austin~R. Benson, Jure Leskovec, and David~F. Gleich.
\newblock Local higher-order graph clustering.
\newblock In \emph{Proceedings of the 23rd ACM SIGKDD International Conference
  on Knowledge Discovery and Data Mining}. ACM, 2017.

\end{thebibliography}

\newpage 
\appendix
 
\section{Greedy Performs Arbitrarily Poorly} \label{sec:greedy_performs_poorly}

In this section, we describe an instance of Problem~\eqref{eq:main_problem} where the greedy algorithm 
performs arbitrarily poorly. More specifically, the greedy algorithm does not achieve any constant 
factor approximation.
Let $G$ be a graph with $n$ vertices and let $b \in V$ be a ``bad vertex''. The graph $G$ includes a single directed edge $(b, e)$ for every vertex $e \in V \setminus \{b\}$, and no other edges (i.e., $G$ is a directed star with $b$ in the center). Let $g$ be the 
unweighted directed vertex cover function. Note that
\begin{displaymath}
g(\{e\}) = \left\{
\begin{array}{lr}
1 &\text{ if } e \neq b \enspace,\\
n  &\text{ if } e = b \enspace.
\end{array}
\right.
\end{displaymath} 
Fix some $\epsilon > 0$, and let us define the nonnegative 
costs coefficients as
\begin{displaymath}
c_e = \left\{
\begin{array}{lr}
1/2 &\text{ if } e \neq b \enspace,\\
n - (1/2 + \epsilon) &\text{ if } e = b \enspace.
\end{array}
\right.
\end{displaymath} 
The initial marginal gain of a vertex $e$ is now given by 
\begin{displaymath}
g(\{e\}) - c_e = \left\{
\begin{array}{lr}
1/2 &\text{ if } e \neq b \enspace,\\
1/2 + \epsilon &\text{ if } e = b \enspace.
\end{array}
\right.
\end{displaymath} 
Thus, the greedy algorithm chooses the ``bad element'' $b \in V$ in the first iteration. Note that after 
$b$ is chosen, the greedy algorithm terminates, as $g(e \mid \{b\}) = 0$ and $c_e > 0$ for all remaining 
vertices $e$. However, for any set $S$ of vertices which does not contain $b$, we have that
\[ g(S) - c(S) = |S| - \frac{1}{2}|S| = \frac{1}{2} |S| \enspace. \]
Thus, for any $k < n$, the competitive ratio of greedy subject to a $k$ cardinality 
constraint is at most
\[\frac{1/2 + \epsilon}{k/2} = \frac{1+2\epsilon}{k} =  O \left( \frac{1}{k} \right) \enspace . \]

\end{document}